\newcommand{\rd}{{\rm d}}
\newcommand{\veca}{{\mathbf a}}
\newcommand{\vecr}{{\mathbf r}}
\newcommand{\vecs}{{\mathbf s}}
\newcommand{\vecu}{{\mathbf u}}
\newcommand{\vecc}{{\mathbf c}}
\newcommand{\vece}{{\mathbf e}}
\newcommand{\vecx}{{\mathbf x}}
\newcommand{\vecX}{{\mathbf X}}
\newcommand{\vecz}{{\mathbf z}}
\newcommand{\vecw}{{\mathbf w}}
\newcommand{\vecW}{{\mathbf W}}
\newcommand{\vecD}{{\mathbf D}}
\newcommand{\EE}{{\mathbb E}}
\newtheorem{theorem}{Theorem}[section]
\newtheorem{corollary}[theorem]{Corollary}
\newtheorem{lemma}[theorem]{Lemma}
\newtheorem{proposition}[theorem]{Proposition}
\begin{document}
\title{Average entropy of Gaussian mixtures}
\author{Basheer Joudeh and Boris \v{S}kori\'{c}}
\date{}
\maketitle
\begin{abstract}
We calculate the average differential entropy of a $q$-component Gaussian mixture in $\mathbb R^n$. For simplicity, all components have covariance matrix $\sigma^2 {\mathbf 1}$, while the means $\{\mathbf{W}_i\}_{i=1}^{q}$ are i.i.d. Gaussian vectors with zero mean and covariance $s^2 {\mathbf 1}$. We obtain a series expansion in $\mu=s^2/\sigma^2$ for the average differential entropy up to order $\mathcal{O}(\mu^2)$, and we provide a recipe to calculate higher-order terms. Our result provides an analytic approximation with a quantifiable order of magnitude for the error, which is not achieved in previous literature.
\end{abstract}
\section{Introduction}
\subsection{Gaussian Mixtures}
{A Gaussian} mixture probability density on $\mathbb R^n$ is a function of the following form: \begin{equation}f(x)=\sum_{i=1}^q p_i g_{w_i,K_i}(x),\end{equation}
where $q$ is some integer, $p_i$ are probabilities, and $g_{w,K}$ stands for the Gaussian distribution
$g_{w,K}(x) = (2\pi)^{-\frac n2}(\det K)^{-\frac12} \exp [-\frac12(x-w)^T K^{-1}(x-w) ]$,
with $x,w\in\mathbb R^n$. In words, the density $f(x)$ is built as a weighted average of $q$ different $n$-dimensional Gaussian distributions. A mixture like this occurs when one stochastic process, with probability mass function $p_1,\ldots,p_q$, determines which distribution is chosen for~$x$.
{Gaussian mixtures are widely used in various areas for their simplicity, as well as their wide range of applicability by virtue of being global (smooth) function approximators. Most recently, in astrophysics, Gaussian mixture models were used to study the kinematics of dark matter \cite{ASTROREF1}, as well as predicting the spectrum of quasars \cite{ASTROREF2}. Furthermore, Gaussian mixtures are also widely used in the intersection between statistical physics and machine learning, i.e., diffusion models \cite{DIFFREF1,DIFFREF2}. Diffusion models became popular for their wide range of applications, including generative tasks and image restoration, for example, see \cite{DIFFREF3,DIFFREF4} for Gaussian mixture diffusion models. In wireless communications, Gaussian mixtures were recently utilized to estimate the channel trajectories of moving mobile terminals \cite{WIREF1}, while in cybernetic security, they played a role in defending against the Byzantine attack \cite{CYBREF1}, and earlier for authentication in wireless networks \cite{CYBREF2}. In bioinformatics, they are used as an alternative to clustering algorithms for analyzing gene expression data \cite{BIOREF1,BIOREF2}. They also find application in thermodynamics, e.g., a computationally efficient Monte Carlo method was devised in \cite{THERMREF1} to study the properties of nonadiabatic systems.
In \cite{MIXREF}, many of the above-mentioned disciplines are combined, where the authors evaluate probabilities in deep generative models.}
\subsection{Related Work}
\label{pastworksec}
Analytically computing or estimating the differential entropy of a Gaussian mixture is a difficult problem.
The differential entropy of a continuous variable $X\in\cal X$, $X$$\sim$$f$ is defined as $h(X)=-\int_{\cal X} f(x)\ln f(x) {\rm d}x$ \cite{cover}.
In the special case of a single Gaussian component ($q=1$), the integral can be computed analytically, yielding
$\frac n2\ln 2\pi e +\frac12\ln \det K$.
In \cite{q2}, the authors study the case of $q=2$, $n=1$, and $w_1=-w_2$, where a numerical approximation for this case is obtained. It is a known fact that given the second moment of a random variable, the distribution with maximum entropy is the Gaussian distribution~\cite{cover}. That is, we can easily obtain a loose upper bound for the differential entropy of a Gaussian mixture given by \cite{bound2}:
\begin{equation}
\label{boundstup}
h(X) \leq \dfrac{n}{2}\ln 2\pi e+\dfrac{1}{2} \ln \mathrm{det} (\Sigma),
\end{equation}
where $\Sigma$ is given by
\begin{equation}
\Sigma=\sum_{i=1}^{q} p_i (w_i w_i^{\mathrm{T}}+K_i)-\left(\sum_{i=1}^{q}p_i w_i\right)\left(\sum_{j=1}^{q}p_j w_j^{\mathrm{T}}\right).
\end{equation}
A sequence of maximum entropy upper bounds for the differential entropy was obtained in \cite{bound},
most notably the Laplacian upper bound, which is tighter than the Gaussian upper abound in some cases.
However, it has no closed-form expression.
Another way of approximating the differential entropy is by replacing the density inside the logarithm by
a single Gaussian $\bar f$ with covariance and mean identical to the mixture.
This leads to an exact expression in terms of the relative entropy,
\begin{equation}
h(X)=\dfrac{n}{2}\ln 2\pi e+\dfrac{1}{2} \ln \mathrm{det} (\Sigma)-D(f||\bar{f}).
\end{equation}
One can then find approximations to the relative entropy, as in \cite{relat,relat2}.
Although this method is efficient, it also does not a have closed-form expression and only provides an upper bound to $h(X)$. Other Monte Carlo sampling methods give guaranteed convergence.
However, they become computationally demanding as the number of samples grows. An approximation for $h(X)$ was obtained in \cite{approx} by performing a Taylor series expansion of the logarithm term.
In order to avoid the need to include higher-order terms, a splitting of Gaussian components belonging to the density outside the logarithm is applied.
The idea is to split components with high variance and replace them with Gaussian mixtures with components of lower variance.
This is because the higher the variance of the components the higher the number of terms needed in the Taylor expansion to achieve a certain level of accuracy.
Of course, this process is not exact and produces an error depending on how many Gaussian components are included in the split. The splitting method is not directly comparable with our work, as we do not make a distinction between variances of the different components in the mixture; instead, our expansion parameter is how spread the components of the mixture are relative to their shared variance value. However, one can obtain a basic upper bound on $h(X)$, as shown in \cite{approx}:
\begin{equation}
\label{betterbound}
h(X)\leq  \sum_{i=1}^{q} p_i \left(-\ln p_i+\dfrac{1}{2}\ln (2\pi e)^n \mathrm{det}(K_i)\right).
\end{equation}
As mentioned in \cite{approx}, the upper bound \eqref{betterbound} is significantly tighter than the Gaussian bound~\eqref{boundstup};
it is exact in the case of a single Gaussian;
and it is arbitrarily close to the real value of $h(X)$ when the support shared between the components of the mixture is negligible.
A refinement of this bound was also introduced by means of merging different clusters in the mixture.
However, we are concerned with the average over all possible mixtures,
and therefore, this family of bounds is less comparable with our result. %Using Jensen's inequality for $\ln x$, a lower bound can be obtained as well \cite{approx}:
%\begin{equation}
%h(X) \geq -\sum_{i=1}^{q} p_i \ln \sum_{j=1}^{q}p_j \int g_{w_i, K_i}(x)g_{w_j, K_j}(x)\mathrm{d}x.
%\end{equation}
In \cite{dist}, the differential entropy is estimated for mixture distributions using pair-wise distances between the components of the mixture.
It is shown that the estimator is a lower bound when the Chernoff $\alpha$-divergence is used%\footnote{In some cases as when the mixture components have the same covariance matrix, which is the case in our work, $\alpha=0.5$ (Bhattacharyya-distance) yields the tightest lower bound \cite{dist}.}
, and an upper bound when the Kullback--Leibler divergence is used.
%In the case of a Gaussian mixture when all the mixture components have the same covariance matrix $K$,
%the bounds are given by \cite{dist}:
%\begin{align}
%\label{upperdist}
%&h(X)\leq \dfrac{n}{2}-\sum_{i=1}^{q}p_i\ln\sum_{j=1}^{q}p_j g_{w_j, K}(w_i),\\
%\label{lowerdist}
%&h(X)\geq \dfrac{n}{2}+\dfrac{n}{2}\ln \dfrac{1}{4}-\sum_{i=1}^{q}p_i \ln \sum_{j=1}^{q} p_j g_{w_j, K/4}(w_i).
%\end{align}
From the results of \cite{dist}, it is difficult to obtain tight bounds on the average entropy, which is the focus of our paper.
%-----------------------------------------------
\subsection{Contributions}

We develop an analytical estimation method for the {\em average} Gaussian mixture entropy problem in the special case of equal weights $\frac1q$ and equal covariance matrix~$\sigma^2 {\mathbf 1}$.
Our method postulates that the displacements $w_i$ themselves are stochastic, i.i.d. with Gaussian distribution $g_{0,s^2 {\bf 1}}$.
We compute $h(X)$ averaged over the displacements~$w_i$, {resulting in the conditional entropy $h(X|W)$. We mention that the quantity $h(X|W)$, for exactly this setting with equal weights, spherical covariance, and Gaussian displacements, plays a role in the detectability of digital watermarks (see Section~\ref{sec:startingPoint}).
}

We work in the regime $s<\sigma$. Our method uses the fraction $\mu=\frac {s^2}{\sigma^2}$ as the small parameter for a power expansion.
We show results up to and including order $\mathcal{O}(\mu^2)$. Our result is novel since most available estimators of $h(X)$ in the literature have no closed-form expressions, and usually rely on upper bounds rather than an explicit calculation of $h(X)$. In \cite{approx}, although a Taylor series approximation is employed, its accuracy is dependent on the splitting mentioned in Section \ref{pastworksec}. Furthermore, it is difficult to estimate the order of magnitude for the error in a general expansion around the means. In contrast, our set-up does not need the splitting method, and for our result, it is possible to quantify the order of magnitude for the error since our power series is in $\mu$.
%================================================
\section{Deriving the Series Expansion}
\subsection{Notation}
Boldface lowercase letters denote $n$-dimensional vectors (e.g., $\mathbf{a}$);
the inner product between two $n$-dimensional vectors $\mathbf{a}$ and $\mathbf{b}$ is denoted by $\mathbf{a}\cdot \mathbf{b}$;
A hat denotes an array in $q$ dimensions, i.e., indexed with the number of the Gaussian components;
for lowercase letters,
$\hat{(\cdot)}$ denotes a $q$-dimensional vector (e.g., $\hat{a}$);
for uppercase letters, a $q\times q$ matrix (e.g.,~$\hat{A}$).
$\hat{a}^{\mathrm{T}}\hat{b}$ denotes the $q$-dimensional inner product.
$\otimes$ denotes the Kronecker product.
%--------------------------------------------------
\subsection{Starting Point}
\label{sec:startingPoint}
Consider a random variable $\mathbf{X}\in{\mathbb R}^n$ whose
probability density function (pdf) is a Gaussian mixture with equal weights.
The Gaussian pdfs all have covariance matrix $\sigma^2 {\bf 1}$, and they are centered on points
$\vecw_1,\ldots,\vecw_q \in {\mathbb R}^n$.
We write $\hat\vecw=(\vecw_1,\ldots,\vecw_q)$.
\begin{equation}
f_{\vecX| {\hat \vecW}}(\vecx|\hat\vecw)
= \frac1q\sum_{j=1}^q g_{\vecw_j,\sigma^2}(\vecx)
= (2\pi\sigma^2)^{-\frac n2} \frac1q \sum_{j=1}^q \exp \left\{- \frac{(\vecx-\vecw_j)^2}{2\sigma^2}\right\}.
\label{densityXgivenW}
\end{equation}
We note that a simplification is possible when $n>q$, i.e., the dimension of the space is higher than the number of components in the mixture. For all configurations $\hat\vecw$, it is possible to find a rotation such that after rotation $\vecw_j = \sum_{\alpha=1}^q w_{j\alpha} \vece_\alpha$, where the $\vece_\alpha$ are the basis vectors in ${\mathbb R}^n$. In other words, for all $\vecw_j$, the vector components in the dimensions beyond $q$ vanish. The density (\ref{densityXgivenW}) simplifies to
\begin{equation}
f_{\vecX| {\hat \vecW}}(\vecx|\hat\vecw) \to
\exp\left\{-\frac1{2\sigma^2} \sum_{\alpha=q+1}^n x_\alpha^2\right\}\;
(2\pi\sigma^2)^{-\frac n2} \frac1q \sum_{j=1}^q \exp \left\{- \frac{\sum_{\alpha=1}^q(x_\alpha-w_{j\alpha})^2}{2\sigma^2}\right\}.
\end{equation}
The first exponent is trivially dealt with when the $\ln f_{\vecX| {\hat \vecW}}(\vecx|\hat\vecw)$ is integrated, yielding a constant contribution $\frac{n-q}2$ to the entropy. Hence, we only focus on the case $n\leq q$ in our calculations. In general,
analytically approximating the differential entropy $h(\vecX|\hat\vecW=\hat\vecw)$ for a given $\hat\vecw$ is a difficult problem. % Please check intended meaning is retained.
In this paper, we study a problem that is slightly easier:
finding analytic approximations for $h(\vecX|\hat\vecW)={\mathbb E}_{\hat\vecw} h(\vecX|\hat\vecW=\hat\vecw)$
in the case where the offsets $\vecw_j$ are i.i.d. Gaussian-distributed with zero mean and covariance matrix $s^2{\bf 1}$.
\begin{equation}
f_{\hat\vecW}(\hat\vecw)
= \prod_{j=1}^q g_{0,s^2 {\bf 1}}(\vecw_j)
= (2\pi s^2)^{-\frac {nq}2} \exp\left\{ - \sum_{j=1}^q\frac{\vecw_j^2}{2s^2}\right\}.
\end{equation}
We note that the quantity $h(\vecX|\hat \vecW)$, under precisely these conditions, occurs in the field of spread-spectrum watermarking \cite{CKLS1997,WHGZL2022}.
In this setting, the $\hat\vecW$ is a table of $q$ random watermarking sequences.
A row index $K\in\{1,\ldots,q\}$ is chosen at random, and the $K$'th row of $\hat\vecW$
is additively inserted into some data
$\vecD$, which are modeled as Gaussian, and then attackers apply noise in order to wipe out the watermark.
The $\vecX$ plays the role of the watermarked data after this attack.
A watermark detector tries to determine the index $K$ based on $\vecX$ and $\hat\vecW$
(and optionally the unwatermarked original data~$\vecD$).
In the analysis of the detection efficiency, an important figure of merit is the mutual information $I(K;\vecX \hat\vecW)$
or $I(K;\vecD \vecX \hat\vecW)$,
both of which after some rewriting involve precisely the conditional entropy~$h(\vecX|\hat\vecW)$.
We take $s^2< \sigma^2$ and develop a power series in the parameter $\mu=s^2/\sigma^2$.
%\textcolor[rgb]{1.00,0.00,0.00}{@@ We have to discuss the `stupid bound' here, so that we can later show that we are doing
%better than the Stupid Bound.}
In the following sections, we obtain our results by following these steps:
\begin{itemize}
\item We first obtain a more compact form of $h(\vecX|\hat\vecW)$ in Corollary \ref{entropcoll} via a change of variables {in order to write $h(\vecX|\hat\vecW)$ in terms of the more familiar expectation with regard to a multivariate Gaussian density}. 

\item {We next perform another change or variables to obtain a diagonal form in the expectation with regard to the multivariate Gaussian. This is important since our new variables will be independent, and many expressions will be simplified if our variables are not correlated.} % Please check intended meaning is retained.
\item {We evaluate the leading terms to $h(\vecX|\hat\vecW)$, as shown in Theorem \ref{entrocaltheo}, where we obtain an expression that separates the leading contributions to $h(\vecX|\hat\vecW)$ and the quantity $S$ to be defined. The reason this is important is that the leading contributions contain terms that will make the Taylor series diverge, and therefore, we evaluate them analytically before performing the Taylor series expansion. We also include other terms that do not cause any problems in the limit of small $\mu$ for convenience. The remaining expression $S$ is then safe to expand for small $\mu$.}
\item We perform a third change of variables to simplify $S$, and we obtain a Taylor series for $h(\vecX|\hat\vecW)$ up to order $\mathcal{O}(\mu^2)$ in Theorem \ref{bruttheo} via a brute-force approach. {The change of variables is necessary for making the analytical expression tractable. }
\item Finally, we provide a determinant-based approach to evaluate the power series for $S$, and we obtain a result for $h(\vecX|\hat\vecW)$ up to order $\mathcal{O}(\mu)$.
\end{itemize}
%--------------------------------------------------
\subsection{First Change of Variables}
We observe that inside the logarithm, the variables $\mathbf w_j$ occur only in the combination $\mathbf w_j -\mathbf x$. This allows us to introduce shifted variables $\mathbf z_j = \mathbf w_j -\mathbf x$ and then analytically carry out the integration over~$\mathbf x$.
\begin{lemma}
\label{borislemma}
The differential entropy $h(\vecX|\hat\vecW)$ is given by
\begin{equation}
\begin{aligned}
h(\vecX|\hat\vecW)=
\frac n2\ln 2\pi\sigma^2
-\left(\frac\mu q\right)^{\frac n2} (2\pi)^{-\frac{qn}2}  \frac1q\sum_{j=1}^q \int \rd\mathbf{z}_1 \cdots \rd\mathbf{z}_q\;
e^{-\frac12\sum_a \vecz_a^2  +\frac1{2q} (\sum_a \vecz_a)^2 -\frac\mu2\vecz_j^2}
\ln \frac1q \sum_{k=1}^q e^{-\frac\mu2 \vecz_k^2}.
\end{aligned}
\end{equation}
\end{lemma}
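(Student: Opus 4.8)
The plan is to begin from the definition
$h(\vecX|\hat\vecW)=\EE_{\hat\vecw}\,h(\vecX|\hat\vecW=\hat\vecw)=-\int\rd\hat\vecw\,f_{\hat\vecW}(\hat\vecw)\int\rd\vecx\,f_{\vecX|\hat\vecW}(\vecx|\hat\vecw)\ln f_{\vecX|\hat\vecW}(\vecx|\hat\vecw)$
and to split the logarithm of the density (\ref{densityXgivenW}) into the constant normalization $-\frac n2\ln 2\pi\sigma^2$ and the mixture piece $\ln\frac1q\sum_j\exp\{-(\vecx-\vecw_j)^2/2\sigma^2\}$. Since $f_{\vecX|\hat\vecW}$ integrates to $1$ over $\vecx$ and $f_{\hat\vecW}$ integrates to $1$ over $\hat\vecw$, the constant piece contributes exactly $\frac n2\ln 2\pi\sigma^2$, the first term in the claim, and the whole problem reduces to the double integral of the mixture piece.

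For that integral I would use the observation flagged before the lemma: inside the logarithm the centers enter only through $\vecx-\vecw_j$. I therefore perform the unit-Jacobian shift $\vecz_j=\vecw_j-\vecx$ at fixed $\vecx$, so that $(\vecx-\vecw_j)^2=\vecz_j^2$ and the logarithm becomes independent of $\vecx$. All remaining $\vecx$-dependence then sits in the prior through $\vecw_j^2=(\vecx+\vecz_j)^2$, which after summation produces the quadratic form $-\frac{q}{2s^2}\vecx^2-\frac1{s^2}\vecx\cdot\sum_j\vecz_j$ in the exponent. I would integrate $\vecx$ out by completing the square: the Gaussian integral yields a factor $(2\pi s^2/q)^{n/2}$ and leaves the residual cross-term $\exp\{\frac1{2qs^2}(\sum_j\vecz_j)^2\}$, which is exactly the mechanism that generates the $\frac1{2q}(\sum_a\vecz_a)^2$ correlation term in the target expression.

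The last step is the rescaling $\vecz_j\mapsto s\,\vecz_j$, natural because the dominant Gaussian weight in $\hat\vecz$ has width $s$. Under it the exponent $-\sum_j\vecz_j^2/2s^2+\frac1{2qs^2}(\sum_j\vecz_j)^2-\vecz_k^2/2\sigma^2$ collapses to $-\frac12\sum_a\vecz_a^2+\frac1{2q}(\sum_a\vecz_a)^2-\frac\mu2\vecz_k^2$ and the logarithm becomes $\ln\frac1q\sum_k e^{-\frac\mu2\vecz_k^2}$, everything now expressed through $\mu=s^2/\sigma^2$ (the roles of the summation indices in the statement differ from this derivation only by a cosmetic relabeling). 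I would then assemble the constants: the Jacobian $s^{nq}$ from the rescaling, the factor $(2\pi s^2/q)^{n/2}$ from the $\vecx$ integral, and the normalizations $(2\pi s^2)^{-nq/2}$ and $(2\pi\sigma^2)^{-n/2}$. Matching powers shows that the $2\pi$ factors collapse to $(2\pi)^{-qn/2}$, the $s$ and $\sigma$ factors combine into $\mu^{n/2}$, and a $q^{-n/2}$ survives, reproducing the prefactor $(\mu/q)^{n/2}(2\pi)^{-qn/2}$.

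I expect the main obstacle to be purely the bookkeeping of the two changes of variables and of the prefactors, rather than any conceptual difficulty: the completion of the square in $\vecx$ must be performed so that the residual correlation term is retained exactly, and the exponents of $2\pi$, $s$, $\sigma$, and $q$ must be matched term by term. A secondary technical point is justifying the interchange of the $\vecx$- and $\hat\vecw$-integrations; since $\ln\frac1q\sum_j e^{-(\vecx-\vecw_j)^2/2\sigma^2}\le 0$ is bounded above and the Gaussian weights control its negative tail, Fubini applies and the order of integration may be exchanged.
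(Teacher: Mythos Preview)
Your proposal is correct and follows essentially the same route as the paper's proof: shift to the variables $\vecz_j=\vecw_j-\vecx$, integrate out $\vecx$ by completing the square, and rescale by~$s$ to make $\mu$ explicit. The only cosmetic differences are that the paper combines the shift and rescaling into a single substitution $\vecw_j=\vecx-s\vecz_j$ (together with $\vecx=s\vecc$) and extracts the constant $\frac n2\ln 2\pi\sigma^2$ at the end rather than at the start; the sign flip in your definition of $\vecz_j$ is immaterial since only $\vecz_j^2$ and $(\sum_a\vecz_a)^2$ appear.
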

\begin{proof}
See Appendix \ref{borisapp}.
\end{proof}
Next, we can get rid of the $\sum_j$ summation, using permutation symmetry.
\begin{lemma}
\label{flemma}
Let $F(\mathbf{z}_1,\dots,\mathbf{z}_q)$ be any permutation invariant function; then, we have
\begin{equation}
\begin{aligned}
&\int \exp\left\{-\dfrac{1}{2}\sum_j \mathbf{z}_j^2+\dfrac{1}{2q}\sum_{ij}\mathbf{z}_i \cdot \mathbf{z}_j\right\}\dfrac{1}{q}\sum_k\exp\left\{-\dfrac{\mu}{2}\mathbf{z}_k^2 \right\} F(\mathbf{z}_1,\dots,\mathbf{z}_q)\mathrm{d}\mathbf{z}_1 \cdots \mathrm{d}\mathbf{z}_q\\
&=\int\exp\left\{-\dfrac{1}{2}\mathbf{\hat{z}}^{\mathrm{T}}\hat{M}^{(q)} \mathbf{\hat{z}}\right\} F(\mathbf{z}_1,\dots,\mathbf{z}_q)\mathrm{d}\mathbf{z}_1 \cdots \mathrm{d}\mathbf{z}_q,
\end{aligned}
\end{equation}
where the matrix elements $\{M^{(k)}_{ij}\}_{k=1}^{q}$ are given by\vspace{6pt}
\begin{equation}
M^{(k)}_{ij}=\delta_{ij}-\dfrac{1}{q}+\mu \delta_{ik}\delta_{jk}.
\end{equation}
\end{lemma}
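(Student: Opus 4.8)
The plan is to recognise that, apart from the averaging factor $\frac1q\sum_k$, the left-hand integrand is already a Gaussian of exactly the form appearing on the right, and then to use the permutation invariance of $F$ to collapse that average onto a single representative term. First I would rewrite the ``base'' quadratic form as
\[
-\tfrac12\sum_j \mathbf{z}_j^2 + \tfrac1{2q}\sum_{ij}\mathbf{z}_i\cdot\mathbf{z}_j = -\tfrac12\sum_{ij}\left(\delta_{ij} - \tfrac1q\right)\mathbf{z}_i\cdot\mathbf{z}_j .
\]
Multiplying this by a single summand $\exp\{-\frac\mu2\,\mathbf{z}_k^2\}$ adds $-\frac\mu2\sum_{ij}\delta_{ik}\delta_{jk}\,\mathbf{z}_i\cdot\mathbf{z}_j$ to the exponent, so the product is precisely $\exp\{-\frac12\,\hat z^{\mathrm{T}}\hat M^{(k)}\hat z\}$ with $M^{(k)}_{ij}=\delta_{ij}-\frac1q+\mu\,\delta_{ik}\delta_{jk}$, where $\hat z^{\mathrm{T}}\hat M^{(k)}\hat z$ is read as the shorthand $\sum_{ij}M^{(k)}_{ij}\,\mathbf{z}_i\cdot\mathbf{z}_j$. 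Hence the left-hand side equals $\frac1q\sum_{k=1}^q I_k$, where $I_k:=\int \exp\{-\frac12\,\hat z^{\mathrm{T}}\hat M^{(k)}\hat z\}\,F(\mathbf{z}_1,\dots,\mathbf{z}_q)\,\mathrm{d}\mathbf{z}_1\cdots\mathrm{d}\mathbf{z}_q$.

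The heart of the argument is then to show that every $I_k$ has the same value, so that the average collapses. The key observation is that the base quadratic form $\sum_{ij}(\delta_{ij}-\frac1q)\mathbf{z}_i\cdot\mathbf{z}_j = \sum_j\mathbf{z}_j^2 - \frac1q(\sum_j\mathbf{z}_j)^2$ is manifestly invariant under any relabelling of the indices $1,\dots,q$; so is the Lebesgue measure $\mathrm{d}\mathbf{z}_1\cdots\mathrm{d}\mathbf{z}_q$, since a permutation of coordinates has Jacobian of modulus one; and $F$ is permutation invariant by hypothesis. The only factor in $I_k$ that distinguishes an index is $\exp\{-\frac\mu2\,\mathbf{z}_k^2\}$.

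To exploit this, I would fix $k$ and apply the change of variables $\mathbf{y}_j=\mathbf{z}_{\pi(j)}$ for a permutation $\pi\in S_q$ chosen so that $\pi(q)=k$. Under this substitution the base quadratic form and the measure are unchanged; since $\mathbf{z}_m=\mathbf{y}_{\pi^{-1}(m)}$ we have $F(\mathbf{z}_1,\dots,\mathbf{z}_q)=F(\mathbf{y}_{\pi^{-1}(1)},\dots,\mathbf{y}_{\pi^{-1}(q)})=F(\mathbf{y}_1,\dots,\mathbf{y}_q)$ by invariance; and the distinguished factor becomes $\exp\{-\frac\mu2\,\mathbf{z}_k^2\}=\exp\{-\frac\mu2\,\mathbf{y}_q^2\}$ because $\mathbf{y}_q=\mathbf{z}_{\pi(q)}=\mathbf{z}_k$. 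The transformed integral is therefore exactly $I_q$, giving $I_k=I_q$ for every $k$. Substituting back yields $\text{LHS}=\frac1q\sum_{k=1}^q I_q = I_q$, which is the right-hand side with $\hat M^{(q)}$.

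I expect the argument to be essentially routine; the only point requiring genuine care is the permutation bookkeeping, namely verifying that the choice $\pi(q)=k$ indeed sends $\exp\{-\frac\mu2\,\mathbf{z}_k^2\}$ to $\exp\{-\frac\mu2\,\mathbf{y}_q^2\}$ while fixing the symmetric pieces, and that the invariance of $F$ is invoked in the correct direction. There are no convergence or integrability issues to worry about, as the identity is just a pointwise rearrangement of the integrand followed by a measure-preserving relabelling.
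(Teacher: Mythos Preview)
Your proposal is correct and follows essentially the same route as the paper's proof: first absorb each summand into the quadratic form to obtain $\frac1q\sum_k I_k$ with $I_k=\int e^{-\frac12\hat z^{\mathrm T}\hat M^{(k)}\hat z}F\,\mathrm d\hat z$, then use a permutation swapping indices $k$ and $q$ together with the permutation invariance of the base quadratic form, the measure, and $F$ to conclude $I_k=I_q$ for all $k$. Your treatment of the permutation bookkeeping is in fact a bit more explicit than the paper's.
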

\begin{proof}
See Appendix \ref{fexpecapp}.
\end{proof}
We now rewrite the differential entropy more compactly in the following corollary.
\begin{corollary}
\label{entropcoll}
Let the function $F$ be given by
\begin{equation}
\label{feq}
F(\mathbf{z}_1,\dots,\mathbf{z}_q)=\ln q^{-1}\sum_{l=1}^q \exp \left\{-\dfrac{\mu}{2}\mathbf{z}_l \cdot \mathbf{z}_l \right\},
\end{equation}
then, the differential entropy $h(\vecX|\hat\vecW)$ can be written as
\begin{equation}
\label{diffentcompact}
h(\vecX|\hat\vecW)=\frac n2\ln 2\pi\sigma^2
-\left(\frac\mu q\right)^{\frac n2} (2\pi)^{-\frac{qn}2}  \int \exp\left\{-\dfrac{1}{2}\mathbf{\hat{z}}^{\mathrm{T}}\hat{M}^{(q)} \mathbf{\hat{z}}\right\} F(\mathbf{z}_1,\dots,\mathbf{z}_q)\mathrm{d}\mathbf{z}_1 \cdots \mathrm{d}\mathbf{z}_q.
\end{equation}
\end{corollary}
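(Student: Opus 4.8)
The plan is to read off~\eqref{diffentcompact} by directly combining the two preceding lemmas, since the integrand produced in Lemma~\ref{borislemma} already has exactly the shape that Lemma~\ref{flemma} is designed to simplify. First I would observe that the logarithmic factor $\ln q^{-1}\sum_{k=1}^q e^{-\frac\mu2\vecz_k^2}$ sitting under the integral in Lemma~\ref{borislemma} is, after the harmless relabeling $k\to l$ of the dummy index, precisely the function $F$ defined in~\eqref{feq}. The essential point is that this $F$ is manifestly permutation invariant: interchanging any two arguments $\vecz_a\leftrightarrow\vecz_b$ merely reorders the terms of $\sum_l e^{-\frac\mu2\vecz_l^2}$ and hence leaves $F$ unchanged. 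This is exactly the hypothesis required to invoke Lemma~\ref{flemma}.

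Next I would match the Gaussian weights term by term. For each fixed $j$ the exponent $-\frac12\sum_a\vecz_a^2+\frac1{2q}(\sum_a\vecz_a)^2-\frac\mu2\vecz_j^2$ equals $-\frac12\hat{\vecz}^{\mathrm T}\hat M^{(j)}\hat{\vecz}$, because the definition $M^{(j)}_{ab}=\delta_{ab}-\frac1q+\mu\delta_{aj}\delta_{bj}$ gives $\sum_{ab}M^{(j)}_{ab}\,\vecz_a\cdot\vecz_b=\sum_a\vecz_a^2-\frac1q(\sum_a\vecz_a)^2+\mu\vecz_j^2$. Consequently the $j$-sum in Lemma~\ref{borislemma} can be written as $\frac1q\sum_j\int e^{-\frac12\hat{\vecz}^{\mathrm T}\hat M^{(j)}\hat{\vecz}}\,F\,\rd\vecz_1\cdots\rd\vecz_q$. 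Pulling the factor $\frac1q\sum_k e^{-\frac\mu2\vecz_k^2}$ back inside the shared Gaussian $e^{-\frac12\sum_a\vecz_a^2+\frac1{2q}\sum_{ab}\vecz_a\cdot\vecz_b}$ shows that this is exactly the left-hand side of Lemma~\ref{flemma}, since $\frac1{2q}\sum_{ab}\vecz_a\cdot\vecz_b=\frac1{2q}(\sum_a\vecz_a)^2$.

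Applying Lemma~\ref{flemma} then collapses the whole averaged $j$-sum into the single integral $\int e^{-\frac12\hat{\vecz}^{\mathrm T}\hat M^{(q)}\hat{\vecz}}\,F\,\rd\vecz_1\cdots\rd\vecz_q$, and substituting this back into Lemma~\ref{borislemma} produces~\eqref{diffentcompact}. I do not expect a genuine obstacle here, as the statement is essentially a bookkeeping corollary; the only thing to verify carefully is the consistency of the index matching, namely that the weight $\frac1q\sum_k e^{-\frac\mu2\vecz_k^2}$ of Lemma~\ref{flemma} reproduces exactly the per-$j$ factor $e^{-\frac\mu2\vecz_j^2}$ after relabeling, and that it is precisely the permutation invariance of $F$ established above that licenses replacing the $q$ separate $\hat M^{(j)}$-weighted integrals by the single representative choice $\hat M^{(q)}$.
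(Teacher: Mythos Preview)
Your proposal is correct and follows the same approach as the paper: the paper's proof simply states that the corollary follows directly from Lemma~\ref{borislemma} and Lemma~\ref{flemma} together with the observation that the function $F$ in~\eqref{feq} is permutation invariant. Your additional verification of the index matching and the identification of the exponent with $-\frac12\hat{\vecz}^{\mathrm T}\hat M^{(j)}\hat{\vecz}$ is just a more explicit spelling-out of this same direct combination.
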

\begin{proof}
It follows directly from Lemmas \ref{borislemma} and  \ref{flemma}, where we note that Equation \eqref{feq} is permutation invariant.
\end{proof}
%-----------------------------------------------------
\subsection{Diagonalization of $\hat{M}^{(q)}$ and Second Change of Variables}

We would like to obtain an expression for the expectation term in Equation \eqref{diffentcompact}; however, we first need to switch to a diagonal form that removes correlation between the integration variables, i.e., we would like to find the eigenvectors and eigenvalues of the matrix $\hat{M}^{(q)}$. We first notice that adding the identity matrix, while it shifts the eigenvalues by 1, does not change the eigenvectors of a matrix. Hence, we diagonalize the matrix $\hat{C}^{(q)}$, as shown in Theorem \ref{ctheo}, and our result follows directly in Corollary \ref{mcorr}.
\begin{theorem}
\label{ctheo}
Let $\hat{J}$ be the all-one matrix.
Let $\hat{e}^{(k)}$ denote the $k$-th standard basis vector.
Then, the matrix $\hat{C}^{(q)}$ given by
\begin{equation}
\hat{C}^{(q)} = -q^{-1}\hat{J}+\mu \hspace{0.5mm} \mathrm{diag}(\hat{e}^{(q)})
=-\dfrac{1}{q}\begin{pmatrix}
1 & 1 & \cdots & 1 \\
1 & 1 & \cdots & 1 \\
\vdots  & \vdots  & \ddots & \vdots  \\
1 & 1 & \cdots & 1-q\mu
\end{pmatrix}
,
\end{equation}
has an orthogonal matrix $\hat{\Lambda}^{(q)}$ of eigenvectors  given by
\begin{equation}
\hat{\Lambda}^{(q)}=(\hat{\alpha}_1, \hat{\alpha}_2, \dots, \hat{\alpha}_{q-2}, \hat{\beta}_1, \hat{\beta}_2),
\end{equation}
where $\{\hat{\alpha}_i\}^{q-2}_{i=1}$ are eigenvectors with eigenvalue 0, whereas $\hat{\beta}_1$ and $\hat{\beta}_2$ have eigenvalues $\lambda_1$ and $\lambda_2$, respectively, and read
\begin{equation}
\lambda_j=\dfrac{1}{2}\left((\mu-1)+(-1)^{j+1}\sqrt{\mu^2-(4 q^{-1}-2)\mu+1}\right), \quad j\in \{1,2\}.
\end{equation}
Furthermore, the elements of the eigenvectors are given by
\begin{align}
&\alpha_{i, k}=
\dfrac{1}{\sqrt{i(i+1)}}\begin{cases}0&, k>i+1\\1&, k<i+1\\-i&, k=i+1 \end{cases},\\
&\beta_{i, k}=
\dfrac{1}{\sqrt{q-1+[1-q(\lambda_{i}+1)]^2}}\begin{cases}1&, k\neq q\\1-q(\lambda_{i}+1)&, k=q \end{cases}.
\end{align}
\end{theorem}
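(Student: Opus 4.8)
The plan is to exploit the low-rank structure of $\hat{C}^{(q)}$. Writing $\hat{u}=(1,\dots,1)^{\mathrm{T}}$ for the all-one vector in $\mathbb{R}^q$, we have $\hat{J}=\hat{u}\hat{u}^{\mathrm{T}}$ and $\mathrm{diag}(\hat{e}^{(q)})=\hat{e}^{(q)}(\hat{e}^{(q)})^{\mathrm{T}}$, so that $\hat{C}^{(q)}=-\tfrac1q\hat{u}\hat{u}^{\mathrm{T}}+\mu\,\hat{e}^{(q)}(\hat{e}^{(q)})^{\mathrm{T}}$ is a sum of two rank-one matrices. Hence $\mathrm{rank}\,\hat{C}^{(q)}\le 2$ and the eigenvalue $0$ has multiplicity at least $q-2$. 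First I would check that the proposed $\hat{\alpha}_i$ span this kernel: each has entries summing to $i\cdot 1+(-i)=0$, so $\hat{u}^{\mathrm{T}}\hat{\alpha}_i=0$ and thus $\hat{J}\hat{\alpha}_i=0$; and for $i\le q-2$ the support lies in coordinates $1,\dots,i+1\le q-1$, so the $q$-th component vanishes and $\mathrm{diag}(\hat{e}^{(q)})\hat{\alpha}_i=0$. Therefore $\hat{C}^{(q)}\hat{\alpha}_i=\mathbf 0$. A one-line computation gives $\|\hat{\alpha}_i\|^2=(i+i^2)/[i(i+1)]=1$, and for $i<j$ the vector $\hat{\alpha}_j$ is constant equal to $1$ on the support of $\hat{\alpha}_i$, so $\hat{\alpha}_i\cdot\hat{\alpha}_j\propto i-i=0$; thus the $\hat{\alpha}_i$ form an orthonormal basis of the $(q-2)$-dimensional kernel.

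Next I would pin down the two nonzero eigenvalues. Because $\hat{C}^{(q)}$ is symmetric, the orthogonal complement $V$ of $\mathrm{span}\{\hat{\alpha}_i\}$ is invariant; the orthogonality conditions inductively force the first $q-1$ entries of any vector in $V$ to coincide, so $V=\mathrm{span}\{\hat{u},\hat{e}^{(q)}\}$. Computing $\hat{C}^{(q)}\hat{u}=-\hat{u}+\mu\hat{e}^{(q)}$ and $\hat{C}^{(q)}\hat{e}^{(q)}=-\tfrac1q\hat{u}+\mu\hat{e}^{(q)}$ gives the $2\times2$ representation with trace $\mu-1$ and determinant $-\mu(q-1)/q$, so the eigenvalues solve $\lambda^2-(\mu-1)\lambda-\mu(q-1)/q=0$. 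Expanding the discriminant $(\mu-1)^2+4\mu(q-1)/q=\mu^2-(4q^{-1}-2)\mu+1$ reproduces the stated $\lambda_j$. The constant term $-\mu(q-1)/q\neq0$ confirms $0$ is not among these two, so the kernel is exactly $(q-2)$-dimensional, and for $q\ge2$, $0<\mu<1$ the discriminant is positive, so $\lambda_1\neq\lambda_2$.

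Finally, for $\hat{\beta}_i\in V$ I would use the ansatz $\hat{v}=(1,\dots,1,c)$. Then $\hat{C}^{(q)}\hat{v}=-\tfrac{q-1+c}{q}\hat{u}+\mu c\,\hat{e}^{(q)}$, and matching the first $q-1$ components of $\hat{C}^{(q)}\hat{v}=\lambda_i\hat{v}$ gives $c=1-q(\lambda_i+1)$, exactly the claimed $q$-th entry; matching the last component reduces, after substituting this value of $c$, precisely to the characteristic equation for $\lambda_i$ and so holds automatically. Normalizing via $\|\hat{v}\|^2=(q-1)+c^2=q-1+[1-q(\lambda_i+1)]^2$ yields the stated prefactor, and orthogonality of $\hat{\beta}_1,\hat{\beta}_2$ with each other and with every $\hat{\alpha}_i$ follows from symmetry together with $\lambda_1\neq\lambda_2$, so $\hat{\Lambda}^{(q)}$ is orthogonal. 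The one mildly delicate step is the discriminant algebra in the second paragraph — keeping the $4q^{-1}$ term straight while matching the stated radicand — but the remainder is direct verification of the given formulas.
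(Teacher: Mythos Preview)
Your proof is correct and covers all the claims of the theorem. It is close in spirit to the paper's proof (both verify the stated eigenvectors directly and both use the ansatz $(1,\dots,1,c)$ for the two nontrivial eigenvectors), but you organize the argument more structurally. The paper begins immediately with the ansatz $(1,\dots,1,x)$, writes out $\hat C^{(q)}(1,\dots,1,x)^{\mathrm T}=\lambda(1,\dots,1,x)^{\mathrm T}$ componentwise, eliminates $x$ to obtain the quadratic in $\lambda$, and then separately introduces the $\hat\alpha_i$ and states that they are null vectors and mutually orthonormal; orthogonality of $\hat\beta_1,\hat\beta_2$ is shown by computing $x_1x_2=1-q$ explicitly. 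Your version instead observes up front that $\hat C^{(q)}=-q^{-1}\hat u\hat u^{\mathrm T}+\mu\,\hat e^{(q)}(\hat e^{(q)})^{\mathrm T}$ has rank at most two, which explains a priori why the kernel is $(q-2)$-dimensional and why the invariant complement is exactly $\mathrm{span}\{\hat u,\hat e^{(q)}\}$; you then read off $\lambda_{1,2}$ from the trace and determinant of the $2\times2$ restriction rather than by eliminating $x$, and you get $\hat\beta_1\perp\hat\beta_2$ for free from symmetry plus $\lambda_1\neq\lambda_2$ instead of via the product formula. The payoff is a shorter and more conceptual argument; the paper's payoff is that the intermediate identities it derives (e.g.\ $x_1x_2=1-q$, $m_1m_2=\mu/q$) are reused heavily later on, so its more computational route is not wasted effort.
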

\begin{proof}
See Appendix \ref{diagproof}.
\end{proof}
Note that in matrix notation, we can write
\begin{equation}
\label{mmatrix}
\hat{M}^{(q)}=\hat{I}-q^{-1}\hat{J}+\mu \hspace{0.5mm} \mathrm{diag}(\hat{e}^{(q)})=\hat{I}+\hat{C}^{(q)},
\end{equation}
where $\hat{I}$ denotes the identity matrix, and we have the following corollary:
\begin{corollary}
\label{mcorr}
The matrix $\hat{M}^{(q)}$ given by Equation \eqref{mmatrix} is diagonalizable by the matrix of eigenvectors $\hat{\Lambda}^{(q)}$ given in Theorem \ref{ctheo}, and we have
\begin{align}
\left(\hat{\Lambda}^{(q)}\right)^{\mathrm{T}}\hat{M}^{(q)}\hat{\Lambda}^{(q)}=\begin{pmatrix}
1 & 0 & \cdots & 0 &0\\
0 & 1 & \cdots & 0 &0\\
\vdots  & \vdots  & \ddots & \vdots&\vdots  \\
0 & 0 & \cdots & m_1&0\\
0&0&\cdots&0&m_2
\end{pmatrix}\equiv \hat{D}^{(q)},
\end{align}
where $m_1$ and $m_2$ are defined as
\begin{equation}
m_j \equiv \lambda_j+1,\quad j \in \{1,2\}.
\end{equation}
\end{corollary}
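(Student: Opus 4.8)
The plan is to read off the result as an immediate consequence of Theorem \ref{ctheo} together with the additive decomposition \eqref{mmatrix}. The two structural facts I would lean on are: first, that $\hat{\Lambda}^{(q)}$ is \emph{orthogonal}, so that $(\hat{\Lambda}^{(q)})^{\mathrm{T}} = (\hat{\Lambda}^{(q)})^{-1}$ and in particular $(\hat{\Lambda}^{(q)})^{\mathrm{T}}\hat{I}\hat{\Lambda}^{(q)} = \hat{I}$; and second, that conjugation is linear, so it distributes over the sum $\hat{M}^{(q)} = \hat{I} + \hat{C}^{(q)}$.

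First I would conjugate $\hat{M}^{(q)}$ by $\hat{\Lambda}^{(q)}$ and split using linearity:
\begin{equation}
(\hat{\Lambda}^{(q)})^{\mathrm{T}}\hat{M}^{(q)}\hat{\Lambda}^{(q)} = (\hat{\Lambda}^{(q)})^{\mathrm{T}}\hat{I}\hat{\Lambda}^{(q)} + (\hat{\Lambda}^{(q)})^{\mathrm{T}}\hat{C}^{(q)}\hat{\Lambda}^{(q)}.
\end{equation}
The first term collapses to $\hat{I}$ by orthogonality. For the second term, Theorem \ref{ctheo} tells me that the columns of $\hat{\Lambda}^{(q)}$ are eigenvectors of $\hat{C}^{(q)}$, with the first $q-2$ columns having eigenvalue $0$ and the last two having eigenvalues $\lambda_1$ and $\lambda_2$ in that column order; hence $(\hat{\Lambda}^{(q)})^{\mathrm{T}}\hat{C}^{(q)}\hat{\Lambda}^{(q)} = \mathrm{diag}(0,\dots,0,\lambda_1,\lambda_2)$. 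Adding the two diagonal matrices then gives $\mathrm{diag}(1,\dots,1,\lambda_1+1,\lambda_2+1)$, and substituting the definition $m_j \equiv \lambda_j+1$ yields exactly $\hat{D}^{(q)}$.

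There is no genuine obstacle here: the corollary is purely the observation that shifting a matrix by a multiple of the identity shifts every eigenvalue by the same constant while leaving the eigenvectors untouched, which is precisely why Theorem \ref{ctheo} was phrased for $\hat{C}^{(q)}$ rather than for $\hat{M}^{(q)}$ directly. The only point requiring a moment's care is bookkeeping: I must keep the column ordering of $\hat{\Lambda}^{(q)} = (\hat{\alpha}_1,\dots,\hat{\alpha}_{q-2},\hat{\beta}_1,\hat{\beta}_2)$ consistent with the placement of $m_1,m_2$ in the last two diagonal slots of $\hat{D}^{(q)}$, so that $\hat{\beta}_j$ lines up with $m_j$.
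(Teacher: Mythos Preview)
Your proposal is correct and matches the paper's own argument, which simply records that the corollary ``follows from Theorem~\ref{ctheo}.'' You have merely unpacked that one-line appeal: orthogonality of $\hat{\Lambda}^{(q)}$ plus the decomposition $\hat{M}^{(q)}=\hat{I}+\hat{C}^{(q)}$ gives the eigenvalue shift $\lambda_j\mapsto\lambda_j+1=m_j$ with eigenvectors unchanged.
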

\begin{proof}
Follows from Theorem \ref{ctheo}.
\end{proof}
\begin{lemma}
\label{uvarlemm}
Let $\hat{\mathbf{z}}=\hat{\Lambda}^{(q)}\hat{\mathbf{u}}$, then the function $F$ given in Corollary \ref{entropcoll} can be written as
\begin{equation}
\begin{aligned}
F(\hat{\Lambda}^{(q)}\hat{\mathbf{u}})=&
-\dfrac{\mu}{2}\left(\dfrac{1}{||\hat{x}_1||^2} \mathbf{u}_{q-1} \cdot \mathbf{u}_{q-1}+\dfrac{2}{||\hat{x}_1||||\hat{x}_2||} \mathbf{u}_{q-1} \cdot \mathbf{u}_{q}+\dfrac{1}{||\hat{x}_2||^2} \mathbf{u}_{q} \cdot \mathbf{u}_{q}\right)\\
&+\ln q^{-1} \sum_l \exp \left\{-\dfrac{\mu}{2}\left(G_l+2\delta_{q, l}T\right)\right\},
\end{aligned}
\end{equation}
where we have
\footnotesize
\begin{align}
\label{Geq}
&\hat{x}_j=(1,\dots,1,x_j)^{\mathrm{T}},\quad j \in\{1,2\},\\
&x_j=1-qm_j,\quad j \in\{1,2\},\\
&G_l= \sum_{m=1}^{q-2}\sum_{m'=1}^{q-2} {\alpha}_{m, l}{\alpha}_{m', l} \mathbf{u}_m \cdot \mathbf{u}_{m'}+\dfrac{2}{\sqrt{q-1+x_1^2}}\sum_{m=1}^{q-2} {\alpha}_{m, l} \mathbf{u}_m \cdot \mathbf{u}_{q-1}+\dfrac{2}{\sqrt{q-1+x_2^2}}\sum_{m=1}^{q-2}  \alpha_{m, l}\mathbf{u}_m \cdot \mathbf{u}_{q},\\
\label{Teq}
&T= \dfrac{x_2^2-1}{2(q-1+x_2^2)}\mathbf{u}_{q} \cdot \mathbf{u}_{q}+\dfrac{x_1^2-1}{2(q-1+x_1^2)}\mathbf{u}_{q-1} \cdot \mathbf{u}_{q-1}+\dfrac{x_1x_2-1}{\sqrt{q-1+x_1^2}\sqrt{q-1+x_2^2}}\mathbf{u}_{q} \cdot \mathbf{u}_{q-1}.
\end{align}
\normalsize
\end{lemma}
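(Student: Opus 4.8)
The plan is to substitute $\hat{\mathbf{z}}=\hat{\Lambda}^{(q)}\hat{\mathbf{u}}$ directly into the defining expression $F=\ln q^{-1}\sum_l\exp\{-\frac{\mu}{2}\mathbf{z}_l\cdot\mathbf{z}_l\}$ from Corollary \ref{entropcoll} and track what happens to each quadratic form $\mathbf{z}_l\cdot\mathbf{z}_l$ separately. Carrying out the transformation on the $q$-index while leaving the $n$-index untouched gives $\mathbf{z}_l=\sum_{m=1}^q \Lambda^{(q)}_{lm}\mathbf{u}_m$, so that $\mathbf{z}_l\cdot\mathbf{z}_l=\sum_{m,m'=1}^q \Lambda^{(q)}_{lm}\Lambda^{(q)}_{lm'}\,\mathbf{u}_m\cdot\mathbf{u}_{m'}$. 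By Theorem \ref{ctheo} the columns of $\hat{\Lambda}^{(q)}$ are the eigenvectors, so $\Lambda^{(q)}_{lm}=\alpha_{m,l}$ for $m\leq q-2$, while $\Lambda^{(q)}_{l,q-1}=\beta_{1,l}$ and $\Lambda^{(q)}_{l,q}=\beta_{2,l}$. I would then split the double sum into three blocks: the pure-$\alpha$ block ($m,m'\leq q-2$), the $\alpha$--$\beta$ cross block, and the pure-$\beta$ block ($m,m'\in\{q-1,q\}$). Using $||\hat{x}_j||^2=q-1+x_j^2$ and $\beta_{j,l}=1/||\hat{x}_j||$ for $l\neq q$, the first two blocks reproduce exactly the three terms of $G_l$ in \eqref{Geq}.

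The key structural observation I would exploit is that the index $l=q$ is special. From the explicit form of $\alpha_{i,k}$ in Theorem \ref{ctheo} one has $\alpha_{m,q}=0$ for every $m\leq q-2$, since then $q>m+1$. Hence for $l=q$ both the pure-$\alpha$ block and the $\alpha$--$\beta$ cross block vanish, giving $G_q=0$, which is precisely what the factor $\delta_{q,l}$ in the statement encodes. The only surviving contribution for $l=q$ is the pure-$\beta$ block, whose coefficients carry the factors $\beta_{j,l}$, equal to $1/||\hat{x}_j||$ for $l\neq q$ but to $x_j/||\hat{x}_j||$ for $l=q$; this sign of $l=q$ is the sole place where the generic and exceptional cases diverge.

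I would therefore isolate the pure-$\beta$ block as a common quadratic form $C=||\hat{x}_1||^{-2}\,\mathbf{u}_{q-1}\cdot\mathbf{u}_{q-1}+2||\hat{x}_1||^{-1}||\hat{x}_2||^{-1}\,\mathbf{u}_{q-1}\cdot\mathbf{u}_{q}+||\hat{x}_2||^{-2}\,\mathbf{u}_{q}\cdot\mathbf{u}_{q}$, valid for all $l\neq q$, and record the discrepancy at $l=q$. Since $\beta_{j,q}\beta_{j',q}=x_jx_{j'}/(||\hat{x}_j||\,||\hat{x}_{j'}||)$, the coefficient of $\mathbf{u}_{q-2+j}\cdot\mathbf{u}_{q-2+j'}$ in the $l=q$ block exceeds its value in $C$ by $(x_jx_{j'}-1)/(||\hat{x}_j||\,||\hat{x}_{j'}||)$; collecting the two diagonal pieces and the off-diagonal piece (the latter appearing for both orderings $(j,j')=(1,2)$ and $(2,1)$) reproduces exactly $2T$ from \eqref{Teq}. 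This establishes the pointwise identity $\mathbf{z}_l\cdot\mathbf{z}_l=C+G_l+2\delta_{q,l}T$ for every $l$.

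Finally I would factor the $l$-independent part out of the sum, $\sum_l\exp\{-\frac{\mu}{2}\mathbf{z}_l\cdot\mathbf{z}_l\}=e^{-\frac{\mu}{2}C}\sum_l\exp\{-\frac{\mu}{2}(G_l+2\delta_{q,l}T)\}$, take the logarithm, and use $\ln e^{-\frac{\mu}{2}C}=-\frac{\mu}{2}C$, which is precisely the prefactor displayed in the statement. Once the block decomposition is in place the argument is essentially bookkeeping; I expect the only genuine obstacle to be checking that the $l=q$ correction matches $2T$ term by term, which rests on the two algebraic facts $||\hat{x}_j||^2=q-1+x_j^2$ and $x_j=1-qm_j$ together with the vanishing $\alpha_{m,q}=0$. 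Particular care is needed with the symmetry factor of $2$ on the off-diagonal $\mathbf{u}_{q-1}\cdot\mathbf{u}_{q}$ term, which is the easiest place to lose a factor.
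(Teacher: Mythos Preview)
Your proposal is correct and follows essentially the same route as the paper's proof: both substitute $\hat{\mathbf z}=\hat\Lambda^{(q)}\hat{\mathbf u}$, expand $\mathbf z_l\cdot\mathbf z_l$ into the $\alpha$--$\alpha$, $\alpha$--$\beta$ and $\beta$--$\beta$ blocks, use $\alpha_{m,q}=0$ to see that $G_q=0$, record the $l=q$ discrepancy in the $\beta$--$\beta$ block as $2T$, and then pull the $l$-independent piece $C$ out of the logarithm. The paper encodes the two cases via $\beta_{j,l}=\|\hat x_j\|^{-1}\big(1+\delta_{q,l}(x_j-1)\big)$ rather than by treating $l\neq q$ and $l=q$ separately, but this is only a notational difference.
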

\begin{proof}
See Appendix \ref{uvarproof}.
\end{proof}
%----------------------------------------------------------------------
\subsection{Pulling Leading-Order Terms Out of the Integral}

We now use Corollary \ref{entropcoll} and Lemma \ref{uvarlemm} to further obtain an expression for the differential entropy that separates the leading contributions and the quantity $S$, which we define as~follows:
\begin{equation}
\label{Sdef}
S \equiv -\mathbb{E}_{u}\left[\ln  q^{-1}\sum_l \exp \left\{-\dfrac{\mu}{2}\left(G_l+2\delta_{q, l}T\right)\right\}\right],
\end{equation}
where $\mathbb{E}_u$ is the expectation taken with respect to the density $\prod_{i=1}^{n}\prod_{k=1}^{q}g_{0, d^{-1}_{k}}(u_{k, i})$, and $\{d_k\}_{k=1}^{q}$ are given by:
\begin{equation}
d_{k}= \begin{cases}1& k \leq q-2\\
m_1& k=q-1\\
m_2& k=q\end{cases}.
\end{equation}
\begin{theorem}
\label{entrocaltheo}
Let $h_\sigma=\ln \sigma\sqrt{2\pi e}$ be the differential entropy of a Gaussian distributed random variable with variance $\sigma$.
The differential entropy $h(\vecX|\hat\vecW)$ is given by
\begin{equation}
\label{hextractS}
\begin{aligned}
h(\vecX|\hat\vecW)=n h_\sigma+\dfrac{n}{2}\dfrac{q}{q-1}\mu+S,
\end{aligned}
\end{equation}
where $S$ is given by Equation \eqref{Sdef}.
\end{theorem}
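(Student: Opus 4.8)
The plan is to start from the compact representation in Corollary \ref{entropcoll} and convert the remaining integral into a genuine Gaussian expectation, so that the two claimed leading terms emerge as second moments of that Gaussian while the logarithmic part is recognized as $-S$. First I would apply the orthogonal change of variables $\hat{\vecz}=\hat\Lambda^{(q)}\hat\vecu$, acting on the $q$-index (i.e. $\hat\Lambda^{(q)}\otimes\mathbf 1_n$ on the full $nq$-dimensional integration domain). Since $\hat\Lambda^{(q)}$ is orthogonal the Jacobian is $1$, and Corollary \ref{mcorr} turns the weight $\exp\{-\tfrac12\hat\vecz^{\mathrm T}\hat M^{(q)}\hat\vecz\}$ into $\exp\{-\tfrac12\sum_k d_k\,\vecu_k\cdot\vecu_k\}$ with the $d_k$ defined just before the theorem. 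I would then read this off as the product density $\prod_{i,k}g_{0,d_k^{-1}}(u_{k,i})$ that defines $\mathbb E_u$, up to the normalization constant $(2\pi)^{-nq/2}\big(\prod_k d_k\big)^{n/2}=(2\pi)^{-nq/2}(m_1m_2)^{n/2}$, where $q-2$ of the $d_k$ equal $1$.

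The first key computation is that this normalization cancels the external prefactor $(\mu/q)^{n/2}$. From Theorem \ref{ctheo} one has $m_1+m_2=\lambda_1+\lambda_2+2=\mu+1$ and, by a difference of squares,
\begin{equation}
m_1m_2=\tfrac14\big[(\mu+1)^2-(\mu^2-(4q^{-1}-2)\mu+1)\big]=\frac\mu q,
\end{equation}
which also equals $\det\hat M^{(q)}$ since the remaining $q-2$ eigenvalues are $1$. Hence $(\mu/q)^{\frac n2}(2\pi)^{-\frac{qn}2}$ exactly undoes the Gaussian normalization and the expression collapses to $h(\vecX|\hat\vecW)=\tfrac n2\ln 2\pi\sigma^2-\mathbb E_u[F(\hat\Lambda^{(q)}\hat\vecu)]$.

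Next I would insert the decomposition of $F(\hat\Lambda^{(q)}\hat\vecu)$ from Lemma \ref{uvarlemm}. Its log-sum piece is precisely $-S$ by definition \eqref{Sdef}, so it only remains to evaluate the Gaussian expectation of the quadratic piece. As the $\vecu_k$ are independent with $\mathbb E[\vecu_k\cdot\vecu_k]=n/d_k$ and $\mathbb E[\vecu_{q-1}\cdot\vecu_q]=0$, the cross term vanishes and the quadratic piece contributes $-\mathbb E_u[\mathrm{quad}]=\tfrac{n\mu}{2}\big(\tfrac{1}{m_1\|\hat x_1\|^2}+\tfrac{1}{m_2\|\hat x_2\|^2}\big)$. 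The heart of the argument, and the step I expect to be the main obstacle, is the exact algebraic identity
\begin{equation}
\frac1{m_1\|\hat x_1\|^2}+\frac1{m_2\|\hat x_2\|^2}=\frac1\mu+\frac q{q-1}.
\end{equation}
To prove it I would write $\|\hat x_j\|^2=(q-1)+x_j^2$ with $x_j=1-qm_j$, set $A_j:=m_j\|\hat x_j\|^2=qm_j-2qm_j^2+q^2m_j^3=qm_j\,g(m_j)$ with $g(m)=qm^2-2m+1$, and express the symmetric functions $A_1+A_2$ and $A_1A_2=q^2(m_1m_2)\,g(m_1)g(m_2)$ through $m_1+m_2=1+\mu$ and $m_1m_2=\mu/q$. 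Clearing denominators reduces the claim to the polynomial identity $\mu(q-1)(A_1+A_2)=(q-1+q\mu)\,A_1A_2$, which I would verify by matching the coefficients of $\mu^0,\dots,\mu^3$; the clean intermediate values $x_1x_2=1-q$ and $x_1+x_2=2-q-q\mu$ keep this bookkeeping tractable. This symmetric-function reduction is where essentially all the genuine work lies.

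Finally I would assemble the pieces. The quadratic expectation then contributes $\tfrac{n\mu}{2}\big(\tfrac1\mu+\tfrac q{q-1}\big)=\tfrac n2+\tfrac n2\tfrac q{q-1}\mu$, and combining the constant with the external term via $\tfrac n2\ln 2\pi\sigma^2+\tfrac n2=n\ln\sigma\sqrt{2\pi e}=nh_\sigma$ gives
\begin{equation}
h(\vecX|\hat\vecW)=nh_\sigma+\frac n2\frac q{q-1}\mu+S,
\end{equation}
which is the claimed identity.
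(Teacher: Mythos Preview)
Your proposal is correct and follows essentially the same route as the paper's proof in Appendix~\ref{entapp}: the orthogonal change of variables $\hat\vecz=\hat\Lambda^{(q)}\hat\vecu$, the normalization cancellation via $m_1m_2=\mu/q$, the splitting of $F$ via Lemma~\ref{uvarlemm}, and the evaluation of the quadratic expectation (with the cross term vanishing) are all identical to the paper. The only cosmetic difference is that the identity you single out as the ``main obstacle'', namely $\tfrac{1}{m_1\|\hat x_1\|^2}+\tfrac{1}{m_2\|\hat x_2\|^2}=\tfrac1\mu+\tfrac q{q-1}$, is precisely equation~\eqref{simpli2} (divided by $\mu$), which the paper has already recorded among the Vieta-type simplifications in Appendix~\ref{diagproof}; your symmetric-function verification of it is sound but not separately needed.
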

\begin{proof}
See Appendix \ref{entapp}.
\end{proof}
%------------------------------------------------------------------------
\subsection{Third Change of Variables: Simplification} % Please check intended meaning is retained.
\label{tildevar}
We would like to approximate $S$ given in Equation \eqref{Sdef}. However, our current integration variables $\{\mathbf{u}_k\}_{k=1}^q$ do not give simple formulas for $G_l$, $T$, or $S$. By linearly mixing $\mathbf{u}_{q-1}$ and $\mathbf{u}_q$, we obtain a set of $q$-independent standard normal-distributed variables with the benefit that $G_l$ contains only one of the two newly introduced variables. We define the transformation matrix $W$ as
\begin{equation}
\label{wmat}
W \equiv \dfrac{1}{\sqrt{\dfrac{1}{\mu}+\dfrac{q}{q-1}}}\begin{pmatrix}
\dfrac{1/\sqrt{m_2}}{\sqrt{q-1+x_2^2}} & \dfrac{1/\sqrt{m_1}}{\sqrt{q-1+x_1^2}} \\
\dfrac{1/\sqrt{m_1}}{\sqrt{q-1+x_1^2}} & -\dfrac{1/\sqrt{m_2}}{\sqrt{q-1+x_2^2}} \\\end{pmatrix},
\end{equation}
where note that we can use Equation \eqref{simpli2} to show that $\mathrm{det}(W)=-1$.
\begin{lemma}
\label{wlemm}
Let $\mathbf{r}=\sqrt{m_2}\mathbf{u}_{q}$, $\mathbf{s}=\sqrt{m_1}\mathbf{u}_{q-1}$, and $W$ be as given in Equation \eqref{wmat}, then the variables $\tilde{\mathbf{r}}$ and $\tilde{\mathbf{s}}$ given by
\begin{equation}
\begin{pmatrix}\mathbf{r}\\ \mathbf{s}\end{pmatrix}=
W
\begin{pmatrix}\tilde{\mathbf{r}}\\\tilde{\mathbf{s}}\end{pmatrix}
\end{equation}
have independent standard normal distributions, i.e., we can write
\begin{equation}
\begin{aligned}
&\mathbb{E}_u[\cdot]=\mathbb{E}_{u,\tilde{s}, \tilde{r}}[\tilde{\cdot}],
\end{aligned}
\end{equation}
where $[\tilde{\cdot}]$ denotes the transformed expression, and $\mathbb{E}_{u,\tilde{s}, \tilde{r}}[\tilde{\cdot}]$ is taken with regard to the density\\
$\prod_{a=1}^{n}g_{0, 1}(\tilde{r}_a)g_{0, 1}(\tilde{s}_a)\prod_{b=1}^{q-2}g_{0,1}(u_{b, a})$.
\end{lemma}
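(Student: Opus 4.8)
The plan is to verify directly that the linear transformation defined by the matrix $W$ maps the pair $(\tilde{\mathbf{r}},\tilde{\mathbf{s}})$ of standard normals to $(\mathbf{r},\mathbf{s})=(\sqrt{m_2}\mathbf{u}_q,\sqrt{m_1}\mathbf{u}_{q-1})$, which are the variables already known to be independent Gaussians. Recall from the definition of $S$ that $\mathbf{u}_{q-1}$ and $\mathbf{u}_q$ are distributed according to $g_{0,m_1^{-1}}$ and $g_{0,m_2^{-1}}$ respectively (componentwise, for each of the $n$ coordinates). Hence $\mathbf{s}=\sqrt{m_1}\mathbf{u}_{q-1}$ and $\mathbf{r}=\sqrt{m_2}\mathbf{u}_q$ each have unit variance and are mutually independent; that is, $(\mathbf{r},\mathbf{s})$ is a standard Gaussian pair in $\mathbb R^{2n}$. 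The claim therefore reduces to showing that $W$ is orthogonal, since an orthogonal linear image of a standard Gaussian vector is again standard Gaussian with the same (identity) covariance, and the Jacobian $|\det W|=1$ ensures the density transforms correctly.

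First I would confirm that $W$ is orthogonal by computing $W^{\mathrm T}W$ (equivalently $WW^{\mathrm T}$) and checking it equals the identity. Writing $a=\frac{1/\sqrt{m_2}}{\sqrt{q-1+x_2^2}}$ and $b=\frac{1/\sqrt{m_1}}{\sqrt{q-1+x_1^2}}$ and letting $N^{-1}=\frac1\mu+\frac{q}{q-1}$ be the squared prefactor, the columns of $W$ are $N^{1/2}(a,b)^{\mathrm T}$ and $N^{1/2}(b,-a)^{\mathrm T}$. These are automatically orthogonal to each other, so the only thing to verify is the common normalization $N(a^2+b^2)=1$, i.e. that
\begin{equation}
a^2+b^2=\frac{1}{m_2(q-1+x_2^2)}+\frac{1}{m_1(q-1+x_1^2)}=\frac1\mu+\frac{q}{q-1}.
\end{equation}
This is the identity the paper already flags as equation~\eqref{simpli2}, which I would invoke; it also immediately gives $\det W=-N(a^2+b^2)=-1$, consistent with the stated value. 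With orthogonality established, the transformation preserves the standard Gaussian law, so $(\tilde{\mathbf r},\tilde{\mathbf s})$ are independent standard normals exactly when $(\mathbf r,\mathbf s)$ are, and the untouched variables $\{\mathbf u_b\}_{b=1}^{q-2}$ remain standard normal.

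To finish, I would assemble the change-of-variables statement for the expectation. The measure $\mathbb E_u$ factorizes over the $n$ coordinate-slices and over the $q$ eigen-directions; for the $q-2$ directions with $d_k=1$ nothing changes, while for the last two directions the substitution $(\mathbf u_{q-1},\mathbf u_q)\mapsto(\mathbf s,\mathbf r)\mapsto(\tilde{\mathbf s},\tilde{\mathbf r})$ carries the product of densities $g_{0,m_1^{-1}}g_{0,m_2^{-1}}$ through $g_{0,1}(\mathbf s)g_{0,1}(\mathbf r)$ to $g_{0,1}(\tilde{\mathbf s})g_{0,1}(\tilde{\mathbf r})$, with unit Jacobian at each step. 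Applying this to the integrand of $S$ after re-expressing $G_l$ and $T$ in the tilde-variables yields $\mathbb E_u[\cdot]=\mathbb E_{u,\tilde s,\tilde r}[\tilde\cdot]$ as claimed. The main obstacle is purely algebraic: the normalization identity $a^2+b^2=\frac1\mu+\frac{q}{q-1}$ must be checked, which requires substituting the explicit eigenvalue expressions $m_j=\lambda_j+1$ and $x_j=1-qm_j$ from Theorem~\ref{ctheo} and simplifying using the relation between $m_1m_2$, $m_1+m_2$, and $\mu$ coming from the characteristic polynomial of $\hat C^{(q)}$. Everything else is a routine consequence of orthogonal invariance of the Gaussian.
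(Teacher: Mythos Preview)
Your proposal is correct and is essentially the same argument as the paper's: both reduce the claim to the single algebraic identity $\frac{1/m_1}{q-1+x_1^2}+\frac{1/m_2}{q-1+x_2^2}=\frac1\mu+\frac{q}{q-1}$, which is exactly \eqref{simpli2} divided by~$\mu$, and then use that $|\det W|=1$ for the change of measure. The only cosmetic difference is that you package the verification as ``$W$ is orthogonal, hence preserves the standard Gaussian law'', whereas the paper checks the first- and second-order moments of $(r_\alpha,s_\alpha)$ directly; these are equivalent since for a linear image of a Gaussian the covariance is $W W^{\mathrm T}$, and your column computation $N(a^2+b^2)=1$ with automatic column orthogonality is precisely the content of the paper's three moment identities.
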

\begin{proof}
See Appendix \ref{tildproof}.
\end{proof}
As a corollary, $T$ and $G_l$ assume a simpler form in terms of our new variables.
\begin{corollary}
\label{TGintildcorr}
Let $\mathbf{w}_l$ be given by
\begin{equation}
\mathbf{w}_l=\sum_{m=1}^{q-2}\alpha_{m, l} \mathbf{u}_{m},
\end{equation}
then after applying the change of variables in Lemma \ref{wlemm}, $T$ and $G_l$ can be written as
\begin{align}
\label{Gnewv}
&G_l=\mathbf{w}_l \cdot \mathbf{w}_l+2\sqrt{\dfrac{1}{\mu}+\dfrac{q}{q-1}} \mathbf{w}_l \cdot \tilde{\mathbf{r}},\\
\label{Tnewv}
&T=\dfrac{1}{2}\left(1+\mu\dfrac{q}{q-1}\right)^{-1}\left[-\dfrac{q}{q-1}\left(2+\mu\dfrac{q}{q-1}\right)\tilde{\mathbf{r}}\cdot \tilde{\mathbf{r}}+\dfrac{q}{q-1}\tilde{\mathbf{s}}\cdot \tilde{\mathbf{s}}+\dfrac{2}{\sqrt{\mu}}\sqrt{\dfrac{q}{q-1}}\tilde{\mathbf{r}}\cdot \tilde{\mathbf{s}}\right].
\end{align}
\end{corollary}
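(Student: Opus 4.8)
The plan is to treat Corollary \ref{TGintildcorr} as a pure substitution: Lemma \ref{wlemm} has already shown that the change of variables preserves the Gaussian expectation, so all that remains is to rewrite the algebraic expressions \eqref{Geq} and \eqref{Teq} for $G_l$ and $T$ in terms of $\tilde{\mathbf{r}}$ and $\tilde{\mathbf{s}}$. First I would record the scalar identities I intend to lean on. Since the $\lambda_j$ are the two roots displayed in Theorem \ref{ctheo}, Vieta's formulas give $m_1+m_2=\mu+1$ and $m_1 m_2=\mu/q$; combined with $x_j=1-qm_j$ these yield $x_1+x_2=2-q(\mu+1)$ and the clean relation $x_1 x_2=1-q$, so that $x_1 x_2-1=-q$. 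Eliminating $m_j$ also shows each $x_j$ solves $x_j^2+(q(\mu+1)-2)x_j+(1-q)=0$, equivalently $q-1+x_j^2=q[(1-\mu)+(q(\mu+1)-2)m_j]$, from which one gets the symmetric product $(q-1+x_1^2)(q-1+x_2^2)=(q-1)[4(q-1)+(q(\mu+1)-2)^2]$. Finally I would invoke \eqref{simpli2}, the same identity that forces $\det W=-1$, in the form $a_1^2+a_2^2=\tfrac1\mu+\tfrac{q}{q-1}$, where $a_j=1/(\sqrt{m_j}\sqrt{q-1+x_j^2})$, together with its one-line consequence $(q-1+x_j^2)a_j^2=1/m_j$.

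For $G_l$ I would substitute $\mathbf{u}_{q-1}=\mathbf{s}/\sqrt{m_1}$, $\mathbf{u}_q=\mathbf{r}/\sqrt{m_2}$ into \eqref{Geq}. The double sum over $m,m'\le q-2$ is exactly $\mathbf{w}_l\cdot\mathbf{w}_l$ by the definition $\mathbf{w}_l=\sum_m\alpha_{m,l}\mathbf{u}_m$, while the two remaining terms collapse to $2a_1\,\mathbf{w}_l\cdot\mathbf{s}+2a_2\,\mathbf{w}_l\cdot\mathbf{r}$. Writing $\mathbf{r},\mathbf{s}$ through \eqref{wmat} as $\mathbf{r}=c(a_2\tilde{\mathbf{r}}+a_1\tilde{\mathbf{s}})$ and $\mathbf{s}=c(a_1\tilde{\mathbf{r}}-a_2\tilde{\mathbf{s}})$ with $c=(\tfrac1\mu+\tfrac{q}{q-1})^{-1/2}$, the cross-shaped sign pattern of $W$ makes the $\tilde{\mathbf{s}}$ contributions cancel identically, leaving $2c(a_1^2+a_2^2)\,\mathbf{w}_l\cdot\tilde{\mathbf{r}}$. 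The sum identity then turns $c(a_1^2+a_2^2)$ into $c^{-1}=\sqrt{\tfrac1\mu+\tfrac{q}{q-1}}$, which is exactly \eqref{Gnewv}. This half is short, and the cancellation of $\tilde{\mathbf{s}}$ is precisely what motivated the particular off-diagonal choice in $W$.

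For $T$ I would view \eqref{Teq}, after the same scalings, as a quadratic form $T=\tfrac12\,(\mathbf{r},\mathbf{s})\,M_T\,(\mathbf{r},\mathbf{s})^{\mathrm T}$ in the two-dimensional index space, where $M_T$ collects the coefficients of \eqref{Teq} after the $\sqrt{m_1},\sqrt{m_2}$ rescalings are absorbed, so that in the new variables $T=\tfrac12\,(\tilde{\mathbf{r}},\tilde{\mathbf{s}})\,W^{\mathrm T}M_T W\,(\tilde{\mathbf{r}},\tilde{\mathbf{s}})^{\mathrm T}$. Computing the three independent entries of $W^{\mathrm T}M_T W$ and simplifying with the identities above gives the claim: the $\tilde{\mathbf{s}}\cdot\tilde{\mathbf{s}}$ entry reduces neatly because it carries the overall factor $a_1^2a_2^2$, for which both $m_1m_2=\mu/q$ and the product $(q-1+x_1^2)(q-1+x_2^2)$ are known; the $\tilde{\mathbf{r}}\cdot\tilde{\mathbf{s}}$ entry collapses once one uses $(q-1+x_j^2)a_j^2=1/m_j$ to get $a_1a_2(1/m_2-1/m_1)$. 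The $\tilde{\mathbf{r}}\cdot\tilde{\mathbf{r}}$ entry is the only genuinely stubborn one: it involves $a_1^4$ and $a_2^4$, and the clean route is to rewrite $\tfrac{x_j^2-1}{2}a_j^4=\tfrac12(\tfrac{a_j^2}{m_j}-q\,a_j^4)$ via $x_j^2-1=(q-1+x_j^2)-q$, so that the quartic pieces assemble into $-\tfrac q2(a_1^2+a_2^2)^2=-\tfrac q2 c^{-4}$ and only the symmetric sum $\tfrac{a_1^2}{m_1}+\tfrac{a_2^2}{m_2}$ remains. Matching all three entries to \eqref{Tnewv} then finishes the proof.

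The main obstacle is therefore not conceptual but the bookkeeping in the $\tilde{\mathbf{r}}\cdot\tilde{\mathbf{r}}$ coefficient of $T$: unlike the other pieces, the individual factors $m_j(q-1+x_j^2)$ do not simplify on their own, so one must express $\tfrac{a_1^2}{m_1}+\tfrac{a_2^2}{m_2}$ through $m_1+m_2$ and $m_1m_2$ and reduce a fairly heavy rational expression. I would guard against sign and normalization slips by verifying the final three coefficients numerically at a convenient point such as $q=2,\ \mu=\tfrac12$ (where $c^{-2}=4$) before committing to the closed forms in \eqref{Gnewv} and \eqref{Tnewv}.
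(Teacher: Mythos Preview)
Your proposal is correct and follows exactly the route the paper takes: the paper's proof is the single sentence ``It follows by direct substitution,'' and what you have written is a careful, explicit execution of that substitution, leaning on the Vieta-type identities \eqref{simpli}--\eqref{simpli2} just as the paper intends. Your shorthand $a_j=1/(\sqrt{m_j}\sqrt{q-1+x_j^2})$ and the observation that $W=c\begin{pmatrix}a_2&a_1\\a_1&-a_2\end{pmatrix}$ make the $G_l$ computation transparent, and your strategy for the $\tilde{\mathbf r}\cdot\tilde{\mathbf r}$ coefficient of $T$ (splitting $x_j^2-1=(q-1+x_j^2)-q$ so the quartic pieces reassemble into $-\tfrac{q}{2}(a_1^2+a_2^2)^2$) is a clean way through the one messy entry.
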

\begin{proof}
It follows by direct substitution.
\end{proof}
In this new {form} %MDPI: Please confirm the format adjustment.
(\ref{Gnewv}) and (\ref{Tnewv}), the powers of $\mu$ are explicitly visible.
The variables $\mathbf{w}_l,\tilde{\mathbf r},\tilde{\mathbf s}$
do not generate any $\mu$-dependence when integrated.
It is clear that $\mu G_l$ and $\mu T$ are of order $\mathcal{O}(\sqrt\mu)$;
hence, we can formally proceed with a Taylor expansion of the $\mathrm{exp}$ function in (\ref{Sdef}).
%============================================================

%%%%%%%%%%%%%%%%%%%%%%%%%%%%%%%%%%%%%%%%%%
\section{Brute Force Expansion}

We develop a series expansion in $\mu$ as follows. First, we expand the exp function in~(\ref{Sdef}), with $G_l$ and $T$ as {defined in} (\ref{Gnewv}) and (\ref{Tnewv}). This yields a complicated series, whose leading order behaviour is $1+{\cal O}(\sqrt\mu)$. Then, we substitute that series into the Taylor expansion of $\ln(1+\varepsilon)$ in order to evaluate~$S$ as a series expansion. Finally, we apply the expectation $\mathbb{E}_{u, \tilde{s}, \tilde{r}}$, which is possible because we only obtain expectations of powers of $\mathbf u,\tilde{\mathbf r},\tilde{\mathbf s}$, which are independent normal-distributed variables. % Please check intended meaning is retained.
Although all steps by themselves are straightforward enough, the result is a {\em huge} amount of bookkeeping. A number of things are worth noting:
\begin{itemize}
\item
What helps in this exercise is that odd powers of $\mathbf w_l$,  $\tilde{\mathbf r}$, and $\tilde{\mathbf s}$
lead to vanishing integrals.
\item
Furthermore, it is clear from the start that the odd powers of $\sqrt\mu$ will disappear in the end.
This can be seen from the fact that a factor $\frac1{\sqrt\mu}$ in $T$ and $G_l$ always
occurs with an isolated $\tilde{\mathbf r}$;
hence, any occurrence of an odd power of $\sqrt\mu$ comes with an odd power of~$\tilde{\mathbf r}$.
\item
Due to the $n$-dimensional inner products that occur in $G_l$ and $T$,
which consist of $n$ independent terms, each power of $\mu$ is associated with a factor~$n$.
Consequently, the power series becomes a series not in $\mu$ but actually in $n\mu$.
For convergence, the product $n\mu$ needs to be sufficiently small.
Fortunately, we are allowed to work under the condition $n\leq q$, as explained in Section \ref{sec:startingPoint}.
\end{itemize}

We performed this brute-force exercise up to order $\mathcal{O}(\mu^2)$.
\begin{theorem}
\label{bruttheo}
The differential entropy $h(\vecX|\hat\vecW)$ up to order $\mathcal{O}(\mu^2)$ is given by
\begin{equation}
\begin{aligned}
\label{bruteq}
h(\vecX|\hat\vecW)=nh_{\sigma}+\dfrac{n}{2}\left(1-\dfrac{1}{q}\right)\mu-\dfrac{n}{2}\left(1-\dfrac{1}{q}\right)\dfrac{nq^{-1}+1}{2q}\mu^2+\mathcal{O}\left(\mu^{3}\right).
\end{aligned}
\end{equation}
\end{theorem}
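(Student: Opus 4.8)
The plan is to expand the quantity $S$ of \eqref{Sdef} as a power series in $\mu$ and substitute it into Theorem~\ref{entrocaltheo}, which already isolates the closed-form part $nh_\sigma+\frac n2\frac{q}{q-1}\mu$; so it suffices to produce $S$ through order $\mu^2$. I would work throughout with the forms of $G_l$ and $T$ from Corollary~\ref{TGintildcorr}, equations \eqref{Gnewv}--\eqref{Tnewv}, in which the independent standard normals $\mathbf{u}_m,\tilde{\vecr},\tilde{\vecs}$ and the explicit powers of $\mu$ are laid bare. The observation that makes everything manageable is that the eigenvector data of Theorem~\ref{ctheo} force $\alpha_{m,q}=0$ and $\sum_l\alpha_{m,l}=0$ for every $m\le q-2$; hence $\mathbf{w}_q=0$ and $\sum_l\mathbf{w}_l=0$. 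In particular the $l=q$ summand collapses to $\exp\{-\mu T\}$, and the would-be leading term $\sum_l\mathbf{w}_l\cdot\tilde{\vecr}$ cancels identically.

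First I would set $\varepsilon:=q^{-1}\sum_l\exp\{-\tfrac\mu2(G_l+2\delta_{q,l}T)\}-1$, so that $S=-\mathbb{E}_{u,\tilde s,\tilde r}\ln(1+\varepsilon)$. Expanding $\mu\sqrt{\mu^{-1}+\tfrac{q}{q-1}}=\sqrt{\mu}\sqrt{1+\mu\tfrac{q}{q-1}}$ and the prefactor $(1+\mu\tfrac{q}{q-1})^{-1}$ in integer powers of $\mu$ shows the exponent is $\mathcal{O}(\sqrt\mu)$, whence $\varepsilon=\mathcal{O}(\sqrt\mu)$. I would then carry out a double Taylor expansion: each $\exp$ is expanded through the fourth power of its $\mathcal{O}(\sqrt\mu)$ argument, and the resulting $\varepsilon$ is fed into $\ln(1+\varepsilon)=\varepsilon-\tfrac12\varepsilon^2+\tfrac13\varepsilon^3-\tfrac14\varepsilon^4+\cdots$, retaining every contribution through order $\mu^2$ (note $\varepsilon^4\sim\mu^2$, so all four powers are needed).

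Next I would apply $\mathbb{E}_{u,\tilde s,\tilde r}$ term by term. Two parity facts prune the list drastically: any monomial of odd total degree in one of the independent Gaussians $\mathbf{u}_m,\tilde{\vecr},\tilde{\vecs}$ integrates to zero; and since every factor $\mu^{-1/2}$ in \eqref{Gnewv}--\eqref{Tnewv} is tied to an isolated $\tilde{\vecr}$, an odd power of $\sqrt\mu$ always carries an odd power of $\tilde{\vecr}$ and so vanishes, confirming that $S$ is a genuine series in integer powers of $\mu$. The surviving expectations are products of second and fourth moments of standard normals, which reduce by Wick's theorem to sums built from $P_{ll'}:=\sum_{m\le q-2}\alpha_{m,l}\alpha_{m,l'}$. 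By orthonormality of the eigenbasis, $P=\delta-\hat\beta_1\hat\beta_1^{\mathrm T}-\hat\beta_2\hat\beta_2^{\mathrm T}$ is the rank-$(q-2)$ projection onto $\mathrm{span}\{\hat\alpha_m\}$, so the required sums collapse to traces, $\sum_l P_{ll}=\mathrm{tr}\,P=q-2$ and $\sum_{l,l'}P_{ll'}^2=\mathrm{tr}\,P^2=q-2$, with $P_{qq}=0$ since $\mathbf{w}_q=0$. The fourth-moment factors $n(n+2)$ (from e.g. $\mathbb{E}[(\tilde{\vecr}\cdot\tilde{\vecr})^2]$) against the $n^2$ coming from squared second moments are precisely what generate the combination $nq^{-1}+1$, consistent with the earlier remark that the true expansion parameter is $n\mu$. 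Collecting the $\mu^0,\mu^1,\mu^2$ pieces of $S$ and adding $\frac n2\frac{q}{q-1}\mu$ then yields \eqref{bruteq}.

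The \emph{main obstacle} is not conceptual but the sheer bookkeeping of the two expansion steps: tracking which products of $\mathbf{w}_l,\tilde{\vecr},\tilde{\vecs}$ survive, and with what combinatorial multiplicity, across $\varepsilon,\varepsilon^2,\varepsilon^3,\varepsilon^4$, and then arranging the $\hat\beta$-dependent contributions to cancel exactly down to the clean coefficient $(1-\tfrac1q)\frac{nq^{-1}+1}{2q}$. A useful internal check, which I would run first, is to reproduce the order-$\mu$ term in isolation: there $\sum_{l\ne q}\mathbf{w}_l=0$ kills the cross contributions and one finds $S\big|_{\mu}=\frac{n\mu}{q-1}\big(\tfrac{1}{2q}-1\big)$, which combines with $\frac n2\frac{q}{q-1}\mu$ to give exactly $\frac n2\big(1-\tfrac1q\big)\mu$, matching \eqref{bruteq}.
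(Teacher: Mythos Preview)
Your approach is essentially the same as the paper's: both proceed by brute-force Taylor expansion of $S$ in powers of $\sqrt\mu$ (expanding the exponentials and then the logarithm), evaluating the surviving Gaussian moments, and substituting into Theorem~\ref{entrocaltheo}. The paper extracts the coefficients $c_1,c_2$ with Mathematica and then computes a long list of explicit moments, whereas you organize the $\alpha$-sums via the projector $P$; but the strategy and the key observations (parity in $\mathbf{u}_m,\tilde{\vecr},\tilde{\vecs}$, the identity $\sum_l\mathbf{w}_l=0$, and the vanishing of half-integer powers of $\mu$) are identical.
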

\begin{proof}
See Appendix \ref{bruteforceproof}.
\end{proof}
Theorem \ref{bruttheo} is not directly comparable with results in the literature. However, we can make use of existing upper bounds in order to assess our expression. The Gaussian upper bound in Equation \eqref{boundstup} is known to be loose, while the bound in \eqref{betterbound} is closer to the value of the differential entropy. For our case, the bound in Equation \eqref{betterbound} reads
\begin{equation}
\label{betterbound2}
h(\vecX|\hat\vecW) \leq nh_{\sigma}+\ln q.
\end{equation}
It is easy to see that for $\mu \leq q^{-1}$ and $n=q$, our second-order Taylor expansion in Equation~\eqref{bruteq} satisfies this bound up to the error produced by the truncation of the series, which is of order~$o(\mu^2)$.
%============================================================
\section{Determinant-Based Approach}
We introduce a less brute-force series expansion that does not expand the exp function but focuses on expanding the logarithm.
This gives a power series in exponential expressions; each of these expressions can be integrated analytically, yielding a matrix determinant. For simplicity of notation we write the variables $\mathbf u_k,\tilde{\mathbf r},\tilde{\mathbf s}$ together as a $q$-component vector $\hat{\mathbf a}$,
\begin{equation}
\label{renamevar}
\hat\veca= (\vecu_1,\ldots,\vecu_{q-2},\tilde\vecr,\tilde\vecs).
\end{equation}
The $S$ is now written as follows:
\begin{equation}
\label{Seq}
S=-{\mathbb E}_{a} \left[\ln q^{-1}\sum_{l=1}^q \exp \left\{-\dfrac{\mu}{2}\left(G_l+2\delta_{q, l}T\right)\right\}\right],
\end{equation}
where we have
\begin{equation}
\label{expectointtild}
\mathbb{E}_{a}\left[\cdots\right]=\left(\sqrt{2\pi}\right)^{-nq}\int \exp\left\{-\dfrac{1}{2}\mathbf{\hat{a}}^{\mathrm{T}}\hat{I} \mathbf{\hat{a}}\right\}  \left[\cdots\right]\mathrm{d}\mathbf{a}_1 \cdots \mathrm{d}\mathbf{a}_q.
\end{equation}
Next, we can write $G_l$ and $T$ in vector-matrix-vector product form:
\begin{equation}
\label{GmatTmat}
\mu G_{l}=\hat{\mathbf{a}}^\mathrm{T} \hat{Q}_l \hat{\mathbf{a}},
\quad\quad\quad
2\mu T=\hat{\mathbf{a}}^\mathrm{T}\hat{P} \hat{\mathbf{a}},
\end{equation}
where the matrix elements of $\hat{Q}_l$ and $\hat{P}$ are given by
\begin{equation}
\label{qmatele}
\begin{aligned}
Q_{l,mm'} = \mu \omega_{m, l}\omega_{m', l}+ \sqrt\mu \sqrt{1
+\mu\frac{q}{q-1}} [ \omega_{m, l} \delta_{m', q-1}  +  \omega_{m', l} \delta_{m, q-1}] ,
\end{aligned}
\end{equation}
where we define
\begin{equation}
\label{omele}
\omega_{m, l} \equiv \begin{cases}\alpha_{m, l}&, m \leq q-2\\
0&, \text{otherwise}\\
\end{cases}
.
\end{equation}
\begin{equation}
\label{pmatele}
\begin{aligned}
P_{m,m'} = \left(1+\mu\frac{q}{q-1}\right)^{-1}
\begin{cases}
\mu\frac{q}{q-1}&, m=m'=q\\
-\mu\frac{q}{q-1}\left(2+\mu\frac{q}{q-1}\right)&, m=m'=q-1\\
\sqrt{\mu}\sqrt{\frac{q}{q-1}}&, m=q, m'=q-1\\
\sqrt{\mu}\sqrt{\frac{q}{q-1}}&, m=q-1, m'=q\\
0&, \text{otherwise}
\end{cases}
.
\end{aligned}
\end{equation}
The $\hat P$ is essentially a $2\times 2$ matrix.
The $\hat Q_\ell$ consists of the projector $\hat \alpha_\ell \hat \alpha_\ell^T$
plus extra entries in row and column~$q-1$.
Let $Z$ be defined as
\begin{equation}
\begin{aligned}
Z &= 1-q^{-1}\sum_{l=1}^q \exp \left\{-\frac{\mu}{2}\left(G_l+2\delta_{q, l}T\right)\right\}\\
&=1-q^{-1}\sum_{l=1}^{q-1} \exp \left\{-\frac{1}{2}\hat{\mathbf{a}}^\mathrm{T} \hat{Q}_l \hat{\mathbf{a}}\right\}-q^{-1}\exp \left\{-\frac{1}{2}\hat{\mathbf{a}}^\mathrm{T} \hat{P} \hat{\mathbf{a}}\right\},
\end{aligned}
\end{equation}
where equality follows from the definitions \eqref{GmatTmat}.
We note that $Z={\cal O}(\sqrt\mu)$.
This allows us to perform an expansion in $Z$, knowing that $Z^k$ does not produce powers of $\mu$
lower than $\mu^{k/2}$. % Please check intended meaning is retained.
Furthermore, the power $\sqrt\mu$ occurs only in the off-diagonal components of $\hat P$
and $\hat Q_\ell$, whereas the diagonal components are~${\cal O}(\mu)$.
As in the brute-force approach, the integration over $\hat\veca$ eliminates half-integer powers of~$\mu$.
Hence, $\EE_a Z^k = {\cal O}(\mu^{\lceil k/2\rceil})$.
In order to obtain the entropy estimation up to and including power
$\mu^t$, we need all contributions up to and including $\EE_a Z^{2t}$. Using $\ln (1-Z)=-\sum_{k=1}^{\infty}k^{-1}Z^k$, one can easily show that $S$~(\ref{Seq}) can be written as
\begin{equation}
S=\sum_{k=1}^{\infty}\mathcal{Z}_k, \quad\quad
{\cal Z}_k = \frac1k \EE_a Z^k.
\end{equation}
This provides a recipe for going to arbitrary order in~$\mu$.
We show the calculation of the $\mu^1$ contribution in $S$.
For this, we need only ${\cal Z}_1$ and ${\cal Z}_2$.
%calculate the contributions needed up to $\mathcal{O}(\mu^2)$ in Lemma \ref{ssereieslemm} by making use of Lemma \ref{gausdetlemm}.
\begin{lemma}
\label{gausdetlemm}
For a positive definite matrix $\hat{V}$, we have

$
\int \exp\left\{-\dfrac{1}{2}\mathbf{\hat{u}}^{\mathrm{T}}\hat{V} \mathbf{\hat{u}}\right\}  \mathrm{d}\mathbf{u}_1 \cdots \mathrm{d}\mathbf{u}_q=\dfrac{(2\pi)^{nq/2}}{\det(\hat{V})^{n/2}}
$.
\normalsize
\end{lemma}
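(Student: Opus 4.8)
The plan is to reduce the $nq$-dimensional integral to $n$ independent copies of a standard $q$-dimensional Gaussian integral, and then to evaluate that standard integral by orthogonal diagonalization. First I would unpack the quadratic form in the exponent. Writing $u_{k,a}$ for the $a$-th Cartesian component ($a=1,\dots,n$) of the $n$-vector $\vecu_k$, and recalling that $\hat{\mathbf u}^{\mathrm T}\hat V\hat{\mathbf u}$ abbreviates $\sum_{i,j=1}^q V_{ij}\,\vecu_i\cdot\vecu_j$ with the $n$-dimensional inner product, I would expand
\begin{equation}
\hat{\mathbf u}^{\mathrm T}\hat V\hat{\mathbf u}=\sum_{i,j=1}^q V_{ij}\sum_{a=1}^n u_{i,a}u_{j,a}=\sum_{a=1}^n\sum_{i,j=1}^q V_{ij}\,u_{i,a}u_{j,a}.
\end{equation}
Because the spatial index $a$ never couples different components, the exponential factorizes and the integral over the $nq$ real variables splits into a product of $n$ identical integrals, one for each $a$, each ranging over the $q$ real variables $(u_{1,a},\dots,u_{q,a})$.

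Next I would evaluate the single $q$-dimensional factor
\begin{equation}
I=\int_{\mathbb R^q}\exp\left\{-\tfrac12\,\mathbf x^{\mathrm T}\hat V\mathbf x\right\}\rd^q x.
\end{equation}
Since only the symmetric part of $\hat V$ enters a quadratic form, I may take $\hat V$ symmetric, and positive definiteness then guarantees a spectral decomposition $\hat V=O\,\mathrm{diag}(\lambda_1,\dots,\lambda_q)\,O^{\mathrm T}$ with an orthogonal $O$ and all $\lambda_k>0$. Substituting $\mathbf x=O\mathbf y$ leaves the Lebesgue measure invariant (the Jacobian is $|\det O|=1$) and diagonalizes the exponent, so $I$ becomes a product of one-dimensional Gaussian integrals $\prod_{k=1}^q\int_{\mathbb R}e^{-\lambda_k y_k^2/2}\,\rd y_k=\prod_{k=1}^q\sqrt{2\pi/\lambda_k}=(2\pi)^{q/2}\bigl(\prod_k\lambda_k\bigr)^{-1/2}=(2\pi)^{q/2}(\det\hat V)^{-1/2}$, using $\det\hat V=\prod_k\lambda_k$.

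Finally I would raise this to the $n$-th power, one factor per spatial dimension, to obtain
\begin{equation}
\int\exp\left\{-\tfrac12\,\hat{\mathbf u}^{\mathrm T}\hat V\hat{\mathbf u}\right\}\rd\vecu_1\cdots\rd\vecu_q=I^{\,n}=\frac{(2\pi)^{nq/2}}{(\det\hat V)^{n/2}},
\end{equation}
which is the claimed identity. There is no genuine obstacle here: the result is the standard multivariate Gaussian normalization, and the only points requiring a word of care are the observation that the $n$ Cartesian directions decouple (so the problem is genuinely $q$-dimensional, not $nq$-dimensional) and the use of positive definiteness to guarantee both convergence of each one-dimensional integral and the existence of the orthogonal diagonalization.
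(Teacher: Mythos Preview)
Your proof is correct and follows essentially the same approach as the paper: both factorize the $nq$-dimensional integral over the $n$ Cartesian directions into $n$ identical $q$-dimensional Gaussian integrals, then invoke the standard normalization $(2\pi)^{q/2}(\det\hat V)^{-1/2}$. The only difference is that you spell out the derivation of that normalization via orthogonal diagonalization, whereas the paper simply quotes it.
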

\begin{proof}
See Appendix \ref{gausdetproof}.
\end{proof}
\begin{lemma}
\label{ssereieslemm}
It holds that
\begin{align}
\label{z1cont}
&\mathcal{Z}_1=1-\sum_{\ell=1}^{q-1}\dfrac{q^{-1}}{\det(\hat{I}+\hat{Q}_\ell)^{n/2}}-\dfrac{q^{-1}}{\det(\hat{I}+\hat{P})^{n/2}},
\\
\nonumber
&\mathcal{Z}_2= \frac12 -\sum_{\ell=1}^{q-1}\frac{q^{-1}}{\det(\hat{I}+\hat{Q}_\ell)^{n/2}}-\frac{q^{-1}}{\det(\hat{I}+\hat{P})^{n/2}}+\sum_{\ell=1}^{q-1}\frac{q^{-2}}{\det(\hat{I}+\hat{Q}_\ell+\hat{P})^{n/2}}\\
\label{z2cont}
&\quad\quad
+\sum_{\ell=1}^{q-1}\sum_{\ell'=1}^{q-1} \frac{(1/2)q^{-2}}{\det(\hat{I}+\hat{Q}_\ell+\hat{Q}_{\ell'})^{n/2}}+\frac{(1/2)q^{-2}}{\det(\hat{I}+2\hat{P})^{n/2}}.
\end{align}
\end{lemma}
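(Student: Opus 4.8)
The plan is to expand each power $Z^k$ appearing in $\mathcal{Z}_k=\frac1k\EE_a Z^k$ into a finite sum of exponentials of quadratic forms, apply linearity of the expectation, and then evaluate every resulting Gaussian integral with Lemma \ref{gausdetlemm}. The single computational engine is the following observation: since the expectation $\EE_a$ in \eqref{expectointtild} already carries the weight $\exp\{-\frac12\hat{\mathbf{a}}^{\mathrm{T}}\hat{I}\hat{\mathbf{a}}\}$, for any symmetric matrix $\hat{V}$ with $\hat{I}+\hat{V}$ positive definite the weight and integrand combine into $\exp\{-\frac12\hat{\mathbf{a}}^{\mathrm{T}}(\hat{I}+\hat{V})\hat{\mathbf{a}}\}$, so that Lemma \ref{gausdetlemm} yields
\begin{equation*}
\EE_a\left[\exp\left\{-\dfrac{1}{2}\hat{\mathbf{a}}^{\mathrm{T}}\hat{V}\hat{\mathbf{a}}\right\}\right]=\det(\hat{I}+\hat{V})^{-n/2}.
\end{equation*}
Every term below is of this form, with $\hat{V}$ a sum of $\hat{Q}_\ell$'s and $\hat{P}$'s.

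For $\mathcal{Z}_1=\EE_a Z$ I would take the expectation of $Z$ summand by summand. The constant $1$ contributes $1$; each $q^{-1}\exp\{-\frac12\hat{\mathbf{a}}^{\mathrm{T}}\hat{Q}_\ell\hat{\mathbf{a}}\}$ contributes $q^{-1}\det(\hat{I}+\hat{Q}_\ell)^{-n/2}$ via the identity above with $\hat{V}=\hat{Q}_\ell$; and the last term contributes $q^{-1}\det(\hat{I}+\hat{P})^{-n/2}$ with $\hat{V}=\hat{P}$. This is precisely \eqref{z1cont}. For $\mathcal{Z}_2=\frac12\EE_a Z^2$ I would write $Z=1-A-B$ with $A=q^{-1}\sum_{\ell=1}^{q-1}\exp\{-\frac12\hat{\mathbf{a}}^{\mathrm{T}}\hat{Q}_\ell\hat{\mathbf{a}}\}$ and $B=q^{-1}\exp\{-\frac12\hat{\mathbf{a}}^{\mathrm{T}}\hat{P}\hat{\mathbf{a}}\}$, so that $Z^2=1-2A-2B+A^2+2AB+B^2$. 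Because a product of two such exponentials is again an exponential whose exponent is the \emph{sum} of the two quadratic forms, the cross terms read $A^2=q^{-2}\sum_{\ell,\ell'}\exp\{-\frac12\hat{\mathbf{a}}^{\mathrm{T}}(\hat{Q}_\ell+\hat{Q}_{\ell'})\hat{\mathbf{a}}\}$, $AB=q^{-2}\sum_\ell\exp\{-\frac12\hat{\mathbf{a}}^{\mathrm{T}}(\hat{Q}_\ell+\hat{P})\hat{\mathbf{a}}\}$, and $B^2=q^{-2}\exp\{-\frac12\hat{\mathbf{a}}^{\mathrm{T}}(2\hat{P})\hat{\mathbf{a}}\}$. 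Applying the identity above to each (with $\hat{V}$ equal to $\hat{Q}_\ell+\hat{Q}_{\ell'}$, $\hat{Q}_\ell+\hat{P}$, and $2\hat{P}$ respectively), multiplying by $\frac12$, and collecting reproduces \eqref{z2cont}: the overall $\frac12$ lands on the $A^2$ and $B^2$ sums, while it exactly cancels the coefficient $2$ of the $AB$ term.

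The computation is mechanical once the Gaussian identity is in hand, so the only genuine point requiring care is verifying the hypothesis of Lemma \ref{gausdetlemm}, namely that each of $\hat{I}+\hat{Q}_\ell$, $\hat{I}+\hat{P}$, $\hat{I}+\hat{Q}_\ell+\hat{Q}_{\ell'}$, $\hat{I}+\hat{Q}_\ell+\hat{P}$, and $\hat{I}+2\hat{P}$ is positive definite. From \eqref{qmatele} and \eqref{pmatele} the entries of $\hat{Q}_\ell$ and $\hat{P}$ are $\mathcal{O}(\sqrt{\mu})$ off the diagonal and $\mathcal{O}(\mu)$ on it, so each of these matrices is a perturbation of $\hat{I}$ that remains positive definite in the small-$\mu$ regime; note in particular that the lone negative diagonal entry $P_{q-1,q-1}$ is $\mathcal{O}(\mu)$, whence $1+P_{q-1,q-1}>0$. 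I therefore expect no conceptual obstacle, and the main risk is purely in the bookkeeping: keeping the prefactors $q^{-1}$, $q^{-2}$ and the binomial coefficients arising from squaring $1-A-B$ correctly matched to their determinant factors.
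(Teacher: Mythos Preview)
Your proposal is correct and follows essentially the same approach as the paper: expand $Z$ and $Z^2$ into sums of exponentials of quadratic forms and evaluate each Gaussian integral via Lemma~\ref{gausdetlemm}. The only cosmetic difference is that the paper organizes the square as $Z^2 = Z - A - B + A^2 + 2AB + B^2$ (i.e.\ $Z^2=Z(1-A-B)$) rather than your $Z^2=1-2A-2B+A^2+2AB+B^2$, and it does not pause to discuss positive definiteness; otherwise the computations are identical.
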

\begin{proof}
See Appendix \ref{Zcalcapp}.
\end{proof}
\begin{proposition}
For $t\in{\mathbb N}$, $\ell'\neq\ell$
\begin{eqnarray}
\det(\hat I+t\hat P) &=& 1-\mu t(t+1) \frac q{q-1},
\\
\det(\hat I+t\hat Q_\ell) &=& 1 - \mu t(t-1)\frac{q-2}{q-1}
-t^2 \mu^2 \frac{q(q-2)}{(q-1)^2},
\\
\det(\hat I+\hat P+\hat Q_\ell) &=& 1-\mu\frac{2q}{q-1}-\mu^2 \frac{4q(q-2)}{(q-1)^2},
\\
\det(\hat I+\hat Q_\ell+\hat Q_{\ell'}) &=&
1+\mu\frac{2}{q-1} -\mu^2 \frac{(3q-1)(q-3)}{(q-1)^2}
-\mu^3    \frac {2q(q-3)}{(q-1)^2}.
\end{eqnarray}
\end{proposition}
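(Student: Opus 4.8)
The plan is to exploit the fact that both $\hat P$ and $\hat Q_\ell$ are supported on a very low-dimensional subspace of ${\mathbb R}^q$, so that each $q\times q$ determinant collapses to one of size at most $3$. Write $\nu=\frac{q}{q-1}$, $c=(1+\mu\nu)^{-1}$ and $b=\sqrt{\mu(1+\mu\nu)}$. From \eqref{pmatele}, $\hat P$ has nonzero entries only in rows/columns $q-1,q$; from \eqref{qmatele}--\eqref{omele}, $\hat Q_\ell=\mu\,\hat\omega_\ell\hat\omega_\ell^{\mathrm T}+b\,(\hat\omega_\ell\hat e^{(q-1)\mathrm T}+\hat e^{(q-1)}\hat\omega_\ell^{\mathrm T})$, where $\hat\omega_\ell$ has entries $\omega_{m,\ell}$ and is supported on indices $1,\dots,q-2$. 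The single tool I would use throughout is Sylvester's identity $\det(\hat I+VAV^{\mathrm T})=\det(I+A\,V^{\mathrm T}V)$: collecting the few vectors $\hat\omega_\ell,\hat\omega_{\ell'},\hat e^{(q-1)},\hat e^{(q)}$ that actually appear into the columns of $V$, each determinant is reduced to one governed entirely by the small core matrix $A$ and the Gram matrix $V^{\mathrm T}V$.

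The scalar inputs feeding this reduction are three inner products. First I would record, from the eigenvalues of Theorem \ref{ctheo} via Vieta, that $m_1+m_2=1+\mu$ and $m_1m_2=\mu/q$, hence $x_1x_2=1-q$ (cf.\ \eqref{simpli2}). A short computation then gives the key identity $\frac{1}{q-1+x_1^2}+\frac{1}{q-1+x_2^2}=\frac1{q-1}$, whose only nontrivial ingredient is $x_1^2x_2^2=(q-1)^2$. Using orthonormality of the rows of $\hat\Lambda^{(q)}$ together with the explicit $\beta_{i,k}$ of Theorem \ref{ctheo}, this yields $\|\hat\omega_\ell\|^2=1-\beta_{1,\ell}^2-\beta_{2,\ell}^2=\frac{q-2}{q-1}$ and, for $\ell\neq\ell'$, $\hat\omega_\ell\cdot\hat\omega_{\ell'}=-\beta_{1,\ell}\beta_{1,\ell'}-\beta_{2,\ell}\beta_{2,\ell'}=-\frac1{q-1}$, while $\hat\omega_\ell\cdot\hat e^{(q-1)}=\hat\omega_\ell\cdot\hat e^{(q)}=0$ since $\hat\omega_\ell$ has no support beyond index $q-2$.

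With these in hand the first three determinants follow mechanically. For $\det(\hat I+t\hat P)$ the surviving $2\times2$ block has determinant $1-t\mu\nu(1+\mu\nu)c-t^2\mu\nu(1+\mu\nu)^2c^2$, which collapses to $1-\mu t(t+1)\nu$ because $c(1+\mu\nu)=1$. For $\det(\hat I+t\hat Q_\ell)$, Sylvester with $V=[\hat\omega_\ell,\hat e^{(q-1)}]$ and $V^{\mathrm T}V=\operatorname{diag}(\|\hat\omega_\ell\|^2,1)$ gives $1+t\mu\|\hat\omega_\ell\|^2-t^2b^2\|\hat\omega_\ell\|^2$; inserting $b^2=\mu(1+\mu\nu)$ and $\|\hat\omega_\ell\|^2=\frac{q-2}{q-1}$ reproduces the stated expression. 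For $\hat I+\hat P+\hat Q_\ell$ the three spanning vectors $\hat\omega_\ell,\hat e^{(q-1)},\hat e^{(q)}$ are mutually orthogonal, so a cofactor expansion of the resulting $3\times3$ matrix factors through the block $\det(\hat I+\hat P)|_{t=1}=1-2\mu\nu$, and the simplification $b^2/(1+\mu\nu)=\mu$ delivers $1-2\mu\nu-4\mu^2\nu\frac{q-2}{q-1}$.

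The genuinely laborious case is $\det(\hat I+\hat Q_\ell+\hat Q_{\ell'})$, which I expect to be the main obstacle, precisely because $\hat\omega_\ell$ and $\hat\omega_{\ell'}$ are \emph{not} orthogonal: the Gram matrix of $V=[\hat\omega_\ell,\hat\omega_{\ell'},\hat e^{(q-1)}]$ carries the off-diagonal entry $\rho=-\frac1{q-1}$, and it is this cross term that generates the solitary $\mu^3$ contribution absent from the other three determinants. I would compute the $3\times3$ determinant $\det(I+A\,V^{\mathrm T}V)$ and factor it as $\bigl[1+\mu(W-\rho)\bigr]\bigl[1-\mu(W+\rho)(1+2\mu\nu)\bigr]$ with $W=\|\hat\omega_\ell\|^2$; the final step is the clean substitution $W-\rho=1$ and $W+\rho=\frac{q-3}{q-1}$, after which expanding the product and collecting powers of $\mu$ yields exactly $1+\mu\frac{2}{q-1}-\mu^2\frac{(3q-1)(q-3)}{(q-1)^2}-\mu^3\frac{2q(q-3)}{(q-1)^2}$. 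The care needed lies entirely in tracking the cancellations in this last expansion and in not dropping the $\mu^3$ term.
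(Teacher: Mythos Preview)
Your proof is correct. The paper states this Proposition without proof, so there is no argument to compare against; your approach via Sylvester's identity $\det(\hat I+VAV^{\mathrm T})=\det(I+A\,V^{\mathrm T}V)$, exploiting the low rank of $\hat P$ and $\hat Q_\ell$, is the natural one.

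One remark on presentation: the linchpin of the whole computation is the pair of Gram entries $\|\hat\omega_\ell\|^2=\frac{q-2}{q-1}$ and $\hat\omega_\ell\cdot\hat\omega_{\ell'}=-\frac1{q-1}$, and these both follow from the single identity
\[
\frac{1}{q-1+x_1^2}+\frac{1}{q-1+x_2^2}=\frac{1}{q-1},
\]
which in turn rests only on $x_1x_2=1-q$ from \eqref{xx}. You derive this correctly; it might be worth stating it as a standalone lemma since everything downstream reduces to it. Your factorization in the fourth case,
\[
\det(\hat I+\hat Q_\ell+\hat Q_{\ell'})=\bigl[1+\mu(W-\rho)\bigr]\bigl[1-\mu(W+\rho)(1+2\mu\nu)\bigr]
\]
with $W-\rho=1$ and $W+\rho=\frac{q-3}{q-1}$, is exactly right and explains transparently why this determinant alone carries a $\mu^3$ term.
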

\begin{corollary}
It holds that
\begin{equation}
{\cal Z}_1 = -\frac n{q-1}\mu +{\cal O}(\mu^2),
\quad\quad\quad
{\cal Z}_2 = \frac {n/2}{q(q-1)}\mu +{\cal O}(\mu^2).
\end{equation}
\end{corollary}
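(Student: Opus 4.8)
The plan is to prove the Corollary by direct substitution: insert the four determinant expansions from the Proposition into the formulas for $\mathcal{Z}_1$ and $\mathcal{Z}_2$ supplied by Lemma \ref{ssereieslemm}, and then expand everything to first order in $\mu$. The only analytic input beyond the determinants is the elementary expansion $(1+c_1\mu+c_2\mu^2+\cdots)^{-n/2}=1-\tfrac n2 c_1\mu+\mathcal{O}(\mu^2)$, which shows that only the \emph{linear} coefficient of each determinant matters at the order we need; any determinant whose expansion starts at $\mu^2$, such as $\det(\hat I+\hat Q_\ell)$ at $t=1$, collapses to $1+\mathcal{O}(\mu^2)$.

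For $\mathcal{Z}_1$ this is immediate. From $\det(\hat I+\hat Q_\ell)=1+\mathcal{O}(\mu^2)$ and $\det(\hat I+\hat P)=1-2\mu\frac{q}{q-1}$ I would get $\det(\hat I+\hat Q_\ell)^{-n/2}=1+\mathcal{O}(\mu^2)$ and $\det(\hat I+\hat P)^{-n/2}=1+n\frac{q}{q-1}\mu+\mathcal{O}(\mu^2)$. Feeding these into $\mathcal{Z}_1=1-\sum_\ell q^{-1}(\cdots)-q^{-1}(\cdots)$, with $q-1$ copies of the first determinant and one of the second, the $\mu^0$ part collapses to $1-\frac{q-1}{q}-\frac1q=0$ and the surviving linear term is $-q^{-1}\cdot n\frac{q}{q-1}\mu=-\frac{n}{q-1}\mu$, as claimed.

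The same mechanism drives $\mathcal{Z}_2$, but with more bookkeeping. The one genuine subtlety is the double sum $\sum_{\ell,\ell'}\det(\hat I+\hat Q_\ell+\hat Q_{\ell'})^{-n/2}$: the Proposition's fourth formula holds only for $\ell'\neq\ell$, so I would split the $(q-1)^2$ terms into the $q-1$ diagonal entries $\det(\hat I+2\hat Q_\ell)$, which must instead be evaluated from the second formula at $t=2$, and the $(q-1)(q-2)$ off-diagonal entries governed by the fourth formula. Expanding each of the six terms of $\mathcal{Z}_2$ to order $\mu$ and collecting the $\mu^0$ coefficients, one checks that they sum to zero; this cancellation is in fact forced by the general bound $\EE_a Z^k=\mathcal{O}(\mu^{\lceil k/2\rceil})$ established earlier, so it doubles as a reassuring consistency check on the arithmetic. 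Collecting the $\mu^1$ coefficients then leaves $\big(-\frac{n}{q-1}+\frac nq+\frac{3n}{2q(q-1)}\big)\mu=\frac{n/2}{q(q-1)}\mu$.

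I expect the main obstacle to be organizational rather than conceptual: keeping the combinatorial multiplicities ($q-1$, $(q-1)(q-2)$, and so on) and the nested prefactors $q^{-1}$, $q^{-2}$, $\tfrac12 q^{-2}$ straight across all terms of $\mathcal{Z}_2$, and correctly routing the diagonal part of the double sum to the $t=2$ determinant rather than to the $\ell'\neq\ell$ formula. Since every determinant is already reduced to an explicit polynomial in $\mu$ by the Proposition, no further integration or linear algebra is required; the whole argument is a finite, mechanical expansion whose only pitfalls are clerical, with the vanishing of the constant terms serving as the built-in check that the bookkeeping is correct.
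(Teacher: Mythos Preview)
Your proposal is correct and is exactly the approach the paper intends: the Corollary is stated without proof precisely because it follows by substituting the Proposition's determinant formulas into Lemma~\ref{ssereieslemm} and expanding $(1+c_1\mu+\cdots)^{-n/2}$ to first order, just as you describe. Your handling of the one genuine subtlety---splitting the double sum $\sum_{\ell,\ell'}$ into the $q-1$ diagonal terms (governed by the $t=2$ case of the second determinant) and the $(q-1)(q-2)$ off-diagonal terms (governed by the fourth determinant)---is correct, and indeed the two linear-in-$\mu$ contributions from that split cancel, leaving only the three $\mu^1$ pieces you list.
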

Using (\ref{hextractS}) it follows that $h(\vecX|\hat\vecW)=nh_\sigma+\frac n2\frac q{q-1}\mu+S$,
with $S\approx {\cal Z}_1+{\cal Z}_2$, yielding
$h(\vecX|\hat\vecW)=nh_\sigma+\frac n2(1-\frac1q)\mu +{\cal O}(\mu^2)$.
This is consistent with the result of the brute-force approach, as in
Theorem~\ref{bruttheo}. In higher orders of $Z$, determinants occur of the form
$\det(\hat I + r \hat P + t \hat Q_\ell + t' \hat Q_{\ell'} + t'' \hat Q_{\ell''}+\cdots )$
for integer $r,t,t',t'',\ldots$.
These are computable but involve a lot of bookkeeping.
Because of the form $[\det(\cdots)]^{-n/2}$ and the fact that each determinant starts with 1,
followed by integer powers of $\mu$,
the leading order coefficient of $\mu^t$ is proportional to $\binom{n/2}{t}\propto n^t$.
Hence, as in the brute force approach,
the expansion in powers of $\mu$ effectively becomes an expansion in powers of~$n\mu$.
\section{Discussion}
Gaussian mixtures occur in the literature either as an approximation to a probability density function or as the actual density of a two-stage stochastic process. Shannon entropy plays an important role in the signal processing, optimization, and analysis of systems. It is important to have (semi-)analytical methods for evaluating the Shannon entropy (differential entropy) of Gaussian mixtures, without lengthy numerical computations in high-dimensional spaces. We have developed an expansion method that has a number of advantages: (i) it yields an entropy estimate rather than merely bounds; (ii) the order of magnitude of the error is known; (iii) it avoids the process of \cite{approx} that splits wide Gaussians into narrow pieces, which leads to a large number of expansion terms and introduces inaccuracies. The inaccuracies in \cite{approx} grow with the number of wide Gaussians in the mixture. \textcolor[rgb]{0.00,0.00,0.00}{The approach in \cite{approx} is unable to exploit the existence of a small parameter $\mu$ when the mixture structure allows it.} \textcolor[rgb]{0.00,0.00,0.00}{In our case, we make sure to leverage the trade-off between the support shared between the components and the variance of the components, which makes our method more efficient when the mixture is not composed of separated clusters. In the case of negligible shared support between the components, an approximation is not needed as the bound in Equation \eqref{betterbound2} becomes exact.} From the second-order result (\ref{bruteq}), it seems that convergence is faster than what we expected. In leading order terms, every power of $\mu$ is accompanied by a power of $n$, which may endanger the radius of convergence. However, in the $\mu^2$ part in (\ref{bruteq}), we see that the additional power of $n$ comes as $n/q$, which is not dangerous since we are allowed to work with $n\leq q$. Setting $n=q$ and $q\gg 1$,~(\ref{bruteq}) reduces to $h(\vecX|\hat\vecW)\approx \frac n2[\ln 2\pi e \sigma^2+\mu-\frac1q \mu^2+{\cal O}(\mu^3)]$, which has fast convergence. In future work, we will try to apply our method for
fixed displacements $\hat\vecw$ and study the convergence at higher orders of~$\mu$.
\subsection*{Acknowledgements}
This work was supported by NWO grant CS.001 (Forwardt) and by the Dutch Groeifonds project Quantum Delta NL KAT-2.
\bibliographystyle{plain}
\bibliography{references}
\appendix
\section{Proof of Lemma \ref{borislemma}}
\label{borisapp}
\begin{equation}
\begin{aligned}
&h(\vecX|\hat\vecW) = - \int \rd\mathbf{w}_1 \cdots \rd\mathbf{w}_q f_{\hat\vecW}(\hat\vecw)
\int \rd\mathbf{x}\; f_{\vecX|\hat\vecW}(\vecx|\hat\vecw) \ln f_{\vecX|\hat\vecW}(\vecx|\hat\vecw)
\\
& =- \frac1q\sum_{j=1}^q \int \rd\mathbf{x}\; \int \rd\mathbf{w}_1 \cdots \rd\mathbf{w}_q \left[\prod_{a=1}^q {g}_{0,s^2}(\vecw_a)\right]
{g}_{0,\sigma^2}(\vecw_j-\vecx) \ln \frac1q \sum_{k=1}^q {g}_{0,\sigma^2}(\vecw_k-\vecx).
\end{aligned}
\end{equation}
\noindent All the integrals are from $-\infty$ to $\infty$. Inside the $x$-integration, we define new integration variables $\vecz_j$ by writing $ \vecw_j =\vecx - s \vecz_j$. Furthermore, we write $\vecx=s \vecc$. Now, we have
\begin{equation}
\nonumber
\begin{aligned}
& h(\vecX|\hat\vecW) =
- s^{qn+n}\frac1q\sum_{j=1}^q \int \rd\mathbf{c}\; \int \rd\mathbf{z}_1 \cdots \rd\mathbf{z}_q
\left[\prod_{a=1}^q {g}_{0,s^2}(s\vecc-s\vecz_a)\right]
{g}_{0,\sigma^2}(s\vecz_j) \ln \frac1q \sum_{k=1}^q {g}_{0,\sigma^2}(s\vecz_k)
\\ &=
- s^{qn+n}\frac1q\sum_{j=1}^q \int \rd\mathbf{z}_1 \cdots \rd\mathbf{z}_q\left\{\int \rd\mathbf{c}\;(2\pi s^2)^{-\frac{qn}2}
e^{-\frac12 \sum_a (\vecc-\vecz_a)^2} \right\}
{g}_{0,\sigma^2}(s\vecz_j) \ln \frac1q \sum_{k=1}^q {g}_{0,\sigma^2}(s\vecz_k)
\\ &=
-s^{n} \frac1q\sum_{j=1}^q \int \rd\mathbf{z}_1 \cdots \rd\mathbf{z}_q \left\{
(2\pi)^{-\frac{qn}2}e^{-\frac12\sum_a \vecz_a^2} \int\rd\mathbf{c}\; e^{-\frac q2\vecc^2 +\vecc\cdot \sum_a\vecz_a}
\right\}
{g}_{0,\sigma^2}(s\vecz_j) \ln \frac1q \sum_{k=1}^q {g}_{0,\sigma^2}(s\vecz_k)
\\ &=
-s^n   \left(\frac{2\pi}{q}\right)^{\frac n2}  (2\pi)^{-\frac{qn}2} \frac1q\sum_{j=1}^q \int \rd\mathbf{z}_1 \cdots \rd\mathbf{z}_q\;
e^{-\frac12\sum_a \vecz_a^2  +\frac1{2q} (\sum_a \vecz_a)^2 }
{g}_{0,\sigma^2}(s\vecz_j) \ln \frac1q \sum_{k=1}^q {g}_{0,\sigma^2}(s\vecz_k)
\end{aligned}
\end{equation}
\begin{equation}
\begin{aligned}
&=
-\left(\frac\mu q\right)^{\frac n2} (2\pi)^{-\frac{qn}2}  \frac1q\sum_{j=1}^q \int \rd\mathbf{z}_1 \cdots \rd\mathbf{z}_q\;
e^{-\frac12\sum_a \vecz_a^2  +\frac1{2q} (\sum_a \vecz_a)^2 -\frac\mu2\vecz_j^2}
\ln \frac1q \sum_{k=1}^q \frac{\exp \left\{-\frac\mu2 \vecz_k^2\right\}}{(2\pi \sigma^2)^{n/2}}
\\ &=
\frac n2\ln 2\pi\sigma^2
-\left(\frac\mu q\right)^{\frac n2} (2\pi)^{-\frac{qn}2}  \frac1q\sum_{j=1}^q \int \rd\mathbf{z}_1 \cdots \rd\mathbf{z}_q\;
e^{-\frac12\sum_a \vecz_a^2  +\frac1{2q} (\sum_a \vecz_a)^2 -\frac\mu2\vecz_j^2}
\ln \frac1q \sum_{k=1}^q e^{-\frac\mu2 \vecz_k^2}.
\end{aligned}
\end{equation}
\normalsize
\section{Proof of Lemma \ref{flemma}}
\label{fexpecapp}
\begin{equation}
\label{expec}
\begin{aligned}
&\int \exp\left\{-\dfrac{1}{2}\sum_j \mathbf{z}_j^2+\dfrac{1}{2q}\sum_{ij}\mathbf{z}_i \cdot \mathbf{z}_j\right\}\dfrac{1}{q}\sum_k\exp\left\{-\dfrac{\mu}{2}\mathbf{z}_k^2 \right\} F(\mathbf{z}_1,\dots,\mathbf{z}_q)\mathrm{d}\mathbf{z}_1 \cdots \mathrm{d}\mathbf{z}_q\\
&=\dfrac{1}{q}\sum_k\int \exp\left\{-\dfrac{1}{2}\sum_{ij} \delta_{ij}\mathbf{z}_i \cdot \mathbf{z}_j+\dfrac{1}{2q}\sum_{ij}\mathbf{z}_i \cdot \mathbf{z}_j-\sum_{ij}\dfrac{\mu}{2}\delta_{ik}\delta_{jk}\mathbf{z}_i \cdot \mathbf{z}_j\right\}F(\mathbf{z}_1,\dots,\mathbf{z}_q) \mathrm{d}\mathbf{z}_1 \cdots \mathrm{d}\mathbf{z}_q\\
&=\dfrac{1}{q}\sum_k\int \exp\left\{-\dfrac{1}{2}\sum_{ij}M^{(k)}_{ij} \mathbf{z}_i \cdot \mathbf{z}_j\right\} F(\mathbf{z}_1,\dots,\mathbf{z}_q)\mathrm{d}\mathbf{z}_1 \cdots \mathrm{d}\mathbf{z}_q\\
&=\dfrac{1}{q}\sum_k\int \exp\left\{-\dfrac{1}{2}\mathbf{\hat{z}}^{\mathrm{T}}\hat{M}^{(k)} \mathbf{\hat{z}}\right\} F(\mathbf{z}_1,\dots,\mathbf{z}_q)\mathrm{d}\mathbf{z}_1 \cdots \mathrm{d}\mathbf{z}_q.
\end{aligned}
\end{equation}
\noindent Note that we have the following:
\begin{equation}
\begin{aligned}
\mathbf{\hat{z}}^{\mathrm{T}}\hat{M}^{(k)} \mathbf{\hat{z}}=\sum_j \mathbf{z}_j^2-\dfrac{1}{q}\sum_{ij}\mathbf{z}_i \cdot \mathbf{z}_j+\mu\mathbf{z}_k^2=\mathbf{\hat{z}}'^{\mathrm{T}}\hat{M}^{(k')} \mathbf{\hat{z}}',
\end{aligned}
\end{equation}
where $\mathbf{\hat{z}}'$ are the following permutation of $\mathbf{\hat{z}}$:
\begin{equation}
\mathbf{{z}}'_i=\begin{cases}\mathbf{{z}}_i&, i\neq k, k'\\
\mathbf{{z}}_k&, i=k'\\
\mathbf{{z}}_{k'}&, i=k
\end{cases}
,
\end{equation}
therefore, for any $F$ that is permutation invariant, we can write
\small
\begin{equation}
\label{symm}
\dfrac{1}{q}\sum_k\int \exp\left\{-\dfrac{1}{2}\mathbf{\hat{z}}^{\mathrm{T}}\hat{M}^{(k)} \mathbf{\hat{z}}\right\} F(\mathbf{z}_1,\dots,\mathbf{z}_q)\mathrm{d}\mathbf{z}_1 \cdots \mathrm{d}\mathbf{z}_q=\int \exp\left\{-\dfrac{1}{2}\mathbf{\hat{z}}^{\mathrm{T}}\hat{M}^{(q)} \mathbf{\hat{z}}\right\} F(\mathbf{z}_1,\dots,\mathbf{z}_q)\mathrm{d}\mathbf{z}_1 \cdots \mathrm{d}\mathbf{z}_q.
\end{equation}
\normalsize
\section{Proof of Theorem \ref{ctheo}}
\label{diagproof}
Suppose some eigenvectors of $\hat{C}^{(q)}$ are of the form $(1,\dots,1,x)^{\mathrm{T}}$, where $x$ is unknown. We now wish to solve the eigenvalue equation:
\begin{equation}
-\dfrac{1}{q}\begin{pmatrix}
1 & 1 & \cdots & 1 \\
1 & 1 & \cdots & 1 \\
\vdots  & \vdots  & \ddots & \vdots  \\
1 & 1 & \cdots & 1-q\mu
\end{pmatrix}\begin{pmatrix}1\\ \vdots\\1\\x\end{pmatrix}=\lambda \begin{pmatrix}1\\ \vdots\\1\\x\end{pmatrix},
\end{equation}
for some eigenvalue $\lambda$. Carrying out the matrix multiplication, we obtain
\begin{equation}
\begin{pmatrix}q-1+x\\ \vdots\\q-1+x\\q-1+(1-q\mu)x\end{pmatrix}=\begin{pmatrix}-\lambda q\\ \vdots\\-\lambda q\\-\lambda q x\end{pmatrix},
\end{equation}
which give the following two equations:
\begin{align}
\label{xequation}
&q-1+x=-\lambda q \Leftrightarrow x=-q(\lambda+1)+1,   \\
&q-1+(1-q\mu)x=-\lambda q x \Leftrightarrow x=-\dfrac{q-1}{\lambda q-q\mu+1},
\end{align}
and by equating, we obtain
\begin{equation}
\begin{aligned}
&q(\lambda+1)-1=\dfrac{q-1}{\lambda q-q\mu+1} \Leftrightarrow q-1=\left(\lambda q-q\mu+1\right)\left(q(\lambda+1)-1\right)\\
&=q^2 \lambda(\lambda+1)-q\lambda-q^2\mu(\lambda+1)+q\mu+q(\lambda+1)-1\\
&=q^2\lambda^2+q^2\lambda-q\lambda-q^2\mu \lambda-q^2\mu+q\mu+q\lambda+q-1.
\end{aligned}
\end{equation}
We have the quadratic equation:
\begin{equation}
q^2\left(\lambda^2+(1-\mu)\lambda+\mu(q^{-1}-1)\right)=0.
\end{equation}
Since $q>0$, we have the solutions:
\begin{align}
\lambda_{1,2}=\dfrac{1}{2}\left((\mu-1)\pm\sqrt{D}\right),
\end{align}
where $D$ is given by
\begin{equation}
D\equiv (\mu -1)^2-4\mu(q^{-1}-1)=\mu^2-2\mu+1-4\mu q^{-1}+4\mu=\mu^2-(4 q^{-1}-2)\mu+1.
\end{equation}
From Equation \eqref{xequation}, obtain the two following values for $x$:
\begin{align}
&x_{1,2}=-qm_{1,2}+1,
\end{align}
where we write $m_1=\lambda_1+1$ and $m_2=\lambda_2+1$, and we define $\hat{x}_1$ and $\hat{x}_2$ by
\begin{align}
\hat{x}_1\equiv(1,\dots,x_1)^{\mathrm{T}},\quad \hat{x}_2\equiv(1,\dots,x_2)^{\mathrm{T}}.
\end{align}
Since $x_1$ and $x_2$ are solutions to a quadratic equation, we have the following simplifications:
\begin{align}\label{simpli}
&m_1 m_2 =\dfrac{\mu}{q},\\
&m_1 + m_2 =\mu +1, \\
&x_1 + x_2 = -q(\mu+1)+2, \\
\label{xx}
&x_1 x_2=1-q,\\
\label{simpli2}
&\dfrac{\mu/m_1}{q-1+x_1^2}+\dfrac{\mu/m_2}{q-1+x_2^2}=1+\mu\dfrac{q}{q-1},   \\
&\dfrac{(x_1^2-1)/m_1}{q-1+x_1^2}+\dfrac{(x_2^2-1)/m_2}{q-1+x_2^2}=\dfrac{q}{1-q},   \\
&\dfrac{(x_1^2-1)(x_2^2-1)-q^2}{(q-1+x_1^2)(q-1+x_2^2)}=\dfrac{1/m_2}{q-1+x_2^2}-\dfrac{1/m_1}{q-1+x_1^2}=\dfrac{1}{1-q},  \\
&\dfrac{x_1/m_1}{q-1+x_1^2}+\dfrac{x_2/m_2}{q-1+x_2^2}=\dfrac{1}{\mu}.
\label{prodsimp}
\end{align}
Note that $\hat{x}_1$ and $\hat{x}_2$ are orthogonal, that is, using Equation \eqref{xx}, we have
\begin{equation}
\begin{aligned}
&\hat{x}_1^{\mathrm{T}}\hat{x}_2=q-1+x_1 x_2=0.
\end{aligned}
\end{equation}
We define the normalized eigenvectors $\hat{\beta}_1$ and $\hat{\beta}_2$ by
\begin{align}
\hat{\beta}_1 \equiv \dfrac{\hat{x}_1}{||\hat{x}_1||},\quad \hat{\beta}_2 \equiv \dfrac{\hat{x}_2}{||\hat{x}_2||}.
\end{align}
We now define the vectors $\{\hat{y}_i\}_{i=1}^{q-2}$ by the following:
\begin{equation}
y_{i, k}=\begin{cases}0&, k>i+1\\1&, k<i+1\\-i&, k=i+1 \end{cases},
\end{equation}
which in matrix notation are given by
\begin{equation}
\begin{aligned}\hat{y}_1=\begin{pmatrix}1\\-1\\0\\\vdots\\0\end{pmatrix}, \hat{y}_2=\begin{pmatrix}1\\1\\-2\\0\\\vdots\\0\end{pmatrix}, \hat{y}_3=\begin{pmatrix}1\\1\\1\\-3\\0\\\vdots\\0\end{pmatrix},\dots, \hat{y}_{q-2}=\begin{pmatrix}1\\1\\\vdots\\1\\-(q-2)\\0\end{pmatrix}.
\end{aligned}
\end{equation}
The length of $\hat{y}_i$ is given by
\begin{equation}
||\hat{y}_i||=\sqrt{i(i+1)}.
\end{equation}
One can check that $\hat{y}_i$ is an eigenvector of $\hat{C}^{(q)}$ with eigenvalue 0. The normalized eigenvectors $\hat{\alpha}_i$ are defined by
\begin{equation}
\hat{\alpha}_i\equiv \dfrac{\hat{y}_i}{||\hat{y}_i||}.
\end{equation}
It can be checked that the set $\{\hat{\alpha}_i\}_{i=1}^{q-2}\cup \{\hat{\beta}_1, \hat{\beta}_2\}$ are orthonormal. Therefore, we can form the orthogonal matrix $\hat{\Lambda}^{(q)}$ with its columns being eigenvectors of $\hat{C}^{(q)}$, that is,
\begin{equation}
\hat{\Lambda}^{(q)}=(\hat{\alpha}_1, \hat{\alpha}_2, \dots, \hat{\alpha}_{q-2}, \hat{\beta}_1, \hat{\beta}_2),
\end{equation}
and we can write
\begin{align}
\left(\hat{\Lambda}^{(q)}\right)^{\mathrm{T}}\hat{C}^{(q)}\hat{\Lambda}^{(q)}=\begin{pmatrix}
0 & 0 & \cdots & 0 &0\\
0 & 0 & \cdots & 0 &0\\
\vdots  & \vdots  & \ddots & \vdots&\vdots  \\
0 & 0 & \cdots & \lambda_1&0\\
0&0&\cdots&0&\lambda_2
\end{pmatrix}.
\end{align}
\section{Proof of Lemma \ref{uvarlemm}}
\label{uvarproof}
We perform the following change of variables:
\begin{equation}
\hat{\mathbf{z}}=\hat{\Lambda}^{(q)}\hat{\mathbf{u}}=\sum_{m=1}^{q-2}  \hat{\alpha}_m \otimes \mathbf{u}_m+ \hat{\beta}_{1}\otimes \mathbf{u}_{q-1} + \hat{\beta}_{2} \otimes \mathbf{u}_{q},
\end{equation}
and we can write
\begin{equation}
\mathbf{z}_l=\sum_{m=1}^{q-2} {\alpha}_{m, l} \mathbf{u}_m+ {\beta}_{1, l} \mathbf{u}_{q-1} + {\beta}_{2, l}  \mathbf{u}_{q}.
\end{equation}
If we take the dot product, we obtain
\begin{equation}
\begin{aligned}
&\mathbf{z}_l \cdot \mathbf{z}_l=\sum_{m=1}^{q-2}\sum_{m'=1}^{q-2} {\alpha}_{m, l}{\alpha}_{m', l} \mathbf{u}_m \cdot \mathbf{u}_{m'}+2\beta_{1, l}\sum_{m=1}^{q-2} {\alpha}_{m, l} \mathbf{u}_m \cdot \mathbf{u}_{q-1}+2\beta_{2, l}\sum_{m=1}^{q-2}  \alpha_{m, l}\mathbf{u}_m \cdot \mathbf{u}_{q}\\
&+\beta_{1, l}^2 \mathbf{u}_{q-1} \cdot \mathbf{u}_{q-1}+2\beta_{1, l}\beta_{2, l} \mathbf{u}_{q-1} \cdot \mathbf{u}_{q}+\beta_{2, l}^2 \mathbf{u}_{q} \cdot \mathbf{u}_{q}.
\end{aligned}
\end{equation}
Note that we can write
\begin{equation}
\beta_{1, l}=\dfrac{1}{||\hat{x}_1||}\left[1+\delta_{q, l}\left(x_1-1\right)\right],\quad \beta_{2, l}=\dfrac{1}{||\hat{x}_2||}\left[1+\delta_{q, l}\left(x_2-1\right)\right],
\end{equation}
which gives
\begin{equation}
\begin{aligned}
&\mathbf{z}_l \cdot \mathbf{z}_l=\sum_{m=1}^{q-2}\sum_{m'=1}^{q-2} {\alpha}_{m, l}{\alpha}_{m', l} \mathbf{u}_m \cdot \mathbf{u}_{m'}+\dfrac{2}{||\hat{x}_1||}\sum_{m=1}^{q-2} {\alpha}_{m, l} \mathbf{u}_m \cdot \mathbf{u}_{q-1}+\dfrac{2}{||\hat{x}_2||}\sum_{m=1}^{q-2}  \alpha_{m, l}\mathbf{u}_m \cdot \mathbf{u}_{q}\\
&+\dfrac{1}{||\hat{x}_1||^2} \mathbf{u}_{q-1} \cdot \mathbf{u}_{q-1}+\dfrac{2}{||\hat{x}_1||||\hat{x}_2||} \mathbf{u}_{q-1} \cdot \mathbf{u}_{q}+\dfrac{1}{||\hat{x}_2||^2} \mathbf{u}_{q} \cdot \mathbf{u}_{q}\\
&+2\delta_{q, l}\dfrac{(x_2-1)}{||\hat{x}_2||}\left(\sum_{m=1}^{q-2}  \alpha_{m, l}\mathbf{u}_m \cdot \mathbf{u}_{q}+\dfrac{\mathbf{u}_{q-1} \cdot \mathbf{u}_{q}}{||\hat{x}_1||}+\dfrac{x_2+1}{2||\hat{x}_2||}\mathbf{u}_{q} \cdot \mathbf{u}_{q}\right)\\
&+2\delta_{q, l}\dfrac{(x_1-1)}{||\hat{x}_1||}\left(\sum_{m=1}^{q-2}  \alpha_{m, l}\mathbf{u}_m \cdot \mathbf{u}_{q-1}+\dfrac{\mathbf{u}_{q-1} \cdot \mathbf{u}_{q}}{||\hat{x}_2||}+\dfrac{x_1+1}{2||\hat{x}_1||}\mathbf{u}_{q-1} \cdot \mathbf{u}_{q-1}\right)\\
&+2\delta_{q, l}\dfrac{(x_1-1)(x_2-1)}{||\hat{x}_1||||\hat{x}_2||}\mathbf{u}_{q-1} \cdot \mathbf{u}_{q},
\end{aligned}
\end{equation}
\noindent and note that $\alpha_{m,q}=0$, so we have
\begin{equation}
\begin{aligned}
&\mathbf{z}_l \cdot \mathbf{z}_l=\sum_{m=1}^{q-2}\sum_{m'=1}^{q-2} {\alpha}_{m, l}{\alpha}_{m', l} \mathbf{u}_m \cdot \mathbf{u}_{m'}+\dfrac{2}{||\hat{x}_1||}\sum_{m=1}^{q-2} {\alpha}_{m, l} \mathbf{u}_m \cdot \mathbf{u}_{q-1}+\dfrac{2}{||\hat{x}_2||}\sum_{m=1}^{q-2}  \alpha_{m, l}\mathbf{u}_m \cdot \mathbf{u}_{q}\\
&+\dfrac{1}{||\hat{x}_1||^2} \mathbf{u}_{q-1} \cdot \mathbf{u}_{q-1}+\dfrac{2}{||\hat{x}_1||||\hat{x}_2||} \mathbf{u}_{q-1} \cdot \mathbf{u}_{q}+\dfrac{1}{||\hat{x}_2||^2} \mathbf{u}_{q} \cdot \mathbf{u}_{q}\\
&+2\delta_{q, l}\dfrac{(x_2-1)}{||\hat{x}_2||}\left(\dfrac{\mathbf{u}_{q-1} \cdot \mathbf{u}_{q}}{||\hat{x}_1||}+\dfrac{x_2+1}{2||\hat{x}_2||}\mathbf{u}_{q} \cdot \mathbf{u}_{q}\right)\\
&+2\delta_{q, l}\dfrac{(x_1-1)}{||\hat{x}_1||}\left(\dfrac{\mathbf{u}_{q-1} \cdot \mathbf{u}_{q}}{||\hat{x}_2||}+\dfrac{x_1+1}{2||\hat{x}_1||}\mathbf{u}_{q-1} \cdot \mathbf{u}_{q-1}\right)\\
&+2\delta_{q, l}\dfrac{(x_1-1)(x_2-1)}{||\hat{x}_1||||\hat{x}_2||}\mathbf{u}_{q-1} \cdot \mathbf{u}_{q}\\
&=G_l+2\delta_{q, l}T+\dfrac{1}{||\hat{x}_1||^2} \mathbf{u}_{q-1} \cdot \mathbf{u}_{q-1}+\dfrac{2}{||\hat{x}_1||||\hat{x}_2||} \mathbf{u}_{q-1} \cdot \mathbf{u}_{q}+\dfrac{1}{||\hat{x}_2||^2} \mathbf{u}_{q} \cdot \mathbf{u}_{q},
\end{aligned}
\end{equation}
\noindent  where we define
\begin{align}
\label{Geqapp}
&G_l\equiv \sum_{m=1}^{q-2}\sum_{m'=1}^{q-2} {\alpha}_{m, l}{\alpha}_{m', l} \mathbf{u}_m \cdot \mathbf{u}_{m'}+\dfrac{2}{||\hat{x}_1||}\sum_{m=1}^{q-2} {\alpha}_{m, l} \mathbf{u}_m \cdot \mathbf{u}_{q-1}+\dfrac{2}{||\hat{x}_2||}\sum_{m=1}^{q-2}  \alpha_{m, l}\mathbf{u}_m \cdot \mathbf{u}_{q},\\
\nonumber
\label{Teqapp}
&T\equiv \dfrac{(x_2-1)}{||\hat{x}_2||}\left(\dfrac{\mathbf{u}_{q-1} \cdot \mathbf{u}_{q}}{||\hat{x}_1||}+\dfrac{x_2+1}{2||\hat{x}_2||}\mathbf{u}_{q} \cdot \mathbf{u}_{q}\right)+\dfrac{(x_1-1)}{||\hat{x}_1||}\left(\dfrac{\mathbf{u}_{q-1} \cdot \mathbf{u}_{q}}{||\hat{x}_2||}+\dfrac{x_1+1}{2||\hat{x}_1||}\mathbf{u}_{q-1} \cdot \mathbf{u}_{q-1}\right)\\
&+\dfrac{(x_1-1)(x_2-1)}{||\hat{x}_1||||\hat{x}_2||}\mathbf{u}_{q-1} \cdot \mathbf{u}_{q},
\end{align}
\noindent  where note that $G_l=0$ if $l=q$, and so we have
\begin{equation}
\label{0cond}
G_l+2\delta_{l, q}T=\begin{cases}G_l&, l\neq q\\
2T&, l=q\end{cases}.
\end{equation}
Rearranging Equation \eqref{Teqapp} further yields the desired expression for $T$. Writing $F$ in terms of the new variables, we obtain
\begin{equation}
\label{fnewvar}
\begin{aligned}
&F(\mathbf{z}_1,\dots,\mathbf{z}_q)=-\dfrac{\mu}{2}\left(\dfrac{1}{||\hat{x}_1||^2} \mathbf{u}_{q-1} \cdot \mathbf{u}_{q-1}+\dfrac{2}{||\hat{x}_1||||\hat{x}_2||} \mathbf{u}_{q-1} \cdot \mathbf{u}_{q}+\dfrac{1}{||\hat{x}_2||^2} \mathbf{u}_{q} \cdot \mathbf{u}_{q}\right)\\
&+\ln \dfrac{1}{q}\sum_l \exp \left\{-\dfrac{\mu}{2}\left(G_l+2\delta_{q, l}T\right)\right\}.
\end{aligned}
\end{equation}
\section{Proof of Theorem \ref{entrocaltheo}}
\label{entapp}
We have $\mathrm{d}\mathbf{z}_1 \cdots \mathrm{d}\mathbf{z}_q=\mathrm{d}\mathbf{u}_1 \cdots \mathrm{d}\mathbf{u}_q$ since $\hat{\Lambda}^{(q)}$ is orthogonal, and we also have
\begin{equation}
\begin{aligned}
\exp\left\{-\dfrac{1}{2}\mathbf{\hat{z}}^{\mathrm{T}}\hat{M}^{(q)} \mathbf{\hat{z}}\right\}= \exp\left\{-\dfrac{1}{2}\mathbf{\hat{u}}^{\mathrm{T}}(\hat{\Lambda}^{(q)})^{\mathrm{T}}\hat{M}^{(q)} \hat{\Lambda}^{(q)}\mathbf{\hat{u}}\right\}=\exp\left\{-\dfrac{1}{2}\mathbf{\hat{u}}^{\mathrm{T}}\hat{D}^{(q)} \mathbf{\hat{u}}\right\}.
\end{aligned}
\end{equation}
It then follows directly from Lemma \ref{uvarlemm} that
\begin{equation}
\label{expec1}
\begin{aligned}
&\int \exp\left\{-\dfrac{1}{2}\mathbf{\hat{z}}^{\mathrm{T}}\hat{M}^{(q)} \mathbf{\hat{z}}\right\} F(\mathbf{z}_1,\dots,\mathbf{z}_q)\mathrm{d}\mathbf{z}_1 \cdots \mathrm{d}\mathbf{z}_q=-\dfrac{\mu}{2}\int \exp\left\{-\dfrac{1}{2}\mathbf{\hat{u}}^{\mathrm{T}}\hat{D}^{(q)} \mathbf{\hat{u}}\right\} \cdot\\ &\left(\dfrac{1}{||\hat{x}_1||^2} \mathbf{u}_{q-1} \cdot \mathbf{u}_{q-1}+\dfrac{1}{||\hat{x}_2||^2} \mathbf{u}_{q} \cdot \mathbf{u}_{q}+\dfrac{2}{||\hat{x}_1||||\hat{x}_2||} \mathbf{u}_{q-1} \cdot \mathbf{u}_{q}\right)\mathrm{d}\mathbf{u}_1 \cdots \mathrm{d}\mathbf{u}_q\\
&+\int  \exp\left\{-\dfrac{1}{2}\mathbf{\hat{u}}^{\mathrm{T}}\hat{D}^{(q)} \mathbf{\hat{u}}\right\} \ln q^{-1}\sum_l \exp \left\{-\dfrac{\mu}{2}\left(G_l+2\delta_{q, l}T\right)\right\}\mathrm{d}\mathbf{u}_1 \cdots \mathrm{d}\mathbf{u}_q.
\end{aligned}
\end{equation}
Note that we have
\begin{equation}
\begin{aligned}
&\mathbf{\hat{u}}^{\mathrm{T}}\hat{D}^{(q)} \mathbf{\hat{u}}=\sum_{k=1}^{q-2} \mathbf{u}_k \cdot \mathbf{u}_k+m_1\mathbf{u}_{q-1} \cdot \mathbf{u}_{q-1}+m_2\mathbf{u}_{q} \cdot \mathbf{u}_{q}\\
&=\sum_{k=1}^{q-2} \sum_{i=1}^{n}u^2_{k, i}+m_1\sum_{i=1}^{n}u^2_{q-1, i}+m_2\sum_{i=1}^{n}u^2_{q, i},
\end{aligned}
\end{equation}
and so we have
\begin{equation}
\label{someq}
\exp\left\{-\dfrac{1}{2}\mathbf{\hat{u}}^{\mathrm{T}}\hat{D}^{(q)} \mathbf{\hat{u}}\right\}=\prod_{i=1}^{n}\prod_{k=1}^{q}\exp\left\{ -\dfrac{1}{2}d_{k}u_{k, i}^2\right\},
\end{equation}
where we define
\begin{equation}
d_{k}\equiv \begin{cases}1&, k \leq q-2\\
m_1&, k=q-1\\
m_2&, k=q\end{cases}.
\end{equation}
Rearranging Equation \eqref{someq}, we have
\begin{equation}
\begin{aligned}
\exp\left\{-\dfrac{1}{2}\mathbf{\hat{u}}^{\mathrm{T}}\hat{D}^{(q)} \mathbf{\hat{u}}\right\}=\prod_{i=1}^{n}\prod_{k=1}^{q}\dfrac{\sqrt{2\pi}}{\sqrt{d_{k}}}\dfrac{\sqrt{d_{k}}}{\sqrt{2\pi}}\exp\left\{ -\dfrac{1}{2}\left(\dfrac{u_{k}}{(1/\sqrt{d_{k}})}\right)^2\right\}=\prod_{i=1}^{n}\prod_{k=1}^{q}\dfrac{\sqrt{2\pi}}{\sqrt{d_{k}}}g_{0, d^{-1}_{k}}(u_{k, i}),
\end{aligned}
\end{equation}
\normalsize
where $g_{\tilde{\mu}, \sigma^2}(x)$ is the Gaussian function with mean $\tilde{\mu}$ and variance $\sigma^2$.
The first term in Equation \eqref{expec1} yields
\begin{equation}
\label{secterm}
\begin{aligned}
&-\dfrac{\mu}{2||\hat{x}_1||^{2}}\int \exp\left\{-\dfrac{1}{2}\mathbf{\hat{u}}^{\mathrm{T}}\hat{D}^{(q)} \mathbf{\hat{u}}\right\} \mathbf{u}_{q-1}\cdot \mathbf{u}_{q-1}\mathrm{d}\mathbf{u}_1 \cdots \mathrm{d}\mathbf{u}_q\\
&=-\dfrac{\mu}{2||\hat{x}_1||^{2}}\sum_{j=1}^{n}\int \exp\left\{-\dfrac{1}{2}\mathbf{\hat{u}}^{\mathrm{T}}\hat{D}^{(q)} \mathbf{\hat{u}}\right\} u^2_{q-1,j}\mathrm{d}\mathbf{u}_1 \cdots \mathrm{d}\mathbf{u}_q\\
&=-\dfrac{\mu}{2||\hat{x}_1||^{2}}\sum_{j=1}^{n}\int u^2_{q-1,j}\prod_{i=1}^{n}\prod_{k=1}^{q}\dfrac{\sqrt{2\pi}}{\sqrt{d_{k}}}g_{0, d^{-1}_{k}}(u_{k, i})\mathrm{d}u_{k, i}=-\dfrac{\mu(\sqrt{2\pi})^{nq}}{2||\hat{x}_1||^{2}} \times\\
&\sum_{j=1}^{n}\left(\int \prod_{(k,i)\neq(q-1,j)}\dfrac{1}{\sqrt{d_{k}}}g_{0, d^{-1}_{k}}(u_{k, i})\mathrm{d}u_{k, i}\right)\left(\dfrac{1}{\sqrt{d_{q-1}}}\int g_{0, d^{-1}_{q-1}}(u_{q-1, j})u^2_{q-1,j}\mathrm{d}u_{q-1, j}\right)\\
&=-\dfrac{\mu(\sqrt{2\pi})^{nq}}{2||\hat{x}_1||^{2}}\sum_{j=1}^{n}\left(\prod_{(k,i)\neq(q-1,j)}\dfrac{1}{\sqrt{d_{k}}}\right)\left(\dfrac{d^{-1}_{q-1}}{\sqrt{d_{q-1}}}\right)\\
&=-\dfrac{\mu(\sqrt{2\pi})^{nq}}{2||\hat{x}_1||^{2}}\sum_{j=1}^{n}\left(\prod_{i=1}^{n}\prod_{k=1}^{q}\dfrac{1}{\sqrt{d_{k}}}\right)d^{-1}_{q-1}
=-\dfrac{\mu(\sqrt{2\pi})^{nq}}{2||\hat{x}_1||^{2}}\sum_{j=1}^{n}\dfrac{1}{\left(\sqrt{m_1 m_2}\right)^n}\dfrac{1}{m_1}\\
&=-\dfrac{n\mu(\sqrt{2\pi})^{nq}}{2||\hat{x}_1||^{2}}\dfrac{1}{\left(\sqrt{(m_1 m_2)}\right)^n}\dfrac{1}{m_1}.
\end{aligned}
\end{equation}
In an analogous calculation, the second term in Equation \eqref{expec1} yields
\begin{equation}
\label{thirdterm}
\begin{aligned}
-\dfrac{\mu}{2||\hat{x}_2||^{2}}\int \exp\left\{-\dfrac{1}{2}\mathbf{\hat{u}}^{\mathrm{T}}\hat{D}^{(q)} \mathbf{\hat{u}}\right\} \mathbf{u}_{q}\cdot \mathbf{u}_{q}\mathrm{d}\mathbf{u}_1 \cdots \mathrm{d}\mathbf{u}_q=-\dfrac{n\mu(\sqrt{2\pi})^{nq}}{2||\hat{x}_2||^{2}}\dfrac{1}{\left(\sqrt{m_1 m_2}\right)^n}\dfrac{1}{m_2}.
\end{aligned}
\end{equation}
The third term in Equation \eqref{expec1} is identically zero since the matrix $\hat{D}^{(q)}$ is diagonal (there is no covariance). From Corollary \ref{entropcoll}, Equations \eqref{simpli},  \eqref{simpli2}, \eqref{secterm} and \eqref{thirdterm},  we have
\begin{equation}
\label{entrop1}
\begin{aligned}
&h(\vecX|\hat\vecW)=n \ln (\sigma\sqrt{2\pi})+\dfrac{n}{2}\left(1+\mu\dfrac{q}{q-1}\right)\\
&-\left(\dfrac{\mu}{q}\right)^{n/2}(\sqrt{2\pi})^{-qn}\int \exp\left\{-\dfrac{1}{2}\mathbf{\hat{u}}^{\mathrm{T}}\hat{D}^{(q)} \mathbf{\hat{u}}\right\}  \ln q^{-1}\sum_l \exp \left\{-\dfrac{\mu}{2}\left(G_l+2\delta_{q, l}T\right)\right\}\mathrm{d}\mathbf{u}_1 \cdots \mathrm{d}\mathbf{u}_q,\\
&=n h_\sigma+\dfrac{n}{2}\dfrac{q}{q-1}\mu-\mathbb{E}_{u}\left[  \ln  q^{-1}\sum_l \exp \left\{-\dfrac{\mu}{2}\left(G_l+2\delta_{q, l}T\right)\right\}\right],
\end{aligned}
\end{equation}
where $\mathbb{E}_u$ is the expectation taken with respect to the density $\prod_{i=1}^{n}\prod_{k=1}^{q}g_{0, d^{-1}_{k}}(u_{k, i})$.
\section{Proof of Lemma \ref{wlemm}}
\label{tildproof}
We have $\mathbf{u}_q = \mathbf{r}/\sqrt{m_2}$, $ \mathbf{u}_{q-1} = \mathbf{s}/\sqrt{m_1}$. This normalizes the variance of $\vecr,\vecs$ to~1. Hence, each component $r_1,\ldots,r_n$, $s_1,\ldots,s_n$ is an independent normal-distributed variable. We show that it is consistent to write ${\binom{\vecr}{\vecs}} = W{\binom{\tilde\vecr}{\tilde\vecs}}$, with $W$ as given in (\ref{wmat}), and all the components $\tilde r_1,\ldots,\tilde r_n$, $\tilde s_1,\ldots,\tilde s_n$ are independent and normally distributed. % Please check intended meaning is retained.
The change of variables can be written component-wise as $r_\alpha = W_{11}\tilde r_\alpha+W_{12}\tilde s_\alpha$, $s_\alpha = W_{21}\tilde r_\alpha+W_{22}\tilde s_\alpha$, for $\alpha\in\{1,\ldots,n\}$. A linear combination of normal-distributed variables has a Gaussian distribution. Furthermore, we have first-order statistics $\EE [r_\alpha] =0$, $\EE [s_\alpha] =0$. The second-order statistics are given by
\begin{eqnarray}
\EE [r_\alpha^2] &=&
W_{11}^2 \EE [\tilde r_\alpha^2] + W_{12}^2 \EE [\tilde s_\alpha^2] =
\frac\mu{1+\mu\frac{q}{q-1}}(  \frac{1/m_2}{q-1+x_2^2} + \frac{1/m_1}{q-1+x_1^2}  )
\stackrel{(\ref{simpli2})}{=} 1,
\\
\EE [s_\alpha^2] &=&
W_{21}^2 \EE [\tilde r_\alpha^2] + W_{22}^2 \EE [\tilde s_\alpha^2] =
\frac\mu{1+\mu\frac{q}{q-1}}(  \frac{1/m_1}{q-1+x_1^2} + \frac{1/m_2}{q-1+x_2^2}  )
\stackrel{(\ref{simpli2})}{=} 1,
\\
\EE [r_\alpha s_\beta] &=&
W_{11} W_{21}\EE [\tilde r_\alpha \tilde r_\beta] + W_{12} W_{22}\EE [\tilde s_\alpha \tilde s_\beta]
= \delta_{\alpha\beta}(W_{11} W_{21} + W_{12} W_{22})=0,
\end{eqnarray}
and $\EE [r_\alpha r_\beta]=\delta_{\alpha\beta}$, $\EE [s_\alpha s_\beta]=\delta_{\alpha\beta}$. The integration measure changes as \begin{equation} {\rm d}r_\alpha {\rm d}s_\alpha = |\det W|  {\rm d}\tilde r_\alpha {\rm d}\tilde s_\alpha,\end{equation} with
\begin{equation}
\det W =
\frac\mu{1+\mu\frac{q}{q-1}}(  -\frac{1/m_1}{q-1+x_1^2} - \frac{1/m_2}{q-1+x_2^2}  )
\stackrel{(\ref{simpli2})}{=} -1.
\end{equation}
We replace the expectation $\EE_{u}$ by $\EE_{u, \tilde{s}, \tilde{r}}$.
\section{Proof of Lemma \ref{gausdetlemm}}
\label{gausdetproof}
\begin{equation}
\label{deteq}
\begin{aligned}
&\int \exp\left\{-\dfrac{1}{2}\mathbf{\hat{u}}^{\mathrm{T}}\hat{V} \mathbf{\hat{u}}\right\}  \mathrm{d}\mathbf{u}_1 \cdots \mathrm{d}\mathbf{u}_q=\int \prod_{k=1}^{n}\exp\left\{-\dfrac{1}{2}\hat{u}^{\mathrm{T}}_{k}\hat{V} \hat{u}_{k}\right\} \mathrm{d}\hat{u}_{k}=\left(\int \exp\left\{-\dfrac{1}{2}\hat{u}^{\mathrm{T}}\hat{V} \hat{u}\right\} \mathrm{d}\hat{u}\right)^n\\
&=\left(\sqrt{(2\pi)^q\det(\hat{V}^{-1})}\right)^n\left(\left(\sqrt{(2\pi)^q\det(\hat{V}^{-1})}\right)^{-1}\int \exp\left\{-\dfrac{1}{2}\hat{u}^{\mathrm{T}}\hat{V} \hat{u}\right\} \mathrm{d}\hat{u}\right)^n=\dfrac{(2\pi)^{nq/2}}{\det(\hat{V})^{n/2}}.
\end{aligned}
\end{equation}
\normalsize
\section{Proof of Lemma \ref{ssereieslemm}}
\label{Zcalcapp}
\begin{equation}
\begin{aligned}
\left(\sqrt{2\pi}\right)^{-nq}\int \exp\left\{-\dfrac{1}{2}{\hat{\mathbf{a}}}^{\mathrm{T}}\hat{I} \hat{\mathbf{a}}\right\}  Z\mathrm{d}{\mathbf{a}}_1 \cdots \mathrm{d}{\mathbf{a}}_q=\dfrac{1}{\det(\hat{I})^{n/2}}-\sum_{l=1}^{q-1}\dfrac{q^{-1}}{\det(\hat{I}+\hat{Q}_l)^{n/2}}-\dfrac{q^{-1}}{\det(\hat{I}+\hat{P})^{n/2}}.
\end{aligned}
\end{equation}
\normalsize
\noindent For the $Z^2$ contribution, we have
\begin{equation}
\begin{aligned}
&Z^2=Z-q^{-1}\sum_{l=1}^{q-1} \exp \left\{-\dfrac{1}{2}\hat{\mathbf{a}}^\mathrm{T} \hat{Q}_l \hat{\mathbf{a}}\right\}-q^{-1}\exp \left\{-\dfrac{1}{2}\hat{\mathbf{a}}^\mathrm{T} \hat{P} \hat{\mathbf{a}}\right\}+2q^{-2}\sum_{l=1}^{q-1} \exp \left\{-\dfrac{1}{2}\hat{\mathbf{a}}^\mathrm{T} (\hat{Q}_l+\hat{P}) \hat{\mathbf{a}}\right\}\\
&+q^{-2}\sum_{l=1}^{q-1}\sum_{l'=1}^{q-1}\exp \left\{-\dfrac{1}{2}\hat{\mathbf{a}}^\mathrm{T} (\hat{Q}_l+\hat{Q}_{l'}) \hat{\mathbf{a}}\right\}+q^{-2}\exp \left\{-\dfrac{1}{2}\hat{\mathbf{a}}^\mathrm{T} 2\hat{P} \hat{\mathbf{a}}\right\},
\end{aligned}
\end{equation}
\normalsize
and so we have
\begin{equation}
\begin{aligned}
&\dfrac{1}{2}\left(\sqrt{2\pi}\right)^{-nq}\int \exp\left\{-\dfrac{1}{2}{\hat{\mathbf{a}}}^{\mathrm{T}}\hat{I} \hat{\mathbf{a}}\right\}  Z^2\mathrm{d}{\mathbf{a}}_1 \cdots \mathrm{d}{\mathbf{a}}_q=\dfrac{(1/2)}{\det(\hat{I})^{n/2}}-\sum_{l=1}^{q-1}\dfrac{q^{-1}}{\det(\hat{I}+\hat{Q}_l)^{n/2}}-\dfrac{q^{-1}}{\det(\hat{I}+\hat{P})^{n/2}}\\
&+\sum_{l=1}^{q-1}\dfrac{q^{-2}}{\det(\hat{I}+\hat{Q}_l+\hat{P})^{n/2}}+\sum_{l=1}^{q-1}\sum_{l'=1}^{q-1} \dfrac{(1/2)q^{-2}}{\det(\hat{I}+\hat{Q}_l+\hat{Q}_{l'})^{n/2}}+\dfrac{(1/2)q^{-2}}{\det(\hat{I}+2\hat{P})^{n/2}}.
\end{aligned}
\end{equation}
\normalsize
\section{Proof of Theorem \ref{bruttheo}}
\label{bruteforceproof}
For the purpose of this calculation, we rename our expressions for simplicity as the following:
\begin{equation}
\begin{aligned}
&aa \leftarrow \tilde{\mathbf{r}}\cdot \tilde{\mathbf{r}}, bb \leftarrow \tilde{\mathbf{s}}\cdot \tilde{\mathbf{s}}, ab\leftarrow\tilde{\mathbf{r}}\cdot \tilde{\mathbf{s}}, A \leftarrow  \sum_{l=1}^{q-1}\mathbf{w}_l \cdot \mathbf{w}_l, B \leftarrow \sum_{l=1}^{q-1}(\mathbf{w}_l \cdot \tilde{\mathbf{r}})^2, D \leftarrow \sum_{l=1}^{q-1}(\mathbf{w}_l \cdot \mathbf{w}_l)(\mathbf{w}_l \cdot \tilde{\mathbf{r}}),\\
&F \leftarrow \sum_{l=1}^{q-1}(\mathbf{w}_l \cdot \mathbf{w}_l)^2,
H \leftarrow \sum_{l=1}^{q-1}(\mathbf{w}_l \cdot \mathbf{w}_l)(\mathbf{w}_l \cdot \tilde{\mathbf{r}})^2,
M \leftarrow \sum_{l=1}^{q-1}(\mathbf{w}_l \cdot \tilde{\mathbf{r}})^3,
J \leftarrow \sum_{l=1}^{q-1}(\mathbf{w}_l \cdot \tilde{\mathbf{r}})^4.
\end{aligned}
\end{equation}
\normalsize
Using Equation \eqref{Sdef} and Lemma \ref{wlemm}, we write the Taylor series for $S$ as the following:
\begin{align}
&\ln q^{-1}\sum_l \exp \left\{-\dfrac{\mu}{2}\left(G_l+2\delta_{q, l}T\right)\right\}=c_{1/2}\sqrt{\mu}+c_{1}\mu+c_{3/2}\mu^{3/2}+c_{2}\mu^2+\mathcal{O}\left(\mu^{5/2}\right),\\
&S=-\mathbb{E}_{u,\tilde{s},\tilde{r}}[c_{1/2}]\sqrt{\mu}-\mathbb{E}_{u,\tilde{s},\tilde{r}}[c_{1}]\mu-\mathbb{E}_{u,\tilde{s},\tilde{r}}[c_{3/2}]\mu^{3/2}-\mathbb{E}_{u,\tilde{s},\tilde{r}}[c_{2}]\mu^2+\mathcal{O}\left(\mu^{3}\right),
\end{align}
where we use Mathematica to obtain the following expressions for $c_1$ and $c_2$:
\begin{align}
&2q(q-1)c_1=-(q-1)A+(q-1)B+2qaa-qbb+(q-1)ab^2,\\
\nonumber
&24 q^2 (q-1)^2 c_2=-3(q-1)^2A^2-3B^2+12 q aa B+ 6qB^2-6q bb B+3q F-6qH+qJ-12q^2 aa^2\\
\nonumber
&-12 q^2 B -12 q^2 aa B
-3 q^2 B^2+12 q^2 aa bb + 6 q^2 bbB-3q^2 bb^2-6q^2F+12q^2H-2q^2 J-12 q^3 aa\\
\nonumber
&+12 q^3 aa^2+12q^3 B+12q^3 bb-12 q^3 aa bb+3q^3 bb^2+3q^3 F-6q^3 H+q^3 J\\
\nonumber
&+6(q-1)[(q-2)ab^2 A+(q-1)AB+2q aa A-q bb A]+(q^3-7q^2+12q-6)ab^4\\
&-6(q-1)[(q-2)ab^2 B-2q(q-2)ab^2 aa+q(q-2)ab^2 bb+4q^2 ab^2].
\end{align}
\normalsize
Furthermore, the expression for $c_{1/2}$ and $c_{3/2}$ do not contribute to $S$ since they vanish under the expectation. We now analytically compute  all the relevant terms needed.
\begin{equation}
\begin{aligned}
\mathbb{E}_{\{u, \tilde{s}, \tilde{r}\}}\left[aa\right]=\mathbb{E}_{\{\tilde{r}\}}\left[\tilde{\mathbf{r}} \cdot \tilde{\mathbf{r}}\right]=n,\quad \mathbb{E}_{\{u, \tilde{s}, \tilde{r}\}}\left[bb\right]=\mathbb{E}_{\{\tilde{s}\}}\left[\tilde{\mathbf{s}} \cdot \tilde{\mathbf{s}}\right]=n,
\end{aligned}
\end{equation}
\begin{equation}
\begin{aligned}
\mathbb{E}_{\{u, \tilde{s}, \tilde{r}\}}\left[aa^2\right]=\mathbb{E}_{\{\tilde{r}\}}\left[(\tilde{\mathbf{r}} \cdot \tilde{\mathbf{r}})^2\right]=n(n+2).
\end{aligned}
\end{equation}
\begin{equation}
\begin{aligned}
\mathbb{E}_{\{u, \tilde{s}, \tilde{r}\}}\left[bb^2\right]=\mathbb{E}_{\{\tilde{s}\}}\left[(\tilde{\mathbf{s}} \cdot \tilde{\mathbf{s}})^2\right]=n(n+2).
\end{aligned}
\end{equation}
\begin{equation}
\begin{aligned}
\mathbb{E}_{\{u, \tilde{s}, \tilde{r}\}}\left[aabb\right]=\mathbb{E}_{\{\tilde{s}, \tilde{r}\}}\left[(\tilde{\mathbf{r}} \cdot \tilde{\mathbf{r}})(\tilde{\mathbf{s}}\cdot \tilde{\mathbf{s}})\right]=n^2.
\end{aligned}
\end{equation}
\begin{equation}
\begin{aligned}
\mathbb{E}_{\{u, \tilde{s}, \tilde{r}\}}\left[ab^2\right]=\mathbb{E}_{\{\tilde{s}, \tilde{r}\}}\left[(\tilde{\mathbf{r}}\cdot \tilde{\mathbf{s}})^2\right]=n.
\end{aligned}
\end{equation}
\begin{equation}
\begin{aligned}
&\mathbb{E}_{\{u, \tilde{s}, \tilde{r}\}}\left[ab^2 aa\right]=\mathbb{E}_{\{\tilde{s}, \tilde{r}\}}\left[(\tilde{\mathbf{r}}\cdot \tilde{\mathbf{s}})^2(\tilde{\mathbf{r}}\cdot \tilde{\mathbf{r}})\right]=\sum_{i=1}^{n}\sum_{j=1}^{n}\sum_{k=1}^{n}\mathbb{E}_{\{\tilde{s}, \tilde{r}\}}\left[\tilde{r}_{i}\tilde{r}_{j}\tilde{r}_{k}\tilde{r}_{k}\tilde{s}_{i}\tilde{s}_{j}\right]\\
&=\sum_{i=1}^{n}\sum_{j=1}^{n}\sum_{k=1}^{n}\mathbb{E}_{\{\tilde{r}\}}\left[\tilde{r}_{i}\tilde{r}_{j}\tilde{r}_{k}\tilde{r}_{k}\right]\mathbb{E}_{\{\tilde{s}\}}\left[\tilde{s}_{i}\tilde{s}_{j}\right]=\sum_{i=1}^{n}\sum_{j=1}^{n}\sum_{k=1}^{n}\mathbb{E}^{W}_{\{\tilde{r}\}}\left[\tilde{r}_{i}\tilde{r}_{j}\tilde{r}_{k}\tilde{r}_{k}\right]\delta_{i, j}=\mathbb{E}_{\{\tilde{r}\}}\left[(\tilde{\mathbf{r}} \cdot \tilde{\mathbf{r}})^2\right]=n(n+2).
\end{aligned}
\end{equation}

\begin{equation}
\begin{aligned}
&\mathbb{E}_{\{u, \tilde{s}, \tilde{r}\}}\left[ab^2 bb\right]=\mathbb{E}_{\{\tilde{s}, \tilde{r}\}}\left[(\tilde{\mathbf{r}}\cdot \tilde{\mathbf{s}})^2(\tilde{\mathbf{s}}\cdot \tilde{\mathbf{s}})\right]=\sum_{i=1}^{n}\sum_{j=1}^{n}\sum_{k=1}^{n}\mathbb{E}_{\{\tilde{s}, \tilde{r}\}}\left[\tilde{r}_{i}\tilde{r}_{j}\tilde{s}_{k}\tilde{s}_{k}\tilde{s}_{i}\tilde{s}_{j}\right]\\
&=\sum_{i=1}^{n}\sum_{j=1}^{n}\sum_{k=1}^{n}\mathbb{E}_{\{\tilde{r}\}}\left[\tilde{r}_{i}\tilde{r}_{j}\right]\mathbb{E}_{\{\tilde{s}\}}\left[\tilde{s}_{i}\tilde{s}_{j}\tilde{s}_{k}\tilde{s}_{k}\right]=\sum_{i=1}^{n}\sum_{j=1}^{n}\sum_{k=1}^{n}\mathbb{E}_{\{\tilde{s}\}}\left[\tilde{s}_{i}\tilde{s}_{j}\tilde{s}_{k}\tilde{s}_{k}\right]\delta_{i, j}=\mathbb{E}_{\{\tilde{s}\}}\left[(\tilde{\mathbf{s}} \cdot \tilde{\mathbf{s}})^2\right]=n(n+2).
\end{aligned}
\end{equation}
\small
\begin{equation}
\begin{aligned}
&\mathbb{E}_{\{u, \tilde{s}, \tilde{r}\}}\left[ab^4\right]=\mathbb{E}_{\{\tilde{s}, \tilde{r}\}}\left[(\tilde{\mathbf{r}}\cdot \tilde{\mathbf{s}})^4\right]=\sum_{i=1}^n\sum_{j=1}^n\sum_{k=1}^n\sum_{z=1}^n\mathbb{E}_{\{\tilde{s}, \tilde{r}\}}\left[\tilde{r}_{i}\tilde{r}_{j}\tilde{r}_{k}\tilde{r}_{z}\tilde{s}_{i}\tilde{s}_{j}\tilde{s}_{k}\tilde{s}_{z}\right]\\
&=\sum_{i=1}^n\sum_{j=1}^n\sum_{k=1}^n\sum_{z=1}^n\mathbb{E}_{\{\tilde{r}\}}\left[\tilde{r}_{i}\tilde{r}_{j}\tilde{r}_{k}\tilde{r}_{z}\right]\mathbb{E}_{\{\tilde{s}\}}\left[\tilde{s}_{i}\tilde{s}_{j}\tilde{s}_{k}\tilde{s}_{z}\right]\!\!=\!\!\sum_{i=1}^n\sum_{j=1}^n\sum_{k=1}^n\sum_{z=1}^n\mathbb{E}_{\{\tilde{r}\}}\left[\tilde{r}_{i}\tilde{r}_{j}\tilde{r}_{k}\tilde{r}_{z}\right]\left(\delta_{i, j}\delta_{k, z}+\delta_{i, k}\delta_{z, j}+\delta_{i, z}\delta_{k, j}\right)\\
&=\sum_{i=1}^n\sum_{k=1}^n\mathbb{E}_{\{\tilde{r}\}}\left[\tilde{r}_{i}\tilde{r}_{i}\tilde{r}_{k}\tilde{r}_{k}\right]+\sum_{i=1}^n\sum_{j=1}^n\mathbb{E}_{\{\tilde{r}\}}\left[\tilde{r}_{i}\tilde{r}_{j}\tilde{r}_{i}\tilde{r}_{j}\right]+\sum_{i=1}^n\sum_{j=1}^n\mathbb{E}_{\{\tilde{r}\}}\left[\tilde{r}_{i}\tilde{r}_{j}\tilde{r}_{j}\tilde{r}_{i}\right]=3\mathbb{E}_{\{\tilde{r}\}}\left[(\tilde{\mathbf{r}} \cdot \tilde{\mathbf{r}})^2\right]=3n(n+2).
\end{aligned}
\end{equation}
\normalsize
\begin{equation}
\begin{aligned}
&\mathbb{E}_{\{u, \tilde{s}, \tilde{r}\}}\left[A\right]=\sum_{l=1}^{q-1}\mathbb{E}_{\{u, \tilde{s}, \tilde{r}\}}\left[\mathbf{w}_l \cdot \mathbf{w}_l\right]=\sum_{l=1}^{q-1}\sum_{m=1}^{q-2}\sum_{m'=1}^{q-2}\alpha_{m, l}\alpha_{m', l}\mathbb{E}_{\{u, \tilde{s}, \tilde{r}\}}\left[\mathbf{u}_m \cdot \mathbf{u}_{m'}\right]\\
&=\sum_{l=1}^{q-1}\sum_{m=1}^{q-2}\sum_{m'=1}^{q-2}\alpha_{m, l}\alpha_{m', l}n \delta_{m, m'}=n\sum_{l=1}^{q-1} \sum_{m=1}^{q-2}\alpha^2_{m, l}=n  \sum_{m=1}^{q-2}\sum_{l=1}^{q}\alpha^2_{m, l}=n(q-2).
\end{aligned}
\end{equation}
\begin{equation}
\begin{aligned}
&\mathbb{E}_{\{u, \tilde{s}, \tilde{r}\}}\left[B\right]=\sum_{l=1}^{q-1}\mathbb{E}_{\{u, \tilde{s}, \tilde{r}\}}\left[(\mathbf{w}_l \cdot \tilde{\mathbf{r}})^2\right]=\sum_{l=1}^{q-1}\sum_{m=1}^{q-2}\sum_{m'=1}^{q-2}\alpha_{m, l}\alpha_{m', l}\mathbb{E}_{\{u, \tilde{s}, \tilde{r}\}}\left[(\mathbf{u}_m \cdot\tilde{\mathbf{r}}) (\mathbf{u}_{m'}\cdot \tilde{\mathbf{r}})\right]\\
&=\sum_{l=1}^{q-1}\sum_{m=1}^{q-2}\sum_{m'=1}^{q-2}\alpha_{m, l}\alpha_{m', l}n \delta_{m, m'}=n\sum_{l=1}^{q-1} \sum_{m=1}^{q-2}\alpha^2_{m, l}=n  \sum_{m=1}^{q-2}\sum_{l=1}^{q}\alpha^2_{m, l}=n(q-2).
\end{aligned}
\end{equation}
\begin{equation}
\begin{aligned}
&\mathbb{E}_{\{u, \tilde{s}, \tilde{r}\}}\left[A^2\right]=\mathbb{E}_{\{u, \tilde{s}, \tilde{r}\}}\left[\left(\sum_{l=1}^{q-1}\mathbf{w}_l \cdot \mathbf{w}_l\right)^2\right]\\
&=\sum_{l=1}^{q-1}\sum_{l'=1}^{q-1}\sum_{r=1}^{q-2}\sum_{r'=1}^{q-2}\sum_{m=1}^{q-2}\sum_{m'=1}^{q-2}\alpha_{m, l}\alpha_{m', l}\alpha_{r, l'}\alpha_{r', l'}\mathbb{E}_{\{u, \tilde{s}, \tilde{r}\}}\left[(\mathbf{u}_{m} \cdot \mathbf{u}_{m'})(\mathbf{u}_{r} \cdot \mathbf{u}_{r'})\right]\\
&=\sum_{r=1}^{q-2}\sum_{r'=1}^{q-2}\sum_{m=1}^{q-2}\sum_{m'=1}^{q-2}\left(\sum_{l=1}^{q-1}\alpha_{m, l}\alpha_{m', l}\right)\left(\sum_{l'=1}^{q-1}\alpha_{r, l'}\alpha_{r', l'}\right)\mathbb{E}_{\{u, \tilde{s}, \tilde{r}\}}\left[(\mathbf{u}_{m} \cdot \mathbf{u}_{m'})(\mathbf{u}_{r} \cdot \mathbf{u}_{r'})\right]\\
&=\sum_{r=1}^{q-2}\sum_{r'=1}^{q-2}\sum_{m=1}^{q-2}\sum_{m'=1}^{q-2}\delta_{m, m'}\delta_{r, r'}\mathbb{E}_{\{u, \tilde{s}, \tilde{r}\}}\left[(\mathbf{u}_{m} \cdot \mathbf{u}_{m'})(\mathbf{u}_{r} \cdot \mathbf{u}_{r'})\right]\\
&=\sum_{r=1}^{q-2}\sum_{m=1}^{q-2}\mathbb{E}_{\{u, \tilde{s}, \tilde{r}\}}\left[(\mathbf{u}_{m} \cdot \mathbf{u}_{m})(\mathbf{u}_{r} \cdot \mathbf{u}_{r})\right]=\sum_{r=1}^{q-2}\sum_{m=1}^{q-2}\left(n^2+2n \delta_{m, r}\right)=n^2 (q-2)^2+2n(q-2).
\end{aligned}
\end{equation}
\begin{equation}
\nonumber
\begin{aligned}
&\mathbb{E}_{\{u, \tilde{s}, \tilde{r}\}}\left[B^2\right]=\mathbb{E}_{\{u, \tilde{s}, \tilde{r}\}}\left[\left(\sum_{l=1}^{q-1}(\mathbf{w}_l \cdot \tilde{\mathbf{r}})^2\right)^2\right]=\sum_{l=1}^{q-1}\sum_{l'=1}^{q-1}\mathbb{E}_{\{u, \tilde{s}, \tilde{r}\}}\left[(\mathbf{w}_l \cdot \tilde{\mathbf{r}})^2(\mathbf{w}_{l'} \cdot \tilde{\mathbf{r}})^2\right]\\
&=\sum_{l=1}^{q-1}\sum_{l'=1}^{q-1}\sum_{r=1}^{q-2}\sum_{r'=1}^{q-2}\sum_{m=1}^{q-2}\sum_{m'=1}^{q-2}\alpha_{m, l}\alpha_{m', l}\alpha_{r, l'}\alpha_{r', l'}\mathbb{E}_{\{u, \tilde{s}, \tilde{r}\}}\left[(\mathbf{u}_{m} \cdot \tilde{\mathbf{r}})(\mathbf{u}_{m'} \cdot \tilde{\mathbf{r}})(\mathbf{u}_{r} \cdot \tilde{\mathbf{r}})(\mathbf{u}_{r'} \cdot \tilde{\mathbf{r}})\right]\\
&=\sum_{r=1}^{q-2}\sum_{r'=1}^{q-2}\sum_{m=1}^{q-2}\sum_{m'=1}^{q-2}\left(\sum_{l=1}^{q-1}\alpha_{m, l}\alpha_{m', l}\right)\left(\sum_{l'=1}^{q-1}\alpha_{r, l'}\alpha_{r', l'}\right)\mathbb{E}_{\{u, \tilde{s}, \tilde{r}\}}\left[(\mathbf{u}_{m} \cdot \tilde{\mathbf{r}})(\mathbf{u}_{m'} \cdot \tilde{\mathbf{r}})(\mathbf{u}_{r} \cdot \tilde{\mathbf{r}})(\mathbf{u}_{r'} \cdot \tilde{\mathbf{r}})\right]
\end{aligned}
\end{equation}
\begin{equation}
\begin{aligned}
&=\sum_{r=1}^{q-2}\sum_{r'=1}^{q-2}\sum_{m=1}^{q-2}\sum_{m'=1}^{q-2}\delta_{m, m'}\delta_{r, r'}\mathbb{E}_{\{u, \tilde{s}, \tilde{r}\}}\left[(\mathbf{u}_{m} \cdot \tilde{\mathbf{r}})(\mathbf{u}_{m'} \cdot \tilde{\mathbf{r}})(\mathbf{u}_{r} \cdot \tilde{\mathbf{r}})(\mathbf{u}_{r'} \cdot \tilde{\mathbf{r}})\right]\\
&=\sum_{r=1}^{q-2}\sum_{m=1}^{q-2}\mathbb{E}_{\{u, \tilde{s}, \tilde{r}\}}\left[(\mathbf{u}_{m} \cdot \tilde{\mathbf{r}})(\mathbf{u}_{m} \cdot \tilde{\mathbf{r}})(\mathbf{u}_{r} \cdot \tilde{\mathbf{r}})(\mathbf{u}_{r} \cdot \tilde{\mathbf{r}})\right]\\
&=\sum_{r=1}^{q-2}\sum_{m=1}^{q-2}\sum_{i=1}^{n}\sum_{j=1}^{n}\sum_{k=1}^{n}\sum_{z=1}^{n}\mathbb{E}_{u}\left[u_{m, i}u_{m, j}u_{r, k}u_{r, z}\right]\mathbb{E}_{\tilde{r}}\left[\tilde{r}_i\tilde{r}_j\tilde{r}_k\tilde{r}_z\right]\\
&=\sum_{r=1}^{q-2}\sum_{m=1}^{q-2}\sum_{i=1}^{n}\sum_{j=1}^{n}\sum_{k=1}^{n}\sum_{z=1}^{n}\mathbb{E}_{u}\left[u_{m, i}u_{m, j}u_{r, k}u_{r, z}\right]\left(\delta_{i, j}\delta_{k, z}+\delta_{i, k}\delta_{z, j}+\delta_{i, z}\delta_{k, j}\right)\\
&=\sum_{r=1}^{q-2}\sum_{m=1}^{q-2}\Big(\sum_{i=1}^{n}\sum_{k=1}^{n}\mathbb{E}_{u}\left[u_{m, i}u_{m, i}u_{r, k}u_{r, k}\right]+\sum_{i=1}^{n}\sum_{j=1}^{n}\mathbb{E}_{u}\left[u_{m, i}u_{m, j}u_{r, i}u_{r, j}\right]\\
&+\sum_{i=1}^{n}\sum_{j=1}^{n}\mathbb{E}_{u}\left[u_{m, i}u_{m, j}u_{r, j}u_{r, i}\right]\Big)=\sum_{r=1}^{q-2}\sum_{m=1}^{q-2}\Big(\sum_{i=1}^{n}\sum_{j=1}^{n}\mathbb{E}_{u}\left[u_{m, i}u_{m, i}u_{r, j}u_{r, j}\right]\\
&+\sum_{i=1}^{n}\sum_{j=1}^{n}\mathbb{E}_{u}\left[u_{m, i}u_{m, j}u_{r, i}u_{r, j}\right]+\sum_{i=1}^{n}\sum_{j=1}^{n}\mathbb{E}_{u}\left[u_{m, i}u_{m, j}u_{r, j}u_{r, i}\right]\Big)\\
&=\sum_{r=1}^{q-2}\sum_{m=1}^{q-2}\Big(\mathbb{E}_{u}\left[(\mathbf{u}_{m} \cdot \mathbf{u}_{m})(\mathbf{u}_{r} \cdot \mathbf{u}_{r})\right]+2\mathbb{E}_{u}\left[(\mathbf{u}_{m} \cdot \mathbf{u}_{r})(\mathbf{u}_{m} \cdot \mathbf{u}_{r})\right]\Big)\\
&=\sum_{r=1}^{q-2}\sum_{m=1}^{q-2}\Big(\left(n^2+2n\delta_{m, r}\right)+2\left((n^2+n)\delta_{m, r}+n\right)\Big)\\
&=(q-2)^2 n^2+2(q-2)n+2(q-2)n(n+1)+2(q-2)^2n=q(q-2)n(n+2).
\end{aligned}
\end{equation}
\begin{equation}
\begin{aligned}
&\mathbb{E}_{\{u, \tilde{s}, \tilde{r}\}}\left[AB\right]=\mathbb{E}_{\{u, \tilde{s}, \tilde{r}\}}\left[\left(\sum_{l=1}^{q-1}\mathbf{w}_l \cdot \mathbf{w}_l\right)\left(\sum_{l'=1}^{q-1}(\mathbf{w}_{l'} \cdot \tilde{\mathbf{r}})^2\right)\right]\\
&=\sum_{l=1}^{q-1}\sum_{l'=1}^{q-1}\sum_{r=1}^{q-2}\sum_{r'=1}^{q-2}\sum_{m=1}^{q-2}\sum_{m'=1}^{q-2}\alpha_{m, l}\alpha_{m', l}\alpha_{r, l'}\alpha_{r', l'}\mathbb{E}_{\{u, \tilde{s}, \tilde{r}\}}\left[(\mathbf{u}_{m} \cdot \mathbf{u}_{m'})(\mathbf{u}_{r} \cdot \tilde{\mathbf{r}})(\mathbf{u}_{r'} \cdot \tilde{\mathbf{r}})\right]\\
&=\sum_{r=1}^{q-2}\sum_{m=1}^{q-2}\mathbb{E}_{\{u, \tilde{s}, \tilde{r}\}}\left[(\mathbf{u}_{m} \cdot \mathbf{u}_{m})(\mathbf{u}_{r} \cdot \tilde{\mathbf{r}})^2\right]=\sum_{r=1}^{q-2}\sum_{m=1}^{q-2}\mathbb{E}_{u}\left[(\mathbf{u}_{m} \cdot \mathbf{u}_{m})\mathbb{E}_{\tilde{r}}\left[(\mathbf{u}_{r} \cdot \tilde{\mathbf{r}})^2\right]\right]\\
&=\sum_{r=1}^{q-2}\sum_{m=1}^{q-2}\mathbb{E}_{u}\left[(\mathbf{u}_{m} \cdot \mathbf{u}_{m})\sum_{i=1}^{n}\sum_{j=1}^{n}u_{r,i}u_{r,j}\mathbb{E}_{\tilde{r}}\left[\tilde{r}_i \tilde{r}_j\right]\right]=\sum_{r=1}^{q-2}\sum_{m=1}^{q-2}\mathbb{E}_{u}\left[(\mathbf{u}_{m} \cdot \mathbf{u}_{m})\sum_{i=1}^{n}\sum_{j=1}^{n}u_{r,i}u_{r,j}\delta_{i, j}\right]\\
&=\sum_{r=1}^{q-2}\sum_{m=1}^{q-2}\mathbb{E}_{u}\left[(\mathbf{u}_{m} \cdot \mathbf{u}_{m})\sum_{i=1}^{n}u_{r,i}^2\right]=\sum_{r=1}^{q-2}\sum_{m=1}^{q-2}\mathbb{E}_{u}\left[(\mathbf{u}_{m} \cdot \mathbf{u}_{m})(\mathbf{u}_{r} \cdot \mathbf{u}_{r})\right]=n^2 (q-2)^2+2n(q-2).
\end{aligned}
\end{equation}
\begin{equation}
\begin{aligned}
\mathbb{E}_{\{u, \tilde{s}, \tilde{r}\}}\left[aaA\right]=\mathbb{E}_{\{u, \tilde{s}, \tilde{r}\}}\left[\left(\tilde{\mathbf{r}} \cdot \tilde{\mathbf{r}}\right)\left(\sum_{l=1}^{q-1}\mathbf{w}_l \cdot \mathbf{w}_l\right)\right]=\mathbb{E}_{ \tilde{r}}\left[\tilde{\mathbf{r}} \cdot \tilde{\mathbf{r}}\right]\mathbb{E}_{u}\left[\sum_{l=1}^{q-1}\mathbf{w}_l \cdot \mathbf{w}_l\right]=n^2 (q-2).
\end{aligned}
\end{equation}
\begin{equation}
\begin{aligned}
\mathbb{E}_{\{u, \tilde{s}, \tilde{r}\}}\left[bbA\right]=\mathbb{E}_{\{u, \tilde{s}, \tilde{r}\}}\left[\left(\tilde{\mathbf{s}} \cdot \tilde{\mathbf{s}}\right)\left(\sum_{l=1}^{q-1}\mathbf{w}_l \cdot \mathbf{w}_l\right)\right]=\mathbb{E}_{ \tilde{s}}\left[\tilde{\mathbf{s}} \cdot \tilde{\mathbf{s}}\right]\mathbb{E}_{u}\left[\sum_{l=1}^{q-1}\mathbf{w}_l \cdot \mathbf{w}_l\right]=n^2 (q-2).
\end{aligned}
\end{equation}
\small
\begin{equation}
\begin{aligned}
&\mathbb{E}_{\{u, \tilde{s}, \tilde{r}\}}\left[aaB\right]=\mathbb{E}_{\{u, \tilde{s}, \tilde{r}\}}\left[\left(\tilde{\mathbf{r}} \cdot \tilde{\mathbf{r}}\right)\left(\sum_{l=1}^{q-1}(\mathbf{w}_l \cdot \tilde{\mathbf{r}})^2\right)\right]=\sum_{l=1}^{q-1}\sum_{m=1}^{q-2}\sum_{m'=1}^{q-2}\alpha_{m, l} \alpha_{m', l}\mathbb{E}_{\{u, \tilde{s}, \tilde{r}\}}\left[\left(\tilde{\mathbf{r}} \cdot \tilde{\mathbf{r}}\right)\left(\mathbf{u}_{m} \cdot \tilde{\mathbf{r}}\right)\left(\mathbf{u}_{m'} \cdot \tilde{\mathbf{r}}\right)\right]\\
&=\sum_{m=1}^{q-2}\sum_{m'=1}^{q-2}\delta_{m, m'}\mathbb{E}_{\{u, \tilde{s}, \tilde{r}\}}\left[\left(\tilde{\mathbf{r}} \cdot \tilde{\mathbf{r}}\right)\left(\mathbf{u}_{m} \cdot \tilde{\mathbf{r}}\right)\left(\mathbf{u}_{m'} \cdot \tilde{\mathbf{r}}\right)\right]=\sum_{m=1}^{q-2}\mathbb{E}_{\{u, \tilde{s}, \tilde{r}\}}\left[\left(\tilde{\mathbf{r}} \cdot \tilde{\mathbf{r}}\right)\left(\mathbf{u}_{m} \cdot \tilde{\mathbf{r}}\right)\left(\mathbf{u}_{m} \cdot \tilde{\mathbf{r}}\right)\right]\\
&=\sum_{m=1}^{q-2}\sum_{i=1}^{n}\sum_{j=1}^{n}\sum_{k=1}^{n}\mathbb{E}_{\{u, \tilde{s}, \tilde{r}\}}\left[u_{m, i}u_{m, j}\tilde{r}^2_k\tilde{r}_{i}\tilde{r}_{j}\right]=\sum_{m=1}^{q-2}\sum_{i=1}^{n}\sum_{j=1}^{n}\sum_{k=1}^{n}\mathbb{E}_{u}\left[u_{m, i}u_{m, j}\mathbb{E}_{\tilde{r}}\left[\tilde{r}^2_k\tilde{r}_{i}\tilde{r}_{j}\right]\right]\\
&=\sum_{m=1}^{q-2}\sum_{i=1}^{n}\sum_{j=1}^{n}\sum_{k=1}^{n}\mathbb{E}_{u}\left[u_{m, i}u_{m, j}(\delta_{i, j}+2 \delta_{i, k}\delta_{j, k})\right]=\sum_{m=1}^{q-2}\sum_{i=1}^{n}\sum_{k=1}^{n}\mathbb{E}_{u}\left[u^2_{m, i}\right]+2\sum_{m=1}^{q-2}\sum_{k=1}^{n}\mathbb{E}_{u}\left[u^2_{m, k}\right]\\
&=n^2(q-2)+2n(q-2)=(q-2)n(n+2).
\end{aligned}
\end{equation}
\normalsize
\small
\begin{equation}
\begin{aligned}
&\mathbb{E}_{\{u, \tilde{s}, \tilde{r}\}}\left[bbB\right]=\mathbb{E}_{\{u, \tilde{s}, \tilde{r}\}}\left[\left(\tilde{\mathbf{s}} \cdot \tilde{\mathbf{s}}\right)\left(\sum_{l=1}^{q-1}(\mathbf{w}_l \cdot \tilde{\mathbf{r}})^2\right)\right]=\sum_{l=1}^{q-1}\sum_{m=1}^{q-2}\sum_{m'=1}^{q-2}\alpha_{m, l} \alpha_{m', l}\mathbb{E}_{\{u, \tilde{s}, \tilde{r}\}}\left[\left(\tilde{\mathbf{s}} \cdot \tilde{\mathbf{s}}\right)\left(\mathbf{u}_{m} \cdot \tilde{\mathbf{r}}\right)\left(\mathbf{u}_{m'} \cdot \tilde{\mathbf{r}}\right)\right]\\
&=\sum_{m=1}^{q-2}\sum_{m'=1}^{q-2}\delta_{m, m'}\mathbb{E}_{\{u, \tilde{s}, \tilde{r}\}}\left[\left(\tilde{\mathbf{s}} \cdot \tilde{\mathbf{s}}\right)\left(\mathbf{u}_{m} \cdot \tilde{\mathbf{r}}\right)\left(\mathbf{u}_{m'} \cdot \tilde{\mathbf{r}}\right)\right]=\sum_{m=1}^{q-2}\mathbb{E}_{\tilde{s}}\left[\tilde{\mathbf{s}} \cdot \tilde{\mathbf{s}}\right]\mathbb{E}_{u, \tilde{r}}\left[\left(\mathbf{u}_{m} \cdot \tilde{\mathbf{r}}\right)\left(\mathbf{u}_{m} \cdot \tilde{\mathbf{r}}\right)\right]\\
&=n\sum_{m=1}^{q-2}\sum_{i=1}^n\sum_{j=1}^n\mathbb{E}_{u, \tilde{r}}\left[u_{m, i}u_{m, j}\tilde{r}_i \tilde{r}_j\right]=n\sum_{m=1}^{q-2}\sum_{i=1}^n\sum_{j=1}^n\mathbb{E}_{u}\left[u_{m, i}u_{m, j}\right]\mathbb{E}_{\tilde{r}}\left[\tilde{r}_i \tilde{r}_j\right]\\
&=n\sum_{m=1}^{q-2}\sum_{i=1}^n\sum_{j=1}^n\mathbb{E}_{u}\left[u_{m, i}u_{m, j}\right]\delta_{i, j}=n\sum_{m=1}^{q-2}\sum_{i=1}^n\mathbb{E}_{u}\left[u^2_{m, i}\right]=(q-2)n^2.
\end{aligned}
\end{equation}
\normalsize
\begin{equation}
\begin{aligned}
\mathbb{E}_{\{u, \tilde{s}, \tilde{r}\}}\left[ab^2A\right]=\mathbb{E}_{\{u, \tilde{s}, \tilde{r}\}}\left[\left(\tilde{\mathbf{r}}\cdot \tilde{\mathbf{s}}\right)^2\left(\sum_{l=1}^{q-1}\mathbf{w}_l \cdot \mathbf{w}_l\right)\right]=\mathbb{E}_{\{\tilde{s}, \tilde{r}\}}\left[\left(\tilde{\mathbf{r}}\cdot \tilde{\mathbf{s}}\right)^2\right]\mathbb{E}_{u}\left[\sum_{l=1}^{q-1}\mathbf{w}_l \cdot \mathbf{w}_l\right]=(q-2)n^2.
\end{aligned}
\end{equation}
\begin{equation}
\begin{aligned}
&\mathbb{E}_{\{u, \tilde{s}, \tilde{r}\}}\left[ab^2B\right]=\mathbb{E}_{\{u, \tilde{s}, \tilde{r}\}}\left[\left(\tilde{\mathbf{r}}\cdot \tilde{\mathbf{s}}\right)^2\left(\sum_{l=1}^{q-1}(\mathbf{w}_l \cdot \tilde{\mathbf{r}})^2\right)\right]\\
&=\sum_{l=1}^{q-1}\sum_{m=1}^{q-2}\sum_{m'=1}^{q-2}\sum_{i=1}^{n}\sum_{j=1}^{n}\sum_{k=1}^{n}\sum_{z=1}^{n}\alpha_{m, l}\alpha_{m', l}\mathbb{E}_{\{u, \tilde{s}, \tilde{r}\}}\left[\tilde{r}_i\tilde{s}_i\tilde{r}_j\tilde{s}_j \tilde{r}_k\tilde{r}_zu_{m, k}u_{m',z}\right]\\
&=\sum_{m=1}^{q-2}\sum_{i=1}^{n}\sum_{j=1}^{n}\sum_{k=1}^{n}\sum_{z=1}^{n}\mathbb{E}_{\{u, \tilde{s}, \tilde{r}\}}\left[\tilde{r}_i\tilde{s}_i\tilde{r}_j\tilde{s}_j \tilde{r}_k\tilde{r}_zu_{m, k}u_{m,z}\right]\\
&=\sum_{m=1}^{q-2}\sum_{i=1}^{n}\sum_{j=1}^{n}\sum_{k=1}^{n}\sum_{z=1}^{n}\mathbb{E}_{ \tilde{r}}\left[\tilde{r}_i\tilde{r}_j \tilde{r}_k\tilde{r}_z\right]\mathbb{E}_{\tilde{s}}\left[\tilde{s}_i \tilde{s}_j\right]\mathbb{E}_{u}\left[u_{m, k}u_{m,z}\right]\\
&=\sum_{m=1}^{q-2}\sum_{i=1}^{n}\sum_{j=1}^{n}\sum_{k=1}^{n}\sum_{z=1}^{n}\mathbb{E}_{ \tilde{r}}\left[\tilde{r}_i\tilde{r}_j \tilde{r}_k\tilde{r}_z\right]\delta_{i, j}\mathbb{E}_{u}\left[u_{m, k}u_{m,z}\right]\\
&=\sum_{m=1}^{q-2}\sum_{i=1}^{n}\sum_{k=1}^{n}\sum_{z=1}^{n}\mathbb{E}_{ \tilde{r}}\left[\tilde{r}^2_i \tilde{r}_k\tilde{r}_z\right]\mathbb{E}_{u}\left[u_{m, k}u_{m,z}\right]=\sum_{m=1}^{q-2}\sum_{i=1}^{n}\sum_{k=1}^{n}\sum_{z=1}^{n}(\delta_{k, z}+2\delta_{i, k}\delta_{i, z})\mathbb{E}_{u}\left[u_{m, k}u_{m,z}\right]\\
&=\sum_{m=1}^{q-2}\sum_{i=1}^{n}\sum_{k=1}^{n}\mathbb{E}_{u}\left[u^2_{m, k}\right]+2\sum_{m=1}^{q-2}\sum_{i=1}^{n}\mathbb{E}_{u}\left[u^2_{m, i}\right]=(q-2)n(n+2).
\end{aligned}
\end{equation}
\begin{equation}
\begin{aligned}
&\mathbb{E}_{\{u, \tilde{s}, \tilde{r}\}}\left[F\right]=\mathbb{E}_{u}\left[\sum_{l=1}^{q-1}(\mathbf{w}_l \cdot \mathbf{w}_l)^2\right]=\sum_{l=1}^{q-1}\sum_{m=1}^{q-2}\sum_{m'=1}^{q-2}\sum_{r=1}^{q-2}\sum_{r'=1}^{q-2}\alpha_{m, l}\alpha_{m', l}\alpha_{r, l}\alpha_{r', l}\mathbb{E}_{u}\left[(\mathbf{u}_m \cdot \mathbf{u}_{m'})(\mathbf{u}_r \cdot \mathbf{u}_{r'})\right]\\
&=\sum_{l=1}^{q-1}\sum_{m=1}^{q-2}\sum_{m'=1}^{q-2}\sum_{r=1}^{q-2}\sum_{r'=1}^{q-2}\alpha_{m, l}\alpha_{m', l}\alpha_{r, l}\alpha_{r', l}\left(n^2 \delta_{m, m'}\delta_{r, r'}+n \delta_{m, r}\delta_{m', r'}+n \delta_{m, r'}\delta_{m', r}\right)\\
&=(n^2+2n)\sum_{l=1}^{q-1}\sum_{m=1}^{q-2}\sum_{r=1}^{q-2}\alpha^2_{m, l}\alpha^2_{r, l}.
\end{aligned}
\end{equation}
\begin{equation}
\begin{aligned}
&\mathbb{E}_{\{u, \tilde{s}, \tilde{r}\}}\left[H\right]=\mathbb{E}_{\{u, \tilde{r}\}}\left[\sum_{l=1}^{q-1}(\mathbf{w}_l \cdot \mathbf{w}_l)(\mathbf{w}_l \cdot \tilde{\mathbf{r}})^2\right]\\
&=\sum_{l=1}^{q-1}\sum_{m=1}^{q-2}\sum_{m'=1}^{q-2}\sum_{r=1}^{q-2}\sum_{r'=1}^{q-2}\alpha_{m, l}\alpha_{m', l}\alpha_{r, l}\alpha_{r', l}\mathbb{E}_{\{u, \tilde{r}\}}\left[(\mathbf{u}_m \cdot \mathbf{u}_{m'})(\mathbf{u}_r \cdot \tilde{\mathbf{r}})(\mathbf{u}_{r'} \cdot \tilde{\mathbf{r}})\right]\\
&=\sum_{l=1}^{q-1}\sum_{m=1}^{q-2}\sum_{m'=1}^{q-2}\sum_{r=1}^{q-2}\sum_{r'=1}^{q-2}\sum_{i=1}^{n}\sum_{j=1}^{n}\sum_{k=1}^{n}\alpha_{m, l}\alpha_{m', l}\alpha_{r, l}\alpha_{r', l}\mathbb{E}_{\{u, \tilde{r}\}}\left[u_{m, i}u_{m', i}u_{r, j}u_{r', k}\tilde{r}_{j}\tilde{r}_{k}\right]\\
&=\sum_{l=1}^{q-1}\sum_{m=1}^{q-2}\sum_{m'=1}^{q-2}\sum_{r=1}^{q-2}\sum_{r'=1}^{q-2}\sum_{i=1}^{n}\sum_{j=1}^{n}\sum_{k=1}^{n}\alpha_{m, l}\alpha_{m', l}\alpha_{r, l}\alpha_{r', l}\mathbb{E}_{u}\left[u_{m, i}u_{m', i}u_{r, j}u_{r', k}\right]\mathbb{E}_{\tilde{r}}\left[\tilde{r}_{j}\tilde{r}_{k}\right]\\
&=\sum_{l=1}^{q-1}\sum_{m=1}^{q-2}\sum_{m'=1}^{q-2}\sum_{r=1}^{q-2}\sum_{r'=1}^{q-2}\sum_{i=1}^{n}\sum_{j=1}^{n}\sum_{k=1}^{n}\alpha_{m, l}\alpha_{m', l}\alpha_{r, l}\alpha_{r', l}\mathbb{E}_{u}\left[u_{m, i}u_{m', i}u_{r, j}u_{r', k}\right]\delta_{j, k}\\
&=\sum_{l=1}^{q-1}\sum_{m=1}^{q-2}\sum_{m'=1}^{q-2}\sum_{r=1}^{q-2}\sum_{r'=1}^{q-2}\sum_{i=1}^{n}\sum_{j=1}^{n}\alpha_{m, l}\alpha_{m', l}\alpha_{r, l}\alpha_{r', l}\mathbb{E}_{u}\left[u_{m, i}u_{m', i}u_{r, j}u_{r', j}\right]\\
&=\sum_{l=1}^{q-1}\sum_{m=1}^{q-2}\sum_{m'=1}^{q-2}\sum_{r=1}^{q-2}\sum_{r'=1}^{q-2}\alpha_{m, l}\alpha_{m', l}\alpha_{r, l}\alpha_{r', l}\mathbb{E}_{u}\left[(\mathbf{u}_m \cdot \mathbf{u}_{m'})(\mathbf{u}_r \cdot \mathbf{u}_{r'})\right]\\
&=\sum_{l=1}^{q-1}\sum_{m=1}^{q-2}\sum_{m'=1}^{q-2}\sum_{r=1}^{q-2}\sum_{r'=1}^{q-2}\alpha_{m, l}\alpha_{m', l}\alpha_{r, l}\alpha_{r', l}\left(n^2 \delta_{m, m'}\delta_{r, r'}+n \delta_{m, r}\delta_{m', r'}+n \delta_{m, r'}\delta_{m', r}\right)\\
&=n(n+2)\sum_{l=1}^{q-1}\sum_{m=1}^{q-2}\sum_{r=1}^{q-2}\alpha^2_{m, l}\alpha^2_{r, l}.
\end{aligned}
\end{equation}
\small
\begin{equation}
\nonumber
\begin{aligned}
&\mathbb{E}_{\{u, \tilde{s}, \tilde{r}\}}\left[J\right]=\mathbb{E}_{\{u, \tilde{r}\}}\left[\sum_{l=1}^{q-1}(\mathbf{w}_l \cdot \tilde{\mathbf{r}})^4\right]\\
&=\sum_{l=1}^{q-1}\sum_{m=1}^{q-2}\sum_{m'=1}^{q-2}\sum_{r=1}^{q-2}\sum_{r'=1}^{q-2}\alpha_{m, l}\alpha_{m', l}\alpha_{r, l}\alpha_{r', l}\mathbb{E}_{\{u, \tilde{r}\}}\left[(\mathbf{u}_m \cdot \tilde{\mathbf{r}})(\mathbf{u}_{m'} \cdot \tilde{\mathbf{r}})(\mathbf{u}_r \cdot \tilde{\mathbf{r}})(\mathbf{u}_{r'} \cdot \tilde{\mathbf{r}})\right]\\
&=\sum_{l=1}^{q-1}\sum_{m=1}^{q-2}\sum_{m'=1}^{q-2}\sum_{r=1}^{q-2}\sum_{r'=1}^{q-2}\sum_{i=1}^{n}\sum_{j=1}^{n}\sum_{k=1}^{n}\sum_{z=1}^{n}\alpha_{m, l}\alpha_{m', l}\alpha_{r, l}\alpha_{r', l}\mathbb{E}_{\{u, \tilde{r}\}}\left[u_{m, i}u_{m', j}u_{r, k}u_{r', z}\tilde{r}_{i}\tilde{r}_{j}\tilde{r}_{k}\tilde{r}_{z}\right]\\
&=\sum_{l=1}^{q-1}\sum_{m=1}^{q-2}\sum_{m'=1}^{q-2}\sum_{r=1}^{q-2}\sum_{r'=1}^{q-2}\sum_{i=1}^{n}\sum_{j=1}^{n}\sum_{k=1}^{n}\sum_{z=1}^{n}\alpha_{m, l}\alpha_{m', l}\alpha_{r, l}\alpha_{r', l}\mathbb{E}_{u}\left[u_{m, i}u_{m', j}u_{r, k}u_{r', z}\right]\mathbb{E}_{\tilde{r}}\left[\tilde{r}_{i}\tilde{r}_{j}\tilde{r}_{k}\tilde{r}_{z}\right]\\
&=\sum_{l=1}^{q-1}\sum_{m=1}^{q-2}\sum_{m'=1}^{q-2}\sum_{r=1}^{q-2}\sum_{r'=1}^{q-2}\sum_{i=1}^{n}\sum_{j=1}^{n}\sum_{k=1}^{n}\sum_{z=1}^{n}\alpha_{m, l}\alpha_{m', l}\alpha_{r, l}\alpha_{r', l}\mathbb{E}_{u}\left[u_{m, i}u_{m', j}u_{r, k}u_{r', z}\right]\Big(\delta_{i, j}\delta_{k, z}+\delta_{i, k}\delta_{j, z}+\delta_{i, z}\delta_{j, k}\Big)\\
&=\sum_{l=1}^{q-1}\sum_{m=1}^{q-2}\sum_{m'=1}^{q-2}\sum_{r=1}^{q-2}\sum_{r'=1}^{q-2}\alpha_{m, l}\alpha_{m', l}\alpha_{r, l}\alpha_{r', l}\Big(\sum_{i=1}^{n}\sum_{k=1}^{n}\mathbb{E}_{u}\left[u_{m, i}u_{m', i}u_{r, k}u_{r', k}\right]+\sum_{i=1}^{n}\sum_{j=1}^{n}\mathbb{E}_{u}\left[u_{m, i}u_{m', j}u_{r, i}u_{r', j}\right]
\end{aligned}
\end{equation}
\normalsize
\begin{equation}
\begin{aligned}
&+\sum_{i=1}^{n}\sum_{j=1}^{n}\mathbb{E}_{u}\left[u_{m, i}u_{m', j}u_{r, j}u_{r', i}\right]\Big)=\sum_{l=1}^{q-1}\sum_{m=1}^{q-2}\sum_{m'=1}^{q-2}\sum_{r=1}^{q-2}\sum_{r'=1}^{q-2}\alpha_{m, l}\alpha_{m', l}\alpha_{r, l}\alpha_{r', l}\Big(
\mathbb{E}_{u}\left[(\mathbf{u}_{m} \cdot \mathbf{u}_{m'})(\mathbf{u}_{r} \cdot \mathbf{u}_{r'})\right]\\
&+\mathbb{E}_{u}\left[(\mathbf{u}_{m} \cdot \mathbf{u}_{r})(\mathbf{u}_{m'} \cdot \mathbf{u}_{r'})\right]+\mathbb{E}_{u}\left[(\mathbf{u}_{m} \cdot \mathbf{u}_{r'})(\mathbf{u}_{m'} \cdot \mathbf{u}_{r})\right]\Big)\\
&=3\sum_{l=1}^{q-1}\sum_{m=1}^{q-2}\sum_{m'=1}^{q-2}\sum_{r=1}^{q-2}\sum_{r'=1}^{q-2}\alpha_{m, l}\alpha_{m', l}\alpha_{r, l}\alpha_{r', l}
\mathbb{E}_{u}\left[(\mathbf{u}_{m} \cdot \mathbf{u}_{m'})(\mathbf{u}_{r} \cdot \mathbf{u}_{r'})\right]\\
&=3\sum_{l=1}^{q-1}\sum_{m=1}^{q-2}\sum_{m'=1}^{q-2}\sum_{r=1}^{q-2}\sum_{r'=1}^{q-2}\alpha_{m, l}\alpha_{m', l}\alpha_{r, l}\alpha_{r', l}
\left(n^2\delta_{m,m'}\delta_{r,r'}+n\delta_{m,r}\delta_{m',r'}+n\delta_{m,r'}\delta_{m',r}\right)\\
&=3n(n+2)\sum_{l=1}^{q-1}\sum_{m=1}^{q-2}\sum_{r=1}^{q-2}\alpha^2_{m, l}\alpha^2_{r, l}.
\end{aligned}
\end{equation}
\normalsize
Direct substitution yields
\begin{align}
\mathbb{E}_{u,\tilde{s},\tilde{r}}[c_1]=\dfrac{n(2q-1)}{2q(q-1)},\quad \mathbb{E}_{u,\tilde{s},\tilde{r}}[c_2]=\dfrac{n(n+q)(q-1)}{4q^2},
\end{align}
and so we have
\begin{equation}
\label{sseries}
S=-\dfrac{n(2q-1)}{2q(q-1)}\mu-\dfrac{n(n+q)(q-1)}{4q^2}\mu^2+\mathcal{O}\left(\mu^{3}\right).
\end{equation}
From Equation \eqref{sseries} and Theorem \ref{entrocaltheo}, we have
\begin{equation}
\begin{aligned}
&h(\vecX|\hat\vecW)=n h_{\sigma}+\dfrac{n}{2}\left(1-\dfrac{1}{q}\right)\mu-\dfrac{n(n+q)(q-1)}{4q^2}\mu^2+\mathcal{O}\left(\mu^{3}\right),\\
&=n h_{\sigma}+\dfrac{n}{2}\left(1-\dfrac{1}{q}\right)\mu-\dfrac{n}{2}\left(1-\dfrac{1}{q}\right)\dfrac{nq^{-1}+1}{2q}\mu^2+\mathcal{O}\left(\mu^{3}\right).
\end{aligned}
\end{equation}
%%%%%%%%%%%%%%%%%%%%%%%%%%%%%%%%%%%%%%%%%%
\end{document}